\definecolor{darkred}{rgb}{0.8,0,0}
\makeatletter \@addtoreset{equation}{section}
\def \benumlab#1{\begin{enumerate}[label={\rm (#1{\arabic{*}})}, ref={\rm #1{\arabic{*}}}]}
\def \enumlab{\end{enumerate}}
\def \benumlabi#1{\begin{enumerate}[label={\rm {(#1\roman{*})}}, ref={\rm{#1\roman{*}}}]}
\def \enumlabi{\end{enumerate}}
\def\proofof#1{\noindent\emph{Proof of~#1.}}
\def\strict{(\mbox{strict})}
\def\on{\quad\mbox{on}\quad}
\def\et{\quad\mbox{and}\quad}
  \def\xib{\xi\hspace{-1.7mm}\xi}
  \def\xibu{\xi\hspace{-1.4mm}\xi}
  \def\abs#1{\left|#1\right|}
\def \Liminf{\displaystyle\liminf}
\def \Limsup{\displaystyle\limsup}
\def \E{\mathbb{E}}
\def \F{\mathbb{F}}
\def \M{\mathbb{M}}
\def \N{\mathbb{N}}
\def \R{\mathbb{R}}
\def\P{\mathbb{P}}
\def\Q{\mathbb{Q}}
\def\S{\mathbb{S}}
\def\Cc{{\cal C}}
\def\Fc{{\cal F}}
\def\Hc{{\cal H}}
\def\Ic{{\cal I}}
\def\Lc{{\cal L}}
\def\Oc{{\cal O}}
\def\Qc{{\cal Q}}
\def\Rc{{\cal R}}
\def\Yc{{\cal Y}}
\def\ep{\hbox{ }\hfill$\Box$}
\def\reff#1{{\rm(\ref{#1})}}
\def\be{\begin{eqnarray}}
\def\ee{\end{eqnarray}}
\def\bal{\begin{aligned}}
\def\eal{\end{aligned}}
\def\beq{\begin{equation}}
\def\eeq{\end{equation}}
\def\beqq{\begin{equation*}}
\def\eeqq{\end{equation*}}
\def\b*{\begin{eqnarray*}}
\def\e*{\end{eqnarray*}}
\def\x{\times}
\def\={\;=\;}
\def\pourtout{\mbox{ for all } }
\def\.{\;.}
\def\eps{\varepsilon}
\def\veps{\varepsilon}
\def\vp{\varphi}
\def\1{\mathbf{1}}
\def\Esp#1{\mathbb{E}\left[#1\right]}
\def\Tr#1{\mbox{\rm Tr}\left[#1\right]}
\def\uf{\mathfrak{u}}
\def\nf{\mathfrak{n}}
\def\vf{\mathfrak{v}}
\def\a{a\hspace{-0.65mm}\iota}
\def\lpt#1{\\[#1pt]}
\def\Pf{\mathfrak P}
\def\Df{\mathfrak D}
\def\Dfi{\mathfrak \Df_{<}}
\def\Dfb{\mathfrak \partial_T\Df}
\def\vc{\vartheta}
\theoremstyle{plain}
\newtheorem{theorem}{Theorem}[section]
\newtheorem{proposition}[theorem]{Proposition}
\newtheorem{corollary}[theorem]{Corollary}
\newtheorem{assumption}[theorem]{Assumption}
\newtheorem{definition}[theorem]{Definition}
\newtheorem{lemma}[theorem]{Lemma}
\theoremstyle{definition}
\newtheorem{remark}[theorem]{Remark}
\newtheorem*{assumptionA}{{\textbf Assumption A}}
\newtheorem*{assumptionB}{{\textbf Assumption B}}
\begin{document}

\title{Trading with Small Price Impact\footnote{We thank Peter Bank, Alex M.\ G.\ Cox, Paolo Guasoni, Jan Kallsen, Shen Li, Ren Liu, Dylan Possamai, Vilmos Prokaj, Mathieu Rosenbaum, Peter Tankov, and Marko Weber for fruitful discussions. We are also very grateful to two anonymous reviewers for their extremely careful reading and numerous very constructive remarks.}}
\author{
     Ludovic Moreau
     \thanks{ETH Z\"urich, Departement f\"ur Mathematik, R\"amistrasse 101, CH-8092, Z\"urich, Switzerland, email \texttt{ludovic.moreau@math.ethz.ch}. Partially supported by the Swiss National Science Foundation under the grant SNF $200021\_152555$ and by the ETH Foundation.
     }
     \and Johannes Muhle-Karbe
     \thanks{ETH Z\"urich, Departement f\"ur Mathematik, R\"amistrasse 101, CH-8092, Z\"urich, Switzerland, and Swiss Finance Institute, email \texttt{johannes.muhle-karbe@math.ethz.ch}. Partially supported by the ETH Foundation.
     }
\and H.~Mete Soner
     \thanks{ETH Z\"urich, Departement f\"ur Mathematik, R\"amistrasse 101, CH-8092, Z\"urich, Switzerland, and Swiss Finance Institute, email \texttt{mete.soner@math.ethz.ch}. Partially supported by the Swiss National Science Foundation under the grant SNF $200021\_152555$. 
  }
}
\date{\today}

\maketitle \vspace{-1em}

\begin{abstract}
An investor trades a safe and several risky assets with linear price impact to maximize expected utility from terminal wealth. In the limit for small impact costs, we explicitly determine the optimal policy and welfare, in a general Markovian setting allowing for stochastic market, cost, and preference parameters. These results shed light on the general structure of the problem at hand, and also unveil close connections to optimal execution problems and to other market frictions such as proportional and fixed transaction costs.
\end{abstract}

\bigskip
\noindent\textbf{Mathematics Subject Classification: (2010)} 91G10, 91G80, 35K55,
60H30.

\bigskip
\noindent\textbf{JEL Classification:} G11, C61.

\bigskip
\noindent\textbf{Keywords:} price impact, portfolio choice, asymptotics, homogenization,
viscosity solutions.

\section{Introduction}

Even in the most liquid financial markets, only small quantities can be traded quickly without adversely affecting market prices. For large investors, it is therefore crucial to balance the gains generated by trading against the corresponding price impact costs.

This problem has received a lot of attention in the optimal execution literature, which studies how to efficiently split up a single exogenously given order (cf., e.g., \cite{bertsimas.lo.98,almgren.chriss.01,huberman.stanzl.05,obizhaeva.wang.13} as well as many more recent studies). In contrast, less is known about dynamic portfolio choice with price impact, i.e., the problem of how to endogenously determine the optimal order flow from market dynamics and investors' preferences. Here, previous work has focused on price impact linear in the order size, in concrete models with specific market dynamics and preferences \cite{garleanu.pedersen.13a,garleanu.pedersen.13b,almgren.li.11,dufresne.al.12,guasoni.weber.12,guasoni.weber.13}; see Section~\ref{sec:lit} for a detailed discussion. In the present study, we also focus on linear price impact. However, we allow for arbitrary preferences, as well as for general Markovian dynamics of market prices and impact parameters. Despite this generality, we obtain explicit formulas for the optimal policy and welfare, \emph{asymptotically} for small price impacts.

These results shed new light on the general structure of the problem at hand, and also reveal deep connections to other market frictions. As in previous studies \cite{garleanu.pedersen.13a,garleanu.pedersen.13b, almgren.li.11, guasoni.weber.12,guasoni.weber.13}, it turns out to be optimal to always trade from the current position $\theta^\Lambda_t$ towards the frictionless target $\theta^0_t$ at a finite rate $\dot{\theta}^\Lambda_t$. For a single risky asset,\footnote{The results readily extend to multiple risky assets, cf.\ Theorems~\ref{theo: main result} and \ref{theo: main result 2}. For ease of exposition, we focus on a single risky asset in this introduction.} traded with small linear price impact $\Lambda_t$, this asymptotically optimal trading rate is given explicitly by:
\begin{equation}\label{eq:rate}
\dot{\theta}^\Lambda_t= \sqrt{\frac{(\sigma^S_t)^2}{2 \Lambda_t R_t}}(\theta^0_t-\theta^\Lambda_t).
\end{equation}
Here, $\sigma^S_t$ is the risky asset's volatility and $R_t$ is the frictionless investor's ``indirect risk-tolerance process'', i.e., the risk tolerance of the frictionless value function. Thus, the current position $\theta^\Lambda_t$ is pushed back more aggressively to the frictionless target $\theta^0_t$ if i) the current deviation $\theta^\Lambda_t-\theta_t$ is large, ii) market volatility $\sigma^S_t$ is high, iii) trading costs $\Lambda_t$ are low, or iv) the investor's risk tolerance $R_t$ is low. For constant market, cost, and preference parameters, this reduces to the formulas obtained by \cite{ garleanu.pedersen.13b,almgren.li.11,guasoni.weber.12}. In the general setting considered here, these quantities are updated continuously with the current volatility, price impact, and (indirect) risk tolerance. Hence, the optimal policy is ``myopic'' in the sense that it trades towards the current frictionless maximizer (rather a projected future optimum) with a speed determined by current market and preference parameters.\footnote{Hedging against the future evolution of the frictionless target is studied by Garleanu and Pedersen \cite{garleanu.pedersen.13a,garleanu.pedersen.13b}.}

This observation is in analogy to results for small proportional \cite{martin.12,soner.touzi.13,kallsen.muhlekarbe.13a,kallsen.muhlekarbe.13,kallsen.li.13} and fixed transaction costs \cite{korn.98,altarovici.al.13}, where ``myopic'' policies are also optimal asymptotically. With these frictions, the risky fraction is \emph{always} kept between two trading boundaries around the frictionless target position. In contrast, with price impact, it is no longer optimal to remain uniformly close. Instead, the optimal deviation follows a diffusion process with fluctuations driven by the frictionless optimizer and mean reversion induced by the control \eqref{eq:rate}. Hence, the ``fine'' structure of the optimal policy crucially depends on the specific market friction under consideration. Yet, the ``coarse'' structure is the same in each case, in that the \emph{average} squared deviation from the frictionless target is kept below some threshold, determined by the same inputs.\footnote{This is the (leading-order) stationary variance obtained when considering a small time interval around $t$, and then i) changing time to stretch it to the entire half-line, and ii) normalizing the deviation by the dynamic threshold accordingly. See \cite{kallsen.muhlekarbe.13,kallsen.li.13} for more details.} Indeed, with small linear price impact $\Lambda_t$, this threshold is given by:
$$\sqrt{2}\left(\frac{R_t \Lambda_t}{(\sigma^S_t)^2}\right)^{1/2}\left(\sigma^{\theta^0}_t\right)^2,$$
where $\sigma_t^{\theta^0}=\sqrt{d\langle\theta^0\rangle_t/dt}$ is the volatility of the frictionless target strategy.\footnote{If $\theta^0_t=\Delta(t,S_t)$ is a delta-hedge in a complete Markovian setting then this is the ``Cash-Gamma'', i.e., the second derivative of the option price with respect to the underlying, multiplied by the squared value of the latter.} For small proportional transaction costs $\Lambda_t$, the analogous bound reads as follows:\footnote{This bound is derived by noticing that the deviations from the frictionless target are approximately uniform in this case \cite{janecek.shreve.04,rogers.04,goodman.ostrov.10,kallsen.muhlekarbe.13a,kallsen.muhlekarbe.13,kallsen.li.13}, so that the corresponding average squared deviation equals one third of the halfwidth of the no-trade region determined in \cite{martin.12,soner.touzi.13,kallsen.muhlekarbe.13a,kallsen.muhlekarbe.13,kallsen.li.13}.}
$$\frac{1}{\sqrt[3]{12}}\left(\frac{R_t \Lambda_t}{(\sigma^S_t)^2}\right)^{2/3}\left(\sigma^{\theta^0}_t\right)^{4/3}.$$
Similarly, for small fixed trading costs $\Lambda_t$, the corresponding threshold is given by:\footnote{To see this, note that the approximate probability density of the deviation is a ``hat function'' in this case, so that the corresponding average squared deviation is given by one sixth of the halfwidth of the no-trade region determined by \cite{korn.98,altarovici.al.13}.}
$$\frac{1}{\sqrt{3}}\left(\frac{R_t \Lambda_t}{(\sigma^S_t)^2}\right)^{1/2}\sigma^{\theta^0}_t.$$
Hence, there is a different universal constant in each case, and the powers to which the input parameters are raised also depend on the specific friction at hand. The inputs $R_t$, $\Lambda_t$, $\sigma^S_t$, and $\sigma^{\theta^0}_t$, however, are the same in each model. As a result, the corresponding comparative statics are universal: the frictionless target is tracked tightly, on average, if price risk is high relative to risk tolerance, if trading costs are low, or if the frictionless target strategy is relatively inactive and can therefore be implemented with few adjustments.

The optimal trading rate \eqref{eq:rate} also reveals a close connection to the optimal execution literature. Indeed, for small price impacts, \eqref{eq:rate} locally corresponds to the optimal execution strategy of Almgren and Chriss \cite{almgren.chriss.01} as well as Schied and Sch\"oneborn \cite{schied.schoeneborn.09}, with the order to be executed given by the deviation from the frictionless target.\footnote{This correspondence remains true with several risky assets, where optimal liquidation has been studied by \cite{schied.al.10,schoeneborn.11}.} Hence, dynamic portfolio choice with small price impacts can be interpreted as ``optimally liquidating towards the frictionless target'', where the latter as well as market, impact, and preference parameters all are updated continuously.

The performance of the optimal policy and in turn the welfare loss due to finite market depth can also be quantified. At the leading order, the certainty equivalent loss due to small price impact, i.e., the cash equivalent of trading without frictions, is given by:
\begin{equation}\label{eq:cevloss}
\mathbb{E}_\Q\left[\int_0^T \sqrt{\frac{(\sigma^S_t)^2 \Lambda_t}{2R_t}} \left(\sigma^{\theta^0}_t\right)^2 dt\right].
\end{equation}
As a result, price impact has a substantial welfare effect if i) market risk measured by the volatility $\sigma^S_t$ is high compared to the investor's risk tolerance $R_t$, ii) the trading costs $\Lambda_t$ are large, or iii) the frictionless target strategy is highly active with large volatility $\sigma^{\theta^0}_t$. As all of these quantities generally are time-dependent and random, they have to be averaged suitably, across both time and states. Here, averaging across states is carried out with respect to the frictionless investor's ``marginal pricing measure'' $\Q$,\footnote{That is, the dual martingale measure linked to the primal optimizer by the usual first-order condition. Expectations under this measure correspond to utility indifference prices for infinitesimally small claims \cite{davis.97,karatzas.kou.96,kramkov.sirbu.07}, whence the name ``marginal pricing measure''.} i.e., the effect of the small friction is priced like a marginal path-dependent option.

For frictionless models that can be solved in closed form, Representation~\eqref{eq:cevloss} readily yields explicit formulas. In general, this expression allows to shed further light on the connections between price impact and other market frictions. Indeed, close analogues of Formula \eqref{eq:cevloss} for the certainty equivalent loss due to small price impact remain true for different trading costs. Only the universal constant and the powers of the inputs have to be changed, like for the average squared deviation from the frictionless target. For example, with small proportional transaction costs $\Lambda_t$, the analogue of \eqref{eq:cevloss} reads as follows \cite{soner.touzi.13,kallsen.muhlekarbe.13a,kallsen.muhlekarbe.13}:
$$\mathbb{E}_\Q\left[\int_0^T \sqrt[3]{\frac{9(\sigma^S_t)^2 \Lambda_t}{32R_t}} \left(\sigma^{\theta^0}_t\right)^{4/3} dt\right].$$
Hence, the monotonicity in the model inputs $\sigma^S_t$, $\Lambda_t$, $R_t$, and $\sigma^{\theta^0}$ remains unchanged, and the corresponding comparative statics are the same for each small friction.

For investors with constant absolute risk tolerance, i.e., with exponential utilities, our results readily allow to incorporate random endowments by a change of measure. This in turn allows us to obtain utility-indifference prices and hedging strategies. As volatilities are invariant under equivalent measure changes, it follows that the trading rate \eqref{eq:rate} is truly universal, in that it applies both to optimal investment and to hedging; only the frictionless inputs need to be changed accordingly. Formula \eqref{eq:cevloss} for the corresponding welfare loss in turn leads to utility-based derivative prices in the spirit of Hodges and Neuberger \cite{hodges.neuberger.89} as well as Davis, Panas and Zariphopoulou \cite{davis.al.93}.\footnote{For related asymptotics with small proportional costs, cf.\ \cite{WhWi97,bichuch.13,kallsen.muhlekarbe.13a,bouchard.al.13,PoRo13}.}

We use  dynamic programming and matched asymptotics to prove the results discussed above.
To outline this methodology, let $v^0$ be the frictionless value function of the
initial data $\zeta$.\footnote{As is well known, the frictionless value function depends on
 time $t$, the current values $s$ and $y$ of the risky assets and state
 variables, and the investor's wealth $x$.
 These are collected in $\zeta=(t,s,y,x)$.}
Also let  $v^\lambda$  be its
counterpart  for small linear price impact
 $\Lambda_t=\lambda \Lambda(\cdot)$.\footnote{Here,
 $\lambda \sim 0$ is the small parameter for the asymptotic
 expansion, and $\Lambda(\cdot)$ is a given deterministic function
 of time, the current values of asset prices and state variables,
 and the investor's wealth.}
 Due to the friction, $v^\lambda$
depends not only on $\zeta$  but
also on the number $\vartheta$ of
 shares the investor currently holds.
 Then, the main technical objective is to understand
 the limit behavior of 
$$
\bar u^\lambda(\zeta,\vartheta):=
\frac{v^0(\zeta)-v^\lambda(\zeta,\vartheta)}{\lambda^{1/2}}\ge0, \quad \mbox{as } \lambda \downarrow 0.
$$
The viscosity approach developed
by Evans \cite{Evans92} to problems in homogenization
is suitable for this analysis.
Indeed, it provides a technique to derive  the
equation satisfied by the relaxed semilimits
$\bar u^\ast$ and $\bar u_\ast$ of 
 $\bar u^\lambda$ as $\lambda \downarrow 0$.
Then, by a comparison result, one
concludes that these limits are equal to each other.
In particular, this proves the local uniform
convergence of $\bar u^\lambda$.

In this approach, it is crucial that
the limit functions depend only
on the ``original'' variable~$\zeta$.
However, in our context, the relaxed semilimits
$\bar u^\ast$ and $\bar u_\ast$ depend also on
the  $\vartheta$-variable and we need
to identify this dependence separately.
Indeed, we first show that
$\bar u^\ast$ and $\bar u_\ast$,
are sub- and supersolutions, respectively,
of an Eikonal-type equation as
studied in \cite{kruvzkov.75,ishii.87}:
$$
(D_\vartheta\bar u)^2=\nf,
$$
where $\nf$ is a smooth nonnegative function, quadratic in the $\vartheta$-variable.
In general, there is no comparison principle for the above equation.
However, using a transformation technique,
we prove a comparison result for nonnegative solutions.
This implies the existence of a smooth quadratic function $\varpi$
of the difference between the actual position
$\vartheta$ and the frictionless optimal position $\theta^0(\zeta)$ such that
the there is no$\vartheta$-dependence for the relaxed semilimits of
$$
  \bar u^\lambda(\zeta,\vartheta)-\varpi(\zeta,\vartheta)
$$
We then proceed by analyzing these limits
using the viscosity technique outlined above.

Similar asymptotic results have been recently obtained
for utility maximization with proportional transaction costs in \cite{soner.touzi.13},
for several risky assets in \cite{possamai.al.12}, for random endowments
in \cite{bouchard.al.13}, and for models with fixed transaction costs
in \cite{altarovici.al.13}. In these models, the
semilimits can be shown to be independent of the $\vartheta$-variable
due to the gradient constraint in the dynamic programming equation, because a single trade from the actual position
to the frictionless target is negligible at the leading order.
In contrast, such bulk trades are impossible in our framework as they incur infinite price impact.
This necessitates the novel analysis through the Eikonal equation.

The remainder of this article is organized as follows. The model is set up in Section 2. Afterwards, we state the dynamic programming equations without and with frictions, before turning to the corrector equations governing their asymptotic relationship for small price impacts. For better readability, we first derive the corrector equations heuristically in a simple setting, and then state their general versions. The subsequent Section 4 contains our main results, an asymptotic expansion of the value function for small price impacts and a corresponding almost optimal trading policy. These results, their implications, and connections to the literature are discussed in Section~5, and proved in Section 6. Afterwards, in Section \ref{sec: condition for Assumption B}, we provide a set of sufficient conditions for our technical assumptions, which are standard for verification results (cf., e.g., \cite[Theorem~4.1]{touzi13}). Finally, in Section 8, we show how to verify the conditions of Section 7 in a concrete model.

\paragraph{Notation}
Throughout, $\M^{d\x m}$ denotes the space of $d\x m$ matrices, and $\S^{d}$ the subspace of symmetric $d \times d$ matrices.
For $k\ge1$, $x\in\R^k$ and $r>0$, we write $B_r(x)$ for the open ball of radius $r$ centered at $x$;
$\bar B_r(x)$ and $\partial B_r(x)$ denote its closure and boundary, respectively.

For a smooth function $\vp:(t,x_1,\ldots,x_k)\rightarrow\R$, we write $\partial_t\vp, \partial_{x_i}\vp$ for the corresponding partial derivatives.
The second-order derivatives are denoted by $\partial_{x_ix_j}\vp$ etc. We write $D\vp$ and $D^2\vp$ for the gradient vector and Hessian matrix of $\vp$ with respect to the spatial components, respectively. For any subset $I\subset\{1,\cdots,k\}$, $D_{(x_i)_{i\in I}}$ and $D^2_{(x_i)_{i\in I}}$ refer to the gradient and Hessian with respect to $(x_i)_{i\in I}$.

$C^ i$ denotes the $i$-times continuously differentiable functions, $C^i_b$ is the subspace with bounded derivatives, and $C^{1,2}$ refers to the functions once resp.\ twice continuously differentiable in the time resp.\ space variables.

Finally, for any locally bounded function $v$, the corresponding lower- and upper-semicontinuous envelopes are denoted by $v_*$, $v^*$.

\section{Model}

\subsection{Unaffected Prices}

    Let $(\Omega,\Fc,\P)$ be a complete probability space supporting a $q$-dimensional Brownian motion $W$.
    Fix a finite time horizon $T>0$, and let $\F:=(\Fc_t)_{t\in[0,T]}$ be the augmented filtration generated by $W$.

    We consider a financial market with $d+1$ assets. The first one is safe, and its price is assumed to be normalized to one. The other $d$ assets are risky,
    with unaffected best quotes $S:=(S^1,\ldots, S^d)$ following
    \begin{equation}\label{eq:dynS}
    dS_r=\mu_S(r,S_r,Y_r)dr+\sigma_S(r,S_r,Y_r)dW_r,
    \qquad
    S_t=s,
    \end{equation}
    for a state variable $Y$ taking values in an open subset $\Yc$ of $\R^m$, with dynamics
    \begin{equation}\label{eq:dynY}
    dY_r=\mu_Y(r,Y_r)dr+\sigma_Y(r,Y_r)dW_r,
    \qquad Y_t=y.
    \end{equation}
    The mappings $(\mu_S,\sigma_S):[0,T]\x(0,\infty)^d\x\Yc\longmapsto \R^d\x\M^{d\x q}$ and $(\mu_Y,\sigma_Y):[0,T]\x\Yc\longmapsto \R^m\x\M^{m\x q}$ are continuous and
    Lipschitz-continuous in $(s,y)$. Moreover, $\sigma_S$ belongs to $C^{1,2}$ and satisfies the following local ellipticity condition:
    for any compact subset $B\subset[0,T]\x(0,\infty)^d\x\Yc$,
    there is a constant $\gamma_B>0$ such that:
    \beq\label{eq: ellipticity}
        \abs{{\rm \mathbf{x}}^\top \sigma_S}^2={\rm \mathbf{x}}^\top \sigma_S\sigma_S^\top {\rm \mathbf{x}}\ge\gamma_B\abs{{\rm \mathbf{x}}}^2,
        \quad\pourtout {\rm \mathbf{x}}\in\R^d \mbox{ on } B.
    \eeq
    As a result, for any initial data $(t,s,y)\in[0,T]\x(0,\infty)^d\x\Yc$, there is a unique strong solution of the SDEs (\ref{eq:dynS}-\ref{eq:dynY}), that we denote by $(S^{t,s,y},Y^{t,y})$.

    \begin{remark}
    The condition $\sigma_S \in C^{1,2}$ allows to produce a smooth solution of the First Corrector Equation \eqref{eq: 1st corrector 1D} in Lemma \ref{lem: explicit resolution 1st corrector}.
        This assumption could be weakened using a mollification argument as in \cite{possamai.al.12}.
\end{remark}

\subsection{Linear Price Impact}

The unaffected best quotes $S$ from \eqref{eq:dynS} represent the idealized prices at which minimal amounts can be traded slowly without adversely affecting market prices. In contrast, if $\Delta\theta$ shares are traded over a time interval $\Delta t$, then this order is filled at an average price per share of
$$S_t+\Lambda_t \frac{\Delta\theta}{\Delta t}$$
instead of $S_t$. This price impact is purely ``transient'', in that prices immediately return to their unaffected value after each trade is filled.\footnote{For models also taking into account persistent price impact, cf., e.g., \cite{bertsimas.lo.98,almgren.chriss.01,huberman.stanzl.05,gatheral.10,obizhaeva.wang.13,alfonsi.al.10,roch.soner.13,garleanu.pedersen.13b} and the references therein.}  Moreover, impact is linear in the trading rate $\Delta\theta/\Delta t$.
This is described by the process $\Lambda_t=\lambda\Lambda(t,S_t,Y_t,X_t)$, where $\lambda>0$ is a small parameter and $\Lambda(t,S_t,Y_t,X_t)$ is a $C^{1,2}$-function of time $t$, current prices $S_t$, the state variable $Y_t$, and the investor's current (paper) wealth $X_t$, taking values in the symmetric, positive definite $d\times d$ matrices.\footnote{As pointed out by Garleanu and Pedersen \cite{garleanu.pedersen.13a}, symmetry of $\Lambda$ can be assumed without loss of generality because otherwise the symmetrized version $(\Lambda+\Lambda^\top)/2$ leads to the same trading costs. Positive definiteness means that each transaction has a positive cost. The wealth dependence of the price impact parameter permits the incorporation of feedback effects of the investor's actions on market liquidity. For example, price impact inversely proportional to the investor's current wealth corresponds to the representative investor model of Guasoni and Weber~\cite{guasoni.weber.12,guasoni.weber.13}, where impact is constant relative to the total market capitalization.}  For $\lambda=0$, the usual frictionless model obtains, where arbitrary quantities $\Delta \theta$ can be traded over any time interval $\Delta t$ at the same price $S_t$, for a total execution price of $\Delta\theta S_t$. With a nontrivial $\lambda>0$, trading prices become less favorable in that each order $\Delta \theta$ incurs an additional cost which is quadratic\footnote{Quadratic trading costs can also be motivated by a block-shaped limit order book \cite{obizhaeva.wang.13} or a microstructure model based on the inventory risk accumulated by market makers \cite{garleanu.pedersen.13b}. The empirical literature consistently finds convex trading costs (e.g., \cite{engle.al.08,lillo.al.03}). Some studies actually report quadratic costs \cite{breen.al.02,kyle.obizhaeva.11}, whereas others point towards sublinear price impact with trading costs between linear and quadratic (e.g., \cite{almgren.al.05,toth.al.11}).}  in quantities traded, and inversely proportional to the trade's execution time:
$$\frac{\Delta\theta^\top}{\Delta t} \Lambda_t \frac{\Delta\theta}{\Delta t}\Delta t.$$
These considerations motivate the following continuous-time model.\footnote{Convergence of the respective optimizers is proved in a related model by Garleanu and Pedersen \cite{garleanu.pedersen.13b}.} For any absolutely continuous trading strategy
\begin{equation}\label{eq:strat}
d\theta_r=\dot{\theta}_r dr\;,
\quad
\theta_t=\vartheta,
\end{equation}
 the corresponding (paper) wealth has dynamics
\begin{equation}\label{eq:wealth}
dX_r=\theta_r dS_r-\lambda \dot{\theta}^\top_r \Lambda(r,S_r,Y_r,X_r) \dot{\theta}_r dr\;,
\qquad X_t=x.
\end{equation}
To wit, the usual frictionless dynamics are adjusted for trading costs quadratic in the trading rate $\dot{\theta}$. For notational simplicity, we write
    $$\zeta:=(t,s,y,x)\in\Df,$$where
$$\Df:=\Dfi\cup\Dfb$$
with
    $$
        \Dfi:=[0,T)\x(0,\infty)^d\x\R^m\x \mathbb{R}
        \quad\mbox{and}\quad
        \Dfb:=\{T\}\x(0,\infty)^d\x\R^m\x \mathbb{R}.
    $$
    
With this notation, the set of controls $\Theta^\lambda_0$ consists of the $\F$-progressively measurable trading rates $\dot\theta$ for which the system (\ref{eq:strat}-\ref{eq:wealth}) admits a unique strong solution $(\theta^{t,\vartheta},X^{\zeta,\vartheta,\dot\theta,\lambda})$ for all initial data $(\zeta,\vartheta)\in \Df\x\R^d$.

\begin{remark}
To ease notation and because the time-derivative plays a special role, for any smooth function $\vp:\Df\rightarrow\R$, (resp. $\vp:\Df\x\R^d\rightarrow\R$) we write $D\vp$ (resp. $D_\zeta\vp$) for the gradient of $\vp$ with respect to its \emph{spatial} components $(s,y,x)$. Derivatives with respect to time $t$ are denoted by $\partial_t\vp$ throughout.
\end{remark}

\subsection{Preferences and Liquidation}\label{sec:pref}

In the above market with linear price impact, an investor trades to maximize expected utility from terminal wealth at some finite planning horizon $T>0$. Her utility function $U: \mathbb{R} \to \mathbb{R} \cup \{-\infty\}$ is nondecreasing, as well as smooth and strictly concave on the interior of its effective domain.

As the investment horizon is finite, liquidation at the terminal time $T$ has to be taken into account. For small proportional or fixed trading costs, a single bulk trade is negligible at the leading order, so that this issue disappears asymptotically. With price impact, however, liquidation becomes a nontrivial (and potentially costly) issue. Because we focus here on the dynamic trading before $T$, we separate the liquidation problem as follows. We suppose that the model parameters are simply frozen at time $T$ and the investor's terminal position $\theta_T$ is liquidated quickly towards the frictionless target $\theta^0_T=\theta^0(T,S_T,Y_T,X^{\theta}_T)$ using the deterministic mean-variance optimal strategy from Sch\"oneborn~\cite{schoeneborn.11}, with constant risk-tolerance $R_T=-U'(X^{\theta}_T)/U''(X^{\theta}_T)$. This leads to risk-adjusted liquidation costs \cite[Equation (11)]{schoeneborn.11} of $\lambda^{1/2}\mathfrak{P}(T,S_T,Y_T,X^{\theta}_T,\theta_T)$, where \cite[Theorem 4.1]{schoeneborn.11}:
$$\mathfrak{P}(\zeta,\vartheta):=(\vartheta-\theta^0(\zeta))^\top\frac{\Lambda^{1/2} (\Lambda^{-1/2}\sigma_S\sigma_S^\top \Lambda^{-1/2})^{1/2} \Lambda^{1/2}}{(2R)^{1/2}}(\zeta)(\vartheta-\theta^0(\zeta)).$$
As these liquidation costs are small for small price impacts ($\Lambda \sim 0$), we in turn define the investor's frictional value function as suggested by Taylor's theorem:
    \beq\label{eq: definition friction value function}
        v^\lambda(\zeta,\vartheta) := \sup_{\dot{\theta}\in\dot{\Theta}_{\zeta,\vartheta}^\lambda}\Esp{U\left(X^{\zeta,\vartheta,\dot\theta,\lambda}_T\right)-U'(X^{\zeta,\vartheta,\dot\theta,\veps}_T)\lambda^{1/2}\mathfrak{P}(T,S^\zeta_T,Y^\zeta_T,X^{\zeta,\vartheta,\dot\theta,\veps}_T,\theta_T^{\zeta,\vartheta})},
    \eeq
    for initial data $(\zeta,\vartheta)\in \Df\x\R$. Here, $\dot\theta$ runs through the set $\dot{\Theta}_{\zeta,\vartheta}^\lambda$ of \emph{admissible} controls. These have to satisfy
     \beq\label{eq: admiss for friction control}
        U(X^{\zeta,\vartheta,\dot\theta,\veps}_T)-U'(X^{\zeta,\vartheta,\dot\theta,\veps}_T)\lambda^{1/2}\mathfrak{P}(T,S^\zeta_T,Y^\zeta_T,X^{\zeta,\vartheta,\dot\theta,\veps}_T,\theta_T^{\zeta,\vartheta})    \in L^1.
    \eeq
    Moreover, one needs to be able to approximate the corresponding wealth processes using simple strategies as in Biagini and \v{C}ern\'y~\cite{biagini.cerny.11}. The first condition is evidently needed to make the terminal utility well defined. The second assumption is an economically meaningful class of strategies that is small enough to exclude doubling strategies,\footnote{With superlinear frictions, doubling strategies need not be ruled out a priori to make the frictional problem well posed \cite{guasoni.rasonyi.14}. However, even if doubling strategies are not scalable at will, their availability may still cause the value function to become discontinuous at the terminal time $T$, ruling out classical verification theorems as in Section~\ref{sec: condition for Assumption B}. Therefore, we do not allow doubling strategies here.} but large enough to contain the optimizer under weak assumptions; see \cite{biagini.cerny.11} for more details. For utilities which are only finite on the positive half-line, the approximation property is replaced by requiring the wealth process to be positive on $[0,T]$.

\begin{remark}
The liquidation penalty $\mathfrak{P}$ disappears in the following two important special cases:
\begin{enumerate}
\item For infinite-horizon problems as in \cite{garleanu.pedersen.13a,garleanu.pedersen.13b,guasoni.weber.12,guasoni.weber.13}, liquidation is not an issue. Indeed, as the horizon grows, the cost of the terminal liquidation program remains the same, whereas the accumulated benefits from trading grow indefinitely.
\item Suppose that the initial allocation is close to the frictionless target. Then, for strategies that always trade quickly towards the latter, the deviation always remains small in expectation. Hence, the liquidation penalty is of higher order in this case, and can be neglected asymptotically.
\end{enumerate}
For finite-horizon problems and arbitrary initial endowments, however, liquidation has to be taken into account explicitly, see \cite{almgren.li.11}.
\end{remark}

    \begin{remark}\label{rem:2.4}
    Instead of requiring liquidation to the frictionless optimizer in \eqref{eq: definition friction value function}, one could also impose liquidation to a full cash position, or no liquidation penalty at all. Both of these alternatives are economically meaningful, but complicate the problem substantially. The reason is that unlike for proportional or fixed costs, one cannot set up or liquidate a given portfolio with a single block trade and trading costs negligible at the leading asymptotic order. Therefore, with no liquidation penalty, investors with a short horizon will only trade very little if their initial position is far from the frictionless target to save trading costs. In contrast, with a longer horizon, they will trade much more aggressively to reap the gains from an optimal position in the long run. Requiring full liquidation leads to similar inhomogeneities. Indeed, as the horizon nears, the investor's focus then gradually shifts from rebalancing to maintain an optimal risk-return tradeoff to a liquidation program. In contrast, liquidation towards the frictionless target leads to a ``stationary'' version of the (asymptotic) problem, where the effects of setting up and liquidating the portfolio are disregarded, to be dealt with as separate optimal execution problems.
    \end{remark}

\section{Dynamic Programming and Corrector Equations}\label{sec: mathematical results}

In this section, we state the dynamic programming equations solved by the frictionless and frictional value functions, respectively. For small price impacts, their difference is described by the solution of the so-called ``corrector equations''. To provide some intuition, we first derive these heuristically for a single risky asset and state variable. Afterwards, we state the general multidimensional versions.

\subsection{The Frictionless Case}\label{sec: fl case}

    Without price impact, the diffusions $(S^{t,s,y},Y^{t,y})$ are still defined as the strong solutions of the SDEs
    (\ref{eq:dynS}-\ref{eq:dynY})
    but, without trading costs, the wealth dynamics \eqref{eq:wealth} reduce to
    $$dX^{\zeta,\theta}_r=\theta_r dS_r, \quad X^{\zeta,\theta}_t=x.$$
    Here, the -- now no longer necessarily absolutely continuous -- control  $\theta$ denotes the numbers of risky shares held in the portfolio.
    The control set consists of the $\F$-progressively measurable processes taking values in $\R^d$
    such that the above SDE admits a unique strong solution $X^{\zeta,\theta}$.
    As above, we restrict ourselves to the subset $\Theta^0_\zeta$ of \emph{admissible} controls for which $U(X^{\zeta,\theta}_T)\in L^1$,
    and for which the corresponding wealth processes can be approximated by simple strategies as in~\cite{biagini.cerny.11}.
    The frictionless value function is then defined as follows:
    \beq\label{eq: definition friction less value function with L2 process}
        v^0(\zeta) := \sup_{\theta\in\Theta^0_\zeta}\Esp{U\left( X^{\zeta,\theta}_T \right)}.
    \eeq

    Standard arguments (compare, e.g., \cite{fleming.soner.06}) show that the frictionless value function $v^0$ solves the Dynamic Programming Equation (henceforth DPE) for the problem at hand:

    \begin{proposition}\label{theo: friction less value function}
      Assume that $v^0$ is locally bounded.
      Then it is a (discontinuous) viscosity solution of
    \beq\label{eq: friction less solution}
        \left\{\bal
            \inf_{\vartheta\in\R^d}\left\{ -\Lc^\vartheta v^0 \right\}=0,
                &\quad\mbox{on } \Dfi,\\
            v^0(T,\zeta)=U(x),
                &\quad\mbox{on } \Dfb,
        \eal\right.
    \eeq
      where, for $\psi\in C^{1,2}$ and $(\zeta,\vartheta)\in \Df\x\R^d$:
        \begin{equation*}\label{eq: dynkin def}\bal
            \Lc^\vartheta\psi(\zeta,\vartheta) := \;&
                \left\{\partial_t \psi + \mu_\vartheta\cdot D_\zeta\psi + \frac12\Tr{\sigma_\vartheta\sigma^\top_\vartheta D_\zeta^2\psi}\right\}(\zeta,\vartheta),
        \eal\end{equation*}
        with
    $$
        \mu_\vartheta(\zeta):=\begin{pmatrix} \mu_S\\ \mu_Y\\\vartheta\cdot \mu_S \end{pmatrix}(\zeta)
        \et
        \sigma_\vartheta(\zeta):=\begin{pmatrix} \sigma_S\\ \sigma_Y\\\vartheta^\top \sigma_S \end{pmatrix}(\zeta).
    $$
    \end{proposition}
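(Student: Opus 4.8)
The plan is to establish Proposition~\ref{theo: friction less value function} by the standard dynamic programming verification route for diffusion control problems, adapted to the present setting where the control $\theta$ enters only through the drift and diffusion coefficients of the wealth equation (and never through a running cost). First I would prove the dynamic programming principle for $v^0$: for any $(\zeta)\in\Dfi$ and any stopping time $\tau$ valued in $[t,T]$,
\beq
v^0(\zeta)=\sup_{\theta\in\Theta^0_\zeta}\Esp{v^0\big(\tau,S^{\zeta,\theta}_\tau,Y^{\zeta,\theta}_\tau,X^{\zeta,\theta}_\tau\big)}.
\eeq
Since $v^0$ is only assumed locally bounded (not a priori continuous), I would invoke the weak version of the DPP due to Bouchard and Touzi, which suffices to derive the viscosity sub- and supersolution properties directly; alternatively one cites \cite{fleming.soner.06} as the excerpt does. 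The measurable-selection subtleties are absorbed into this step, as is usual.

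The supersolution property is the easy half. Let $\vp\in C^{1,2}$ touch $v^0_*$ from below at an interior point $\zeta_0=(t_0,s_0,y_0,x_0)\in\Dfi$. Fixing an arbitrary constant $\vartheta\in\R^d$ and using the constant control $\theta\equiv\vartheta$ on a short interval in the DPP, then applying It\^o's formula to $\vp(r,S^{\zeta_0,\vartheta}_r,Y_r,X^{\zeta_0,\vartheta}_r)$, localizing, and sending the time increment to zero yields $-\Lc^\vartheta\vp(\zeta_0,\vartheta)\ge 0$; since $\vartheta$ is arbitrary, $\inf_{\vartheta\in\R^d}\{-\Lc^\vartheta\vp\}(\zeta_0)\ge0$. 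For the subsolution property, suppose for contradiction that $\vp\in C^{1,2}$ touches $v^{0*}$ from above at $\zeta_0\in\Dfi$ but $\inf_{\vartheta}\{-\Lc^\vartheta\vp\}(\zeta_0)=:2\delta>0$. By continuity of $(\mu_\vartheta,\sigma_\vartheta)$ and of the derivatives of $\vp$, and crucially by a growth/coercivity argument in $\vartheta$ (the wealth drift and diffusion are affine in $\vartheta$, so $-\Lc^\vartheta\vp$ is — at a point where $D^2_\zeta\vp$ is fixed — a quadratic form in $\vartheta$ whose behaviour for large $|\vartheta|$ is controlled), one gets $-\Lc^\vartheta\vp\ge\delta$ on a neighbourhood of $\zeta_0$ \emph{uniformly in $\vartheta\in\R^d$}. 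Then for any admissible $\theta$, It\^o's formula applied to $\vp$ along $X^{\zeta_0,\theta}$ until the exit time $\tau$ from this neighbourhood gives, after taking expectations, $\vp(\zeta_0)\ge\Esp{\vp(\tau,\cdots)}+\delta\Esp{\tau-t_0}$, and combining with $v^{0*}\le\vp$, $v^0\le v^{0*}$ and the DPP produces $v^0(\zeta_0)\le v^0(\zeta_0)-\delta\,c$ for some $c>0$, a contradiction. The terminal condition $v^0(T,\cdot)=U(x)$ follows from the admissibility requirement $U(X^{\zeta,\theta}_T)\in L^1$ together with continuity of the flow in the initial time; in the (discontinuous) viscosity sense this is part of the statement to be checked, and reduces to showing $v^{0*}(T,\zeta)\le U(x)\le v^0_*(T,\zeta)$, the lower bound by using the constant strategy and the upper bound by a uniform-integrability argument as the starting time approaches $T$.

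The main obstacle I expect is the uniformity in $\vartheta\in\R^d$ in the subsolution step: unlike classical problems with a compact control set, here $\vartheta$ ranges over all of $\R^d$, so one must verify that the strict inequality $-\Lc^\vartheta\vp\ge\delta$ genuinely persists on a full neighbourhood of $\zeta_0$ simultaneously for every $\vartheta$. This works because $-\Lc^\vartheta\vp(\zeta,\vartheta)$ is, for each fixed $\zeta$, a quadratic polynomial in $\vartheta$ with leading coefficient $\tfrac12\,\vartheta^\top\sigma_S\sigma_S^\top\vartheta\,\partial_{xx}\vp$ plus the mixed terms — so provided the infimum over $\vartheta$ is finite and strictly positive at $\zeta_0$, the coefficients vary continuously and the infimum is attained on a bounded set depending continuously on $\zeta$; one then only needs uniformity over a compact $\vartheta$-range, which is standard. (Implicitly the concavity of $U$ and hence a sign condition on $\partial_{xx}v^0$ is what makes this infimum finite — this is the usual ellipticity/convexity structure behind the HJB equation for utility maximization, and is the one place where the qualitative properties of $U$ enter.) Everything else — the It\^o/localization estimates, the passage to the limit in the time increment — is routine.
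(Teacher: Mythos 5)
The paper itself gives no written proof of this proposition --- it simply appeals to standard dynamic-programming/viscosity arguments (Fleming--Soner; equivalently the weak DPP of Bouchard--Touzi) --- and your proposal follows exactly that standard route: weak DPP, It\^o's formula with constant controls for the supersolution half, a contradiction argument for the subsolution half, and a separate discussion of the terminal condition. Structurally this is the intended argument.

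There is, however, one concrete soft spot: the uniformity in $\vartheta\in\R^d$ in your subsolution step. You argue that if $\inf_{\vartheta\in\R^d}\{-\Lc^\vartheta\vp\}(\zeta_0)=2\delta>0$ is finite, then the infimum is attained on a bounded $\vartheta$-set depending continuously on $\zeta$, so that $-\Lc^\vartheta\vp\ge\delta$ holds on a neighbourhood of $\zeta_0$ uniformly in $\vartheta$. This is correct only in the coercive case $\partial_{xx}\vp(\zeta_0)<0$, where the ellipticity \eqref{eq: ellipticity} gives locally uniform quadratic growth in $\vartheta$ and reduces matters to a compact $\vartheta$-range. But the infimum can also be finite in the degenerate case $\partial_{xx}\vp(\zeta_0)=0$ with vanishing linear-in-$\vartheta$ coefficient; there the map $\zeta\mapsto\inf_{\vartheta}\{-\Lc^\vartheta\vp\}(\zeta)$ is merely upper semicontinuous and drops to $-\infty$ at points arbitrarily close to $\zeta_0$ where $\partial_{xx}\vp>0$, so no $\vartheta$-uniform neighbourhood bound exists and the contradiction with the DPP collapses. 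Note that nothing forces a test function touching $v^{0\ast}$ from above to satisfy $\partial_{xx}\vp\le0$, so the concavity of $U$ invoked in your parenthetical does not exclude this case. The standard repair --- and the reading the paper itself adopts for the frictional DPE in Remark~\ref{rem: simplification HJB with friction} --- is to interpret ``(discontinuous) viscosity solution'' via the semicontinuous envelopes of the Hamiltonian $(\zeta,p_t,p,X)\mapsto\inf_{\vartheta}\bigl\{-p_t-\mu_\vartheta\cdot p-\tfrac12\Tr{\sigma_\vartheta\sigma_\vartheta^\top X}\bigr\}$: for the subsolution inequality one uses its lower-semicontinuous envelope, which equals $-\infty$ at exactly the degenerate points, rendering the inequality vacuous there; only the coercive case remains, where your quadratic-growth argument is valid. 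With that convention made explicit (and a corresponding relaxed reading of the terminal condition, which pointwise is immediate from the definition of $v^0$ at $t=T$), your proof is the standard one the paper cites.
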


    \begin{remark}\label{rem: simplification HJB frictionless}
      Suppose that $v^0$ is smooth with $\partial_{xx}v^0<0$. Then, as $\sigma_S$ satisfies the ellipticity condition
      \reff{eq: ellipticity},
    it follows that $v^0$ is a classical solution of
     \beq\label{eq: PDE friction less simplified with pf}\bal
                \Lc^{\theta^0} v^0(\zeta)=0,
     \eal\eeq
     for all $\zeta\in \Dfi$ or, equivalently,
     \beq\label{eq: friction less HJB simplified}
                \left\{\partial_t v^0 + \mu_0 Dv^0 + \frac12\Tr{\bar\sigma_0\bar\sigma^\top_0 D_{(s,y)}^2v^0}\right\}(\zeta)
                =\left(\frac12 (\theta^0)^\top \sigma_S \sigma_S^\top \theta^0 \partial_{xx} v^0 \right)(\zeta),
     \eeq
     where the optimal investment strategy $\theta^0(\zeta)$ satisfies
     \beq\label{eq: optimum control frictionless}
        -(\partial_{xx}v^0 \sigma_S\sigma_S^\top\theta^0)(\zeta) := \mu_S\partial_xv^0
                        +\sigma_S\bar\sigma^\top_0 D_{(s,y)}(\partial_xv^0)(\zeta),
     \eeq
     with
     $$
        \bar\sigma_0
        :=\begin{pmatrix}
          \sigma_S
          \\ \sigma_Y
        \end{pmatrix}.
     $$
     Indeed, given sufficient regularity of the coefficients of the SDEs, standard verification arguments (compare, e.g., \cite{touzi13}) show that
     the Markovian feedback policy
     $$
        \theta^0_u:=\theta^0\left(u,S^{t,s,y}_u,\hat X^{t,s,y,x,\theta^0}_u,Y^{t,y}_u \right), \quad u \in [t,T],
     $$
     is optimal for \reff{eq: definition friction less value function with L2 process} in this case. Note that -- with an abuse of notation -- we use the same symbol to denote both the feedback description of a strategy \emph{and} its evolution as a stochastic process. 
    \end{remark}

\subsection{The Dynamic Programming Equation with Price Impact}

Given that the frictionless value function $v^0$ is locally bounded, its frictional counterpart $v^\lambda$ is evidently locally bounded from above because any absolutely continuous control in $\dot\Theta^\lambda_{\zeta,\vartheta}$ can be reproduced by a control in $\Theta^0_\zeta$,
      the utility function $U$ is nondecreasing, and the penalty function $\Pf$ is nonnegative. We assume in addition that $v^\lambda$ is also locally bounded from below, i.e., there exists at least one strategy that closes out any initial position with finite utility.\footnote{For any initial wealth, this is evidently possible with a single bulk trade for sufficiently small proportional or fixed costs. With linear price impact, only absolutely continuous trading strategies can be implemented. Therefore, one has to restrict to long-only portfolios for utilities defined on the positive half-line, and impose sufficient integrability on the asset dynamics even for utilities defined on the whole real line, see Section~\ref{sec:final} for more details.} 

Next, we turn to the corresponding DPE with linear price impact. Without state constraints, i.e.\ for utilities that are finite on the whole real line, the latter can be derived from the weak dynamic programming principle of Bouchard and Touzi \cite{bouchard.touzi.11}. It is expected that this remains true if wealth is required to remain positive for utilities finite only on $\mathbb{R}_+$, see \cite{bouchard.nutz.12}. Making this rigorous in the presence of frictions is more delicate, though, see \cite{altarovici.al.13,soner.vukelja.14} for some specific examples. Therefore, we simply state the DPE as an assumption in the general setting considered here:

    \begin{assumption} \label{prop:dpefric}
      The frictional value function $v^\lambda$
       is locally bounded and a (discontinuous) viscosity solution of
        \beq\label{eq: viscosity solution for friction case}
            \left\{\bal
              -\Lc^\vartheta v^\lambda - \Hc^\lambda v^\lambda=0,& \quad\mbox{on } \Dfi\x\R^d,\\
              v^\lambda= U-U'\lambda^{1/2}\Pf, &\quad\mbox{on }\partial_T\Df\x\R^d,
            \eal\right.
        \eeq
      where, for $\psi\in C^{1,2}$ and $(\zeta,\vartheta)\in \Df\x\R^d$:
        \beq\label{eq: non linear operator}
            \Hc^\lambda\psi(\zeta,\vartheta):=
                \sup_{\dot{\vartheta}\in\R^d}\left\{ \dot\vartheta\cdot D_\vartheta\psi - \lambda \dot{\vartheta}^\top \Lambda \dot{\vartheta}\partial_x\psi\right\}(\zeta,\vartheta),
        \eeq
        and the liquidation penalty $\Pf$ is defined as in Section \ref{sec:pref}.
\end{assumption}

        \begin{remark}\label{rem: simplification HJB with friction}
      The PDE \eqref{eq: viscosity solution for friction case} generally has to be understood in terms of the semicontinuous envelopes $\Hc^{\lambda,\ast},\Hc^\lambda_\ast$ of 
      $$
        \Hc^\lambda:(\zeta,q_x,q_\vc)\in\Df\x\R\x\R^d\longmapsto
            \sup_{\dot{\vartheta}\in\R^d}\left\{ \dot\vartheta\cdot q_\vc - \lambda \dot{\vartheta}^\top \Lambda(\zeta) \dot{\vartheta}q_x\right\}.
      $$
      (We use the shorthand notation $\Hc^\lambda\psi(\zeta,\vartheta):=\Hc^\lambda(\zeta,\partial_x\psi(\zeta,\vc),D_\vc\psi(\zeta,\vc))$.)
      
      However, we have $\Hc^{\lambda,\ast}=\Hc^\lambda_\ast=\Hc^\lambda$ on $\Df\x(0,\infty)\x\R^d$
      so that this relaxation is superfluous for smooth test function $\psi$ satisfying $\partial_x\psi>0$ on $\Df\x\R^d$.
      Moreover, in this case, positive-definiteness of $\Lambda$ gives that the first line in \reff{eq: viscosity solution for friction case}
     can be rewritten as
        \beq\label{eq: PDE for v eps simplified}
            -\left(\Lc^\vartheta\psi
                +\frac{(D_\vartheta\psi)^\top \Lambda^{-1}D_\vartheta\psi}{4\lambda\partial_x\psi}\right)(\zeta,\vartheta)=0,
                \quad\pourtout (\zeta,\vartheta)\in \Dfi\x \R^d,
        \eeq
        where we have used the pointwise optimizer in \reff{eq: non linear operator}:
        \begin{equation}\label{eq:strategy2}
            \dot{\vartheta}^\lambda(\zeta,\vartheta):=\frac{\Lambda^{-1}D_\vartheta\psi}{2\lambda\partial_x\psi}(\zeta,\vartheta).
        \end{equation}
        
    \end{remark}

    \subsection{Heuristic Expansion for a Single Risky Asset}\label{sec:heuristics}

    Our goal is to show that, for all $(\zeta,\vartheta)\in \Df\x\R^d$, the frictional value function has the asymptotic expansion
            \beq\label{eq: formal taylor expansion of v eps}
            v^\lambda(\zeta,\vartheta)=v^0(\zeta)-\lambda^{1/2} u(\zeta) -\lambda \varpi\circ\xib_\lambda(\zeta,\vartheta)+o(\lambda^{1/2}).
        \eeq
        Here, we write
        $$
            \varpi\circ\xib_\lambda(\zeta,\vartheta):=\varpi(\zeta,\xib_\lambda(\zeta,\vartheta))
        $$
         for $\varpi:(\zeta,\xi)\in \Df\x\R^d\longmapsto \varpi(\zeta,\xi)$, and the ``fast'' variable
        \beq\label{eq: fast variable}
            \xib_\lambda(\zeta,\vartheta):=\frac{\vartheta-\theta^0(\zeta)}{\lambda^{1/4}}
        \eeq
        measures the deviation of the actual position from the frictionless target \reff{eq: optimum control frictionless}, rescaled to be of order one as $\lambda \to 0$.

        \begin{remark}
        The asymptotic scalings for the value function and the optimal policy are motivated by the corresponding results of Guasoni and Weber \cite{guasoni.weber.12}.
        \end{remark}

        To motivate the corrector equations describing the asymptotics
        (cf.\ Section~\ref{ssec:corrector}),
        let us first informally derive them for a single risky asset ($d=1$) and a single state variable ($m=1$).\footnote{The corresponding calculations for several assets and state variables are analogous, but more tedious.} Both processes are driven by a two-dimensional Brownian motion ($q=2$), with volatilities
            $$
                \sigma_S:=\begin{pmatrix} \sigma_{S,1}&0 \end{pmatrix}
                \et
                \sigma_Y:=\begin{pmatrix} \sigma_{Y,1}&\sigma_{Y,2} \end{pmatrix},
            $$
            so that price and state shocks are correlated for $\sigma_{Y,1} \neq 0$.
            In this simple framework, the price impact matrix $\Lambda$ is simply a positive, smooth, scalar function on $\Df$.
            Suppose that $v^0$ and $v^\lambda$ are classical solutions of \reff{eq: friction less solution} and
            \reff{eq: viscosity solution for friction case}, respectively, satisfying $\partial_x v^0\wedge (-\partial_{xx} v^0)\wedge \partial_x v^\lambda>0$.
            Assume furthermore that the functions $\theta^0, u, \varpi$ and $\xib_\lambda$ belong to $C^{1,2}$, and introduce the local quadratic variation of the frictionless optimizer:
            \beq\label{eq: def delta 1 dim}
                c_{\theta^0}(\zeta):=\frac{d\langle \theta^0\rangle}{dt}(\zeta)= \left( \sigma_S\partial_s\theta^0 + \sigma_{SY}\partial_y\theta^0+\sigma_S\theta^0\partial_x\theta^0 \right)^2(\zeta)+\left( \sigma_Y\partial_y\theta^0 \right)^2(\zeta) \geq 0.
            \eeq
            Notice that $\theta^0$ refers to the evolution of the optimal frictionless strategy as a stochastic process, where the appropriate state variables are plugged into its feedback description. 

    \subsubsection{The Corrector Equations}\label{sec: informal Lc u}

            Inserting the ansatz (\ref{eq: formal taylor expansion of v eps}-\ref{eq: fast variable})
            into the frictional DPE \reff{eq: PDE for v eps simplified} leads to
            \begin{align*}
                0 &=
                    -\Lc^{\theta^0}v^0-\lambda^{1/4}\xib_\lambda
                        \left(
                            \mu_S\partial_xv^0 + \sigma_{S,1}^2\partial_{sx}v^0+\sigma_S\sigma_{Y,1}\partial_{xy}v^0+\sigma_{S,1}^2\theta^0\partial_{xx}v^0
                        \right)\\
                & \qquad
                    -\lambda^{1/2}\left(
                        -\Lc^{\theta^0} u
                        +\frac12\sigma^2_{S,1}\partial_{xx}v^0\xib_\lambda^2
                        -\frac12 c_{\theta^0}\partial_{\xi\xi}\varpi
                        +\frac{(\partial_\xi\varpi)^2}{4 \Lambda \partial_xv^0}
                        \right)+o(\lambda^{1/2}).
            \end{align*}
            Here, the first line vanishes by the frictionless DPE \eqref{eq: PDE friction less simplified with pf} and the first-order condition \eqref{eq: optimum control frictionless} for the frictionless optimizer.
                Observe that the map $u$ in \reff{eq: formal taylor expansion of v eps}
                is independent of $\vc$, hence $\Lc^{\theta^0} u$ is a function of $\zeta$ only as well.
                As a consequence, the remaining terms at the order $\lambda^{1/2}$ in the previous equation should not depend on $\vc$ either. Therefore, we first look for a function $a:\Df\rightarrow\R$ such that the pair $(\varpi,a)$ is solution,
            for fixed $(t,s,x,y)\in \Dfi$, of the \emph{first corrector equation}
            \beq\label{eq: 1st corrector 1D}
                        \frac12\sigma^2_{S,1}\xi^2\partial_{xx}v^0
                        -\frac12c_{\theta^0} \partial_{\xi\xi}\varpi
                        +\frac{\Lambda^{-1}(\partial_\xi\varpi)^2}{4\partial_xv^0}
                        +a=0,
            \eeq
            and then identify $u$ as the solution on $\Dfi$ of the \emph{second corrector equation}
            \begin{equation}\label{eq:corrector2}
                -\Lc^{\theta^0} u - a=0.
            \end{equation}
            Now, insert the ansatz \reff{eq: formal taylor expansion of v eps} into the terminal condition \reff{eq: viscosity solution for friction case}
            for the frictional value function $v^\lambda$
            and use the terminal condition \reff{eq: friction less solution} for its frictionless counterpart $v^0$. This shows that the corresponding terminal condition for $u$ is given by
            \beq\label{eq: formal terminal condition for u}
                \lambda^{1/2}u+\lambda\varpi\circ\xib_\lambda=U'\lambda^{1/2}\Pf,
                \quad \mbox{on $\Dfb$}.
            \eeq

            Let $R:=-\partial_xv^0/\partial_{xx}v^0$ denote the risk tolerance of the frictionless value function. As $R>0$ because we assumed $-\partial_{xx} v^0 \wedge \partial_x v^0 >0$, the First Corrector Equation \reff{eq: 1st corrector 1D} is readily rewritten as
            $$
                        \frac{\sigma^2_{S,1}}{2\Lambda R}\xi^2
                        +\frac{c_{\theta^0}}{2\Lambda\partial_xv^0}\partial_{\xi\xi}\varpi
                        -\left(\frac{\partial_\xi\varpi}{2 \Lambda \partial_xv^0}\right)^2
                        -\frac a{\Lambda\partial_xv^0}=0.
            $$
            Evidently, there should be no penalty for deviating when the actual position coincides with the frictionless target. Hence, we impose
            the additional constraint $\varpi(\cdot,0)=0$, obtaining the explicit solution
            $(\varpi,a)$ with
            $$\varpi(\zeta,\xi)=k_2(\zeta)\xi^2,$$
            as well as
            \beq\label{eq: k_2 and a in 1D}
                    k_2 = \pm (\Lambda\partial_xv^0)\sqrt{\sigma_{S,1}^2/(2\Lambda R)},\qquad a = c_{\theta^0} k_2.
            \eeq
             Via \reff{eq:strategy2}, \eqref{eq: formal taylor expansion of v eps}, and \reff{eq: fast variable}, this identifies the optimal trading rate for small price impact ($\lambda \sim 0$) as
            $$
                \dot\theta^\lambda(\zeta,\vartheta) \sim -\frac{\lambda^{3/4}\partial_\xi\varpi(\zeta,\xib(\zeta,\vartheta))}{2\lambda \Lambda(\zeta)\partial_xv^0(\zeta)}
                    =-\left(\pm \sqrt{\frac{\sigma_{S,1}(t,s,y)^2}{2\lambda\Lambda(\zeta)R(\zeta)}}(\vartheta-\theta^0(\zeta))\right).
            $$
            As one should evidently always trade towards the frictionless position $\theta^0$ rather than away from it,
            the positive sign for $k_2$ is the correct one in \reff{eq: k_2 and a in 1D}.
            Hence, asymptotically for small $\lambda$, the optimal policy prescribes trading towards the target portfolio at rate $\sqrt{\sigma^2_{S,1}/(2\lambda\Lambda R)}$, in line with \eqref{eq:rate}.

            Observe furthermore that the explicit form of $k_2$ gives
            $\lambda\varpi\circ\xib_\lambda=\lambda^{1/2}\varpi\circ\xib_1=U'\lambda^{1/2}\Pf$ on $\Dfb$,
            so that the terminal condition for $u$ in \reff{eq: formal terminal condition for u} reads as
            \begin{equation}\label{eq:terminalu}
                u=0,\quad \mbox{on $\Dfb$}.
            \end{equation}

    \subsection{Corrector Equations in the General Multidimensional Case}\label{ssec:corrector}

    Let us now state the general multidimensional counterparts of the Corrector Equations (\ref{eq: 1st corrector 1D}-\ref{eq:corrector2}, \ref{eq:terminalu}). To this end, we first introduce the $d$-dimensional counterpart of the local quadratic variation $c_{\theta^0}$ defined in \reff{eq: def delta 1 dim}:
        \begin{equation}\label{eq: def delta multi d}
            c_{\theta^0}(\zeta):=\frac{d\langle\theta^0\rangle_t}{dt}(\zeta)=(D_\zeta\theta^0)^\top\sigma_{\theta^0}\sigma_{\theta^0}^\top D_\zeta\theta^0.
        \end{equation}
        With this notation, the corrector equations in the general multivariate case read as follows:

        \begin{definition}{\bf (Corrector Equations)}
          For a given point $\zeta\in\Df$,
          the \emph{first corrector equation} for the unknown pair $(a(\zeta),\varpi(\zeta,\cdot))\in\R\x C^{2}(\R)$ is
          \beq\label{eq: first corrector equation}
            \left\{
                    \frac12\abs{\xi^\top \sigma_S}^2\partial_{xx}v^0
                    -\frac12\Tr{c_{\theta^0} D^2_{\xi\xi}\varpi(\cdot,\xi)}
                    +\frac{(D_\xi \varpi)^\top \Lambda^{-1}D_\xi \varpi}{4\partial_xv^0}(\cdot,\xi)
                    +a
            \right\}(\zeta)=0,
          \eeq
          together with the normalization $\varpi(\zeta,0)=0$.

          The \emph{second corrector equation} uses the constant term $a(\zeta)$ from the first corrector, and is a simple linear equation for the function $u:\Df\rightarrow\R$:
          \beq\label{eq: 2nd corrector equation}
            \left\{\bal
                -\Lc^{\theta^0} u= a,&\quad\mbox{on } \Dfi,\\
                u=0,&\quad\mbox{on }\Dfb.
            \eal\right.
          \eeq
          We say that the pair $(u,\varpi)$ is \emph{a solution of the corrector equations}.
        \end{definition}

        For a single risky asset ($d=1$) and a single state variable ($m=1$), one readily verifies that these definitions coincide with the equations derived heuristically in Section \ref{sec:heuristics} above.

\section{Main Results}

Our main results are an asymptotic expansion of the value function $v^\lambda$ for small price impact $\Lambda_t=\lambda \Lambda(\cdot) \sim 0$, and an ``almost optimal'' trading policy that achieves the optimal performance at the leading order. To formulate these results, set
\begin{equation}\label{eq:baru}
\bar u^\lambda(\zeta,\vartheta):= \frac{v^0(\zeta)-v^\lambda(\zeta,\vartheta)}{\lambda^{1/2}} \ge 0.
\end{equation}
 Then, the leading-order behavior of this difference can be analyzed under our Standing Assumption~\ref{prop:dpefric} that the frictional value function is a viscosity solution of the corresponding DPE and the following abstract conditions:\footnote{Convenient sufficient conditions for their validity are provided in Section \ref{sec: condition for Assumption B}, and verified in a specific setting in Section \ref{sec:final}. As in related results for proportional and fixed costs \cite{soner.touzi.13,altarovici.al.13}, these ``verification theorems'' are based on the availability of classical smooth solutions.}

 \begin{assumptionA}
          \benumlab{A}
            \item\label{ass: v smooth}
                (Regularity of the frictionless problem) The frictionless value function $v^0$ and optimal investment strategy $\theta^0$ belong to $C^{1,2}$. Moreover, $\partial_xv^0\wedge(-\partial_{xx}v^0)>0$.
            \item\label{ass: u locally bounded from above}
                (Locally uniform bound) For any $(\zeta_o,\vartheta_o)\in \Df\x\R^d$, there exist $r_o, \lambda_o>0$ such that
                $$
                    \sup
                        \left\{
                            \bar u^\lambda(\zeta,\vartheta): (\zeta,\vartheta)\in B_{r_o}(\zeta_o,\vartheta_o)\cap(\Df\x\R^d) \mbox{ and } \lambda\in(0,\lambda_o]
                        \right\}<\infty.
                $$
            \item\label{ass: comparison for u}
               (Comparison)
               A viscosity solution $u$ of the Second Corrector Equation \eqref{eq: 2nd corrector equation} exists.
               Moreover, there is a class of functions $\Cc$ which contains
               $u, \bar u_\ast(\cdot,\theta^0(\cdot))$ and $\bar u^\ast(\cdot,\theta^0(\cdot))$
               such that $u_1\ge u_2$ for all $u_1, u_2\in\Cc$ with $u_1$ (resp. $u_2$) being a lower-semicontinuous (resp. upper-semicontinuous)
                viscosity supersolution (resp. subsolution) of the Second Corrector Equation \reff{eq: 2nd corrector equation}.\footnote{In particular, $u$ is the unique viscosity solution of  \eqref{eq: 2nd corrector equation} in the class $\Cc$.} Here, $ \bar u^\ast, \bar u_\ast$ denote the following relaxed semilimits:
        \beq\label{eq: def bar u ast}
            \bar u^\ast(\zeta,\vartheta) := \Limsup_{\lambda\to 0,(\zeta',\vartheta')\rightarrow(\zeta,\vartheta)} \bar u^\lambda(\zeta',\vartheta')\;,
            \qquad
            \bar u_\ast(\zeta,\vartheta) := \Liminf_{\lambda\to 0,(\zeta',\vartheta')\rightarrow(\zeta,\vartheta)} \bar u^\lambda(\zeta',\vartheta')\;,
        \eeq
        for all $(\zeta,\vc)\in\Df\x\R^d$, which are well-defined upper- resp.\ lower-semicontinuous functions under Assumption~\ref{ass: u locally bounded from above}. 
          \enumlab
        \end{assumptionA}

        Assumptions \reff{ass: v smooth} and \reff{ass: comparison for u} are technical and can be guaranteed by imposing sufficient regularity conditions on the coefficient functions of the model. The crucial assumption is \reff{ass: u locally bounded from above}, which postulates that the leading-order correction of the value function due to small price impact $\lambda\Lambda$ is indeed of order $O(\lambda^{1/2})$ as $\lambda \to 0$. This condition needs to be verified with more specific arguments. See Sections \ref{sec: condition for Assumption B} and \ref{sec:final} for a verification theorem that achieves this for sufficiently regular classical solutions of the dynamic programming equations.

\begin{lemma}\label{lem: explicit resolution 1st corrector}
Suppose Assumption \reff{ass: v smooth} is satisfied. Then, the First Corrector Equation \reff{eq: first corrector equation} is solved by the locally bounded function
            \begin{equation}\label{eq:a}
            a(\zeta)= \Tr{c_{\theta^0} k_2}(\zeta)
            \end{equation}
            and the map
            $$\varpi:\xi \longmapsto \xi^\top k_2(\zeta)\xi,$$
            where $c_{\theta^0}=d\langle \theta^0\rangle/dt$ is the local quadratic variation of the frictionless target strategy $\theta^0$, and the positive semidefinite function $k_2 \in C^{1,2}(\Df;\S^d)$ is defined as
            $$k_2(\zeta)=\frac{\partial_xv^0}{\sqrt{-2\partial_x v^0/\partial_{xx} v^0}} \left[\Lambda^{1/2}(\Lambda^{-1/2}\sigma_S\sigma_S^\top \Lambda^{-1/2})^{1/2}\Lambda^{1/2}\right](\zeta).$$
If, in addition, Assumption \reff{ass: u locally bounded from above} holds, then,
evaluated along the frictionless optimal strategy $\theta^0$, the semilimits $\bar u^\ast(\cdot,\theta^0(\cdot)), \bar u_\ast(\cdot,\theta^0(\cdot))$ are viscosity sub- and supersolutions, respectively, of the Second Corrector Equation \eqref{eq: 2nd corrector equation}.
\end{lemma}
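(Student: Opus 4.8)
The claim has two independent parts, which I would handle separately.

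\medskip
\noindent\textbf{Solving the first corrector.} This is a direct verification. With $\varpi(\zeta,\xi)=\xi^\top k_2(\zeta)\xi$ one has $D_\xi\varpi=2k_2\xi$, $D^2_{\xi\xi}\varpi=2k_2$, so, separating the $\xi$-quadratic from the constant part, \eqref{eq: first corrector equation} is equivalent to the $\S^d$-identity $\tfrac12\sigma_S\sigma_S^\top\partial_{xx}v^0+\tfrac1{\partial_x v^0}k_2\Lambda^{-1}k_2=0$ together with $a=\Tr{c_{\theta^0}k_2}$, i.e.\ \eqref{eq:a}. Writing $M:=\Lambda^{-1/2}\sigma_S\sigma_S^\top\Lambda^{-1/2}$, positive definite by \eqref{eq: ellipticity}, one has $k_2=\tfrac{\partial_x v^0}{\sqrt{2R}}\Lambda^{1/2}M^{1/2}\Lambda^{1/2}$ with $R=-\partial_x v^0/\partial_{xx}v^0>0$ (Assumption~\reff{ass: v smooth}); since $\Lambda^{1/2}M\Lambda^{1/2}=\sigma_S\sigma_S^\top$ a one-line computation gives $k_2\Lambda^{-1}k_2=\tfrac{(\partial_x v^0)^2}{2R}\sigma_S\sigma_S^\top=-\tfrac12\partial_x v^0\,\partial_{xx}v^0\,\sigma_S\sigma_S^\top$, which is the required identity. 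Finally $k_2\in C^{1,2}$ because $\sigma_S,\Lambda,v^0\in C^{1,2}$ and the symmetric matrix square root is smooth on the cone of positive definite matrices; local boundedness of $a$ then follows from \eqref{eq: def delta multi d}.

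\medskip
\noindent\textbf{Sub-/supersolution along $\theta^0$.} The plan is the perturbed test function method of Evans \cite{Evans92}, extended to the extra variable $\vartheta$. I detail the subsolution property of $w^\ast:=\bar u^\ast(\cdot,\theta^0(\cdot))$ (which is u.s.c., $\bar u^\ast$ being u.s.c.\ and $\theta^0$ continuous); $w_\ast:=\bar u_\ast(\cdot,\theta^0(\cdot))$ is symmetric. Fix $\phi\in C^{1,2}$ and a strict local maximiser $\zeta_0\in\Dfi$ of $w^\ast-\phi$ with value $0$, and suppose for contradiction $(-\Lc^{\theta^0}\phi-a)(\zeta_0)>0$. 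For small $\eps\in(0,1)$ set $k_2^\eps:=(1+\eps)k_2$,
\[
\Phi_\eps(\zeta,\vartheta):=\phi(\zeta)+(\vartheta-\theta^0(\zeta))^\top k_2^\eps(\zeta)(\vartheta-\theta^0(\zeta)),\qquad \Psi^\lambda_\eps:=v^0-\lambda^{1/2}\Phi_\eps .
\]
The crucial structural point is that the $\lambda^{1/4}$- and $\lambda$-scalings in \eqref{eq: formal taylor expansion of v eps}--\eqref{eq: fast variable} collapse, $\lambda\,\varpi\circ\xib_\lambda=\lambda^{1/2}(\vartheta-\theta^0)^\top k_2(\vartheta-\theta^0)$, so $\Psi^\lambda_\eps$ is a genuine $C^{1,2}$ function, a $\lambda^{1/2}$-perturbation of $v^0$. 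Inserting $\Psi^\lambda_\eps$ into the frictional operator in the form \eqref{eq: PDE for v eps simplified} (legitimate near $(\zeta_0,\theta^0(\zeta_0))$ since $\partial_x\Psi^\lambda_\eps=\partial_x v^0+O(\lambda^{1/2})>0$, so the relaxation of $\Hc^\lambda$ is superfluous by Remark~\ref{rem: simplification HJB with friction}), using $-\Lc^\vartheta v^0=\tfrac12(\vartheta-\theta^0)^\top\sigma_S\sigma_S^\top(\vartheta-\theta^0)(-\partial_{xx}v^0)$ (from \eqref{eq: PDE friction less simplified with pf} and \eqref{eq: optimum control frictionless}) and the matrix identity above, the $O(1)$-in-$\lambda$ terms collapse to $-(2\eps+\eps^2)\tfrac12(\vartheta-\theta^0)^\top\sigma_S\sigma_S^\top(\vartheta-\theta^0)(-\partial_{xx}v^0)\le0$, while on $\{\vartheta=\theta^0(\zeta)\}$ the $\lambda^{1/2}$-term equals $\lambda^{1/2}\big(\Lc^{\theta^0}\phi+(1+\eps)a\big)$, the constant $a$ reappearing exactly because $\tfrac12\Tr{\sigma_{\theta^0}\sigma_{\theta^0}^\top D^2_\zeta\big((\vartheta-\theta^0(\cdot))^\top k_2(\vartheta-\theta^0(\cdot))\big)}\big|_{\vartheta=\theta^0}=\Tr{c_{\theta^0}k_2}=a$. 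As $a\ge0$, choosing $\eps$ small makes $\Lc^{\theta^0}\phi(\zeta_0)+(1+\eps)a(\zeta_0)<0$, and a routine estimate (using local ellipticity) yields a ball $B$ around $(\zeta_0,\theta^0(\zeta_0))$ and $\lambda_0>0$ with $-\Lc^\vartheta\Psi^\lambda_\eps-\Hc^\lambda\Psi^\lambda_\eps<0$ on $B$ for $\lambda\in(0,\lambda_0]$: $\Psi^\lambda_\eps$ is a classical strict subsolution of \eqref{eq: viscosity solution for friction case} on $B$.

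\medskip
\noindent\textbf{Localisation and conclusion.} Since $v^\lambda-\Psi^\lambda_\eps=\lambda^{1/2}(\Phi_\eps-\bar u^\lambda)$, a minimiser over the compact set $\overline B$ of $(v^\lambda)_\ast-\Psi^\lambda_\eps$ is a maximiser $(\zeta_\lambda,\vartheta_\lambda)$ of the u.s.c.\ function $(\bar u^\lambda)^\ast-\Phi_\eps$, which exists and is finite for $\lambda$ small by Assumption~\reff{ass: u locally bounded from above}. Along a subsequence $(\zeta_\lambda,\vartheta_\lambda)\to(\bar\zeta,\bar\vartheta)\in\overline B$, and from the maximality together with the definition of the relaxed limsup one gets $\bar u^\ast(\bar\zeta,\bar\vartheta)\ge\Phi_\eps(\bar\zeta,\bar\vartheta)$. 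With $\bar u^\ast(\bar\zeta,\theta^0(\bar\zeta))=w^\ast(\bar\zeta)\le\phi(\bar\zeta)$ and the a priori quadratic bound
\begin{equation*}
\bar u^\ast(\zeta,\vartheta)\ \le\ \bar u^\ast(\zeta,\theta^0(\zeta))+(\vartheta-\theta^0(\zeta))^\top k_2(\zeta)(\vartheta-\theta^0(\zeta)) \tag{$\ast$}
\end{equation*}
this forces $(\bar\vartheta-\theta^0(\bar\zeta))^\top k_2(\bar\zeta)(\bar\vartheta-\theta^0(\bar\zeta))=0$ (coefficient $1$ versus $1+\eps$), hence $\bar\vartheta=\theta^0(\bar\zeta)$ since $k_2$ is positive definite near $\zeta_0$ by \eqref{eq: ellipticity}, and then $w^\ast(\bar\zeta)=\phi(\bar\zeta)$, hence $\bar\zeta=\zeta_0$. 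So $(\zeta_\lambda,\vartheta_\lambda)$ is interior to $B$ for small $\lambda$ along the subsequence, and the supersolution property of $v^\lambda$ (Standing Assumption~\ref{prop:dpefric}) gives $-\Lc^\vartheta\Psi^\lambda_\eps-\Hc^\lambda\Psi^\lambda_\eps\ge0$ there, contradicting the previous paragraph; hence $(-\Lc^{\theta^0}\phi-a)(\zeta_0)\le0$. The terminal condition $w^\ast=0$ on $\Dfb$ is immediate: by the choice of the liquidation penalty, $\bar u^\lambda(\zeta,\vartheta)=(\vartheta-\theta^0(\zeta))^\top k_2(\zeta)(\vartheta-\theta^0(\zeta))$ exactly for $\zeta\in\Dfb$, vanishing at $\vartheta=\theta^0$. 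The supersolution property of $w_\ast$ follows verbatim with $k_2^\eps:=(1-\eps)k_2$, the reverse of $(\ast)$ for $\bar u_\ast$, and the subsolution property of $v^\lambda$.

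\medskip
\noindent\textbf{Main obstacle.} Everything above is routine except the a priori bound $(\ast)$ and its $\bar u_\ast$-analogue: confining the localising extremum near $(\zeta_0,\theta^0(\zeta_0))$ requires control of the $\vartheta$-profile of the semilimits off the manifold $\{\vartheta=\theta^0(\zeta)\}$, which is genuinely new here — for proportional and fixed costs \cite{soner.touzi.13,altarovici.al.13} the semilimits carry no $\vartheta$-dependence because a bulk trade is free at leading order, whereas with price impact it is not. This is precisely what the Eikonal analysis is for: $\bar u^\ast$ and $\bar u_\ast$ are sub- and supersolutions of $(D_\vartheta\bar u)^2=\nf$ with $\nf$ the perfect-square quadratic whose smooth solution is $\varpi$, and the comparison principle for nonnegative solutions pins their $\vartheta$-dependence to $\varpi$ up to a $\vartheta$-independent term. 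Alternatively $(\ast)$ can be read off from a suboptimality argument: from $\vartheta$, liquidating quickly towards $\theta^0(\zeta)$ by Sch\"oneborn's program costs, at leading order, the cash amount $\lambda^{1/2}\Pf(\zeta,\vartheta)$, whose marginal-utility value $\lambda^{1/2}\partial_x v^0(\zeta)\Pf(\zeta,\vartheta)=\lambda^{1/2}(\vartheta-\theta^0(\zeta))^\top k_2(\zeta)(\vartheta-\theta^0(\zeta))$ is exactly the quadratic appearing in $(\ast)$ — $k_2$ was built from the same matrices as $\Pf$ for this reason.
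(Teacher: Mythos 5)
Your verification of the first corrector equation is correct and coincides with the paper's (a direct computation using $\Lambda^{1/2}M\Lambda^{1/2}=\sigma_S\sigma_S^\top$), and the algebra in your interior argument (the $(1\pm\eps)k_2$ perturbation making the $O(1)$ terms collapse with the right sign, and $a=\Tr{c_{\theta^0}k_2}$ reappearing from $\frac12\Tr{\sigma_{\theta^0}\sigma_{\theta^0}^\top D^2_\zeta(\cdot)}$ on the manifold) is sound. But the proof is not complete, because the two steps you lean on are exactly the hard ones. First, the a priori envelope $(\ast)$ (and its reverse for $\bar u_\ast$) is not something you may assume at this stage: in the paper it is the \emph{conclusion} of the entire Eikonal analysis (Lemmas \ref{lem: gradient constraint for bar u}--\ref{lem: u indep pi viscosity uf and vf} and Proposition \ref{prop: u indep pi comparison}), including the nonstandard comparison argument for nonnegative solutions obtained through the exponential transformation. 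Invoking it here is not circular for interior points, but it means your proof of Lemma \ref{lem: explicit resolution 1st corrector} imports the heaviest machinery of Section \ref{sec: Eikonal Equation}, whereas the paper deliberately proves this lemma \emph{without} $(\ast)$: the subsolution case is localized by adding the quartic penalty $c_o|\vartheta-\theta^0(\zeta)|^4$ (whose sign permits it) and controlling $\xib_\veps$ at the test points via ellipticity and the remainder estimates of Lemma \ref{lem: remainder estimate}; the supersolution case, where no such penalty is admissible, is handled by truncating $\varpi$ with the cutoffs $H^{\eta,\delta}$ and the bump $f$, and a three-parameter limit. Your alternative justification of $(\ast)$ by "liquidating quickly towards $\theta^0$ at cost $\lambda^{1/2}\Pf$" is only a heuristic: to turn it into a proof you would need to construct an admissible concatenated strategy, show its utility loss is $\lambda^{1/2}\partial_x v^0\,\Pf+o(\lambda^{1/2})$ locally uniformly (with moving coefficients, not the frozen ones of Sch\"oneborn's program), and control the value function at the post-liquidation state; none of this is provided, and no such verification argument appears in the paper at this level of generality.

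Second, your treatment of the terminal condition is wrong as stated. The pointwise identity $\bar u^\lambda(T,s,y,x,\vartheta)=\xib_1^\top k_2\xib_1$ does not give $\bar u^\ast(\zeta,\theta^0(\zeta))\le 0$ for $\zeta\in\Dfb$, because the relaxed limsup in \eqref{eq: def bar u ast} runs over interior points $t'<T$ approaching $T$ as $\lambda\to0$, and nothing in the pointwise terminal value excludes a strictly positive limit from that boundary layer. Ruling this out is precisely Proposition \ref{prop: terminal condition for u}, which requires a separate test function containing the term $\delta\frac{T-t}{T-t_\veps}$, the quadratic penalty $c_o|\vartheta-\theta^0(\zeta)|^2$ calibrated against the ellipticity constant via \reff{eq: terminal choice c_o}, and the remainder estimates, to produce the contradiction $\delta/(T-t_\veps)\le C$. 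Since the Second Corrector Equation \eqref{eq: 2nd corrector equation} includes the condition $u=0$ on $\Dfb$, this boundary argument is part of the claim and cannot be dismissed as immediate; note also that the paper's Eikonal Lemma \ref{lem: gradient for u at the boundary} itself reuses that argument, so you cannot recover it from $(\ast)$ either.
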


\proof
Under \eqref{ass: v smooth}, the first part of the assertion is readily verified by direct computation. For the second part, first notice that the relaxed semilimits are finite by Assumption \reff{ass: u locally bounded from above} and are upper- resp.\ lower-semicontinuous by definition. Using Assumptions \reff{ass: v smooth} and \reff{ass: u locally bounded from above},
            we show in Propositions~\ref{prop: sub sol for u}, \ref{prop: super sol for u}, and \ref{prop: terminal condition for u}
            that
            $\zeta\in\Df\longmapsto \bar{u}^\ast(\zeta,\theta^0(\zeta))$
            and $\zeta\in\Df\longmapsto \bar{u}_\ast(\zeta,\theta^0(\zeta))$ are viscosity sub- resp.\ supersolutions of the Second Corrector Equation \reff{eq: 2nd corrector equation} with $a$ defined as in \eqref{eq:a}.
            \ep

        \begin{remark}
            For later use, observe that the function $\varpi$ satisfies, for all $\xi\in\R^d$:
          \beq\label{eq: estimate for w}
            \frac{(\abs\varpi+|D_{(t,\zeta)}\varpi|)(\cdot,\xi)}{1+\abs\xi^2}
            +\frac{(|D_\xi\varpi|+|D_{(t,\zeta)}(D_{\xi}\varpi)|)(\cdot,\xi)}{1+\abs\xi}
            +|D^2_{\xi\xi}\varpi|(\cdot,\xi)
            \le \varrho,
            \quad\mbox{on } \Df,
          \eeq
            for some continuous function $\varrho:\Df\rightarrow\R$.
        \end{remark}

We now state our main result, which determines the leading-order coefficient of the value function, under the Assumption~\ref{ass: u locally bounded from above} that the first nontrivial term in its expansion is of order $O(\lambda^{1/2})$:

\begin{theorem}{\bf (Expansion of the Value Function)}\label{theo: main result}
Suppose Assumptions \ref{prop:dpefric} and {\rm A} are satisfied. 
Then, for any initial data $(\zeta,\vartheta)\in\Df\x\R^d$:
$$\bar u^\lambda(\zeta,\vartheta) \longrightarrow  u(\zeta)+\varpi\left(\zeta,\vartheta-\theta^0(\zeta)\right),$$
locally uniformly as $\lambda \to 0$. That is, the frictional value function $v^\lambda(\zeta,\vartheta)$ has the expansion
$$v^\lambda(\zeta,\vartheta)=v^0(\zeta)-\lambda^{1/2}(u(\zeta)+\varpi\left(\zeta,\vartheta-\theta^0(\zeta)\right)+o(\lambda^{1/2}).$$
\end{theorem}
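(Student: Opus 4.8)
The plan is to use the Barles--Perthame relaxed semilimits approach together with the comparison principle postulated in Assumption~\ref{ass: comparison for u}, reducing the problem to the one-dimensional ``original variable'' $\zeta$ by quotienting out the quadratic $\vartheta$-dependence identified in Lemma~\ref{lem: explicit resolution 1st corrector}. Concretely, set $w(\zeta) := \bar u^\ast(\zeta,\theta^0(\zeta))$ and $\underline w(\zeta) := \bar u_\ast(\zeta,\theta^0(\zeta))$. By Lemma~\ref{lem: explicit resolution 1st corrector}, these are a viscosity subsolution and supersolution of the Second Corrector Equation \eqref{eq: 2nd corrector equation}. The hypothesis \eqref{ass: u locally bounded from above} guarantees they are finite, and they are upper- resp.\ lower-semicontinuous by construction; moreover they lie in the class $\Cc$ of Assumption~\eqref{ass: comparison for u}. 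The comparison principle then yields $w \le u \le \underline w$ on $\Df$, where $u$ is the given solution of \eqref{eq: 2nd corrector equation}. Since trivially $\underline w \le w$ pointwise (liminf $\le$ limsup), we conclude $\bar u_\ast(\cdot,\theta^0(\cdot)) = u(\cdot) = \bar u^\ast(\cdot,\theta^0(\cdot))$.

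The second step is to propagate this equality along $\theta^0$ to the full $\vartheta$-dependence using the First Corrector Equation. The key observation is that the relaxed semilimits of $\bar u^\lambda(\zeta,\vartheta) - \varpi(\zeta,\vartheta-\theta^0(\zeta))$ should be independent of $\vartheta$; this is exactly the content of the Eikonal-type analysis outlined in the introduction. I would show that $\bar u^\ast(\cdot,\cdot)$ is a viscosity subsolution and $\bar u_\ast(\cdot,\cdot)$ a viscosity supersolution of the First Corrector Equation \eqref{eq: first corrector equation} (viewed as an equation in the $\xi = \vartheta - \theta^0(\zeta)$ variable with frozen $\zeta$, with the constant $a$ pinned by the second corrector). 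The comparison result for nonnegative solutions of the Eikonal equation mentioned in the introduction — obtained via the transformation technique — then forces both semilimits to equal $\varpi(\zeta,\vartheta - \theta^0(\zeta))$ plus their common value at $\vartheta = \theta^0(\zeta)$, which by Step~1 is $u(\zeta)$. Hence $\bar u^\ast(\zeta,\vartheta) = \bar u_\ast(\zeta,\vartheta) = u(\zeta) + \varpi(\zeta,\vartheta - \theta^0(\zeta))$ for all $(\zeta,\vartheta)$.

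Finally, equality of the upper and lower relaxed semilimits is a standard criterion (see, e.g., Barles--Perthame) that upgrades to locally uniform convergence of $\bar u^\lambda$ to the common limit, since $\bar u^\lambda$ is locally bounded by Assumption~\eqref{ass: u locally bounded from above}; this is the statement $\bar u^\lambda(\zeta,\vartheta) \to u(\zeta) + \varpi(\zeta,\vartheta - \theta^0(\zeta))$. Multiplying through by $\lambda^{1/2}$ and rearranging \eqref{eq:baru} gives the claimed expansion of $v^\lambda$.

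The main obstacle is Step~2: showing that the relaxed semilimits satisfy the First Corrector Equation in the fast variable and then invoking a comparison principle for it. Unlike the Second Corrector Equation, the First Corrector is of Eikonal type and has \emph{no} general comparison principle; one must exploit nonnegativity of the solutions and the specific quadratic structure of $\nf$ via the transformation technique alluded to in the introduction. The technical heart is the test-function argument: given a smooth function touching $\bar u^\ast$ (or $\bar u_\ast$) from above (below) at a point, one perturbs it into a test function for the frictional DPE \eqref{eq: PDE for v eps simplified}, uses the ansatz scalings \eqref{eq: formal taylor expansion of v eps}--\eqref{eq: fast variable} to extract the leading-order terms, and carefully controls the error terms — in particular handling the interplay between the slow variable $\zeta$ and the fast variable $\xi$, and the fact that $\theta^0$ itself is a nontrivial $C^{1,2}$ function of $\zeta$ whose quadratic variation $c_{\theta^0}$ enters \eqref{eq: first corrector equation}. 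This requires the regularity $\sigma_S \in C^{1,2}$ (hence $k_2 \in C^{1,2}$) and the smoothness of $v^0$, $\theta^0$ from Assumption~\eqref{ass: v smooth}, together with the growth estimate \eqref{eq: estimate for w} on $\varpi$.
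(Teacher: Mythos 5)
Your proposal follows essentially the same route as the paper's proof in Section~\ref{sec: PDE derivation}: comparison for the Second Corrector Equation along $\theta^0$ (your Step~1 is exactly the combination of Lemma~\ref{lem: explicit resolution 1st corrector}, Propositions~\ref{prop: sub sol for u}--\ref{prop: terminal condition for u}, and Assumption~\reff{ass: comparison for u}), then an Eikonal-type comparison in the $\vartheta$-variable to pin the remaining dependence to $\varpi$ (Proposition~\ref{prop: u indep pi comparison}), and finally the standard Barles--Perthame upgrade from equality of the semilimits to locally uniform convergence.

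One imprecision in your Step~2 is worth flagging, because as literally stated it would not go through: the relaxed semilimits $\bar u^\ast,\bar u_\ast$ are \emph{not} sub-/supersolutions of the First Corrector Equation~\reff{eq: first corrector equation} in the fast variable --- that second-order equation is what the explicit quadratic $\varpi$ solves (Lemma~\ref{lem: explicit resolution 1st corrector}); the terms $\Tr{c_{\theta^0}D^2_{\xi\xi}\cdot}$ and $a$ only appear when testing against functions of the rescaled variable $\xib_\veps$, which is how the Second Corrector Equation is obtained along $\theta^0$, not how the $\vartheta$-dependence is identified. What the frictional DPE yields for the semilimits at fixed $\zeta$ is the first-order Eikonal inequality $(D_\vartheta\bar u)^\top\Lambda^{-1}D_\vartheta\bar u\le\nf$ (sub) resp.\ $\ge\nf$ (super), with $\nf=-2\partial_xv^0\,\partial_{xx}v^0\abs{\xib_1^\top\sigma_S}^2$ as in~\eqref{eq:n} (Lemmas~\ref{lem: gradient constraint for bar u} and~\ref{lem: gradient for u just in pi}). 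The key structural fact is that, for fixed $\zeta$, the map $\vartheta\mapsto c+\xib_1(\zeta,\vartheta)^\top k_2(\zeta)\xib_1(\zeta,\vartheta)$ is an \emph{exact} solution of this Eikonal equation, so the comparison principle --- established after the transformation $\bar\uf=-e^{-\bar u}$, which places everything in the bounded class $\Cc^-_M$ (Lemmas~\ref{lem: u indep pi comparion for H} and~\ref{lem: u indep pi viscosity uf and vf}) --- sandwiches $\bar u_\ast(\zeta,\theta^0(\zeta))+\varpi\circ\xib_1\le\bar u_\ast\le\bar u^\ast\le\bar u^\ast(\zeta,\theta^0(\zeta))+\varpi\circ\xib_1$, which collapses by your Step~1. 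Finally, at $t=T$ the Eikonal property only holds in the generalized (min/max) boundary sense of Lemma~\ref{lem: gradient for u at the boundary}, a point your outline leaves implicit but which is needed to run the comparison on all of $\Df$.
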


    The lengthy proof of this result is postponed to Section~\ref{sec: PDE derivation}.

 \begin{remark}
In view of the explicit formula in Lemma~\ref{lem: explicit resolution 1st corrector}, the penalty for deviations of the initial portfolio $\vartheta$ from the frictionless target $\theta^0$ is given by
$$\lambda^{1/2}\varpi\left(\zeta,\vartheta-\theta^0(\zeta)\right)=\lambda^{1/2}\frac{\partial_x v^0(\zeta)}{\sqrt{2 R(\zeta)}}(\vartheta-\theta^0(\zeta))^\top\left(\left(\Lambda^{1/2}(\Lambda^{-1/2}\sigma_S \sigma_S^{\top} \Lambda^{-1/2}\right)^{1/2}(\zeta)\right)(\vartheta-\theta^0(\zeta)).
$$
Hence, it is negligible at the leading order $O(\lambda^{1/2})$ for initial positions $\vartheta$ sufficiently close to the frictionless optimizer $\theta^0(\zeta)$.
\end{remark}

\begin{remark}
By Lemma~\ref{lem: explicit resolution 1st corrector}, the term $a$ from the First Corrector Equation \reff{eq: first corrector equation} is nonnegative. Hence, if the regularity conditions of \cite[Remark 5.7.8]{karatzas.shreve.91} or, more generally \cite[Chapter I]{friedman.64} are satisfied, a smooth classical solution of the Second Corrector Equation \reff{eq: 2nd corrector equation} exists. It admits the Feynman-Kac representation
         \begin{equation}\label{eq: definition of u}
         \begin{split}
            u(\zeta)&=\Esp{\int_t^T a\left(r, S^{t,s,y}_r,Y^{t,y}_r,X^{\zeta,\theta^0}_r\right)dr},\\
            &=\E
        \left[
            \int_t^T \left(\frac{\partial_x v^0}{\sqrt{2 R}}
                \Tr{
                  \frac{d\langle\theta^0\rangle_r}{dr} \Lambda^{1/2}(\Lambda^{1/2}\sigma_S \sigma_S^\top \Lambda^{1/2})^{1/2}\Lambda^{1/2}
                }\right)(r,S_r^{\zeta},Y_r^{\zeta},X^{\zeta,\theta^0}_r)
            dr
        \right].
          \end{split}
          \end{equation}
          Here, $X^{\zeta,\theta^0}$ denotes the optimal frictionless wealth process and $R(\zeta):=-\partial_xv^0(\zeta)/\partial_{xx}v^0(\zeta)$ represents the risk tolerance of the frictionless indirect utility function; the second equality in \eqref{eq: definition of u} follows from the explicit formula for $a$ in Lemma~\ref{lem: explicit resolution 1st corrector}.

          Conversely, if the frictionless solution and in turn \eqref{eq: definition of u} are sufficiently regular, then the probabilistic representation \eqref{eq: definition of u} provides a solution of the Second Corrector Equation \reff{eq: 2nd corrector equation}. This is exploited in Section \ref{sec:final}.
\end{remark}

\begin{remark}\label{rem:repQ}
As is well known, the dual minimizer for the frictionless version of the problem is typically the density process of a dual martingale measure $\Q$ (the ``marginal pricing measure''). It is given by $\partial_x v^0(r,S_r,Y_r,X_r)/\partial_x v^0(t,s,y,x)$, the normalized wealth-derivative of the corresponding value function, evaluated along the optimal frictionless wealth process (see, e.g., Section \ref{sec:final} for a simple example; compare \cite{schachermayer.01} for a general setting). If the initial portfolio equals the frictionless target, $\vartheta=\theta^0(\zeta)$, Theorem \ref{theo: main result}, \eqref{eq: definition of u}, and a first-order Taylor expansion therefore show that
\begin{align*}
v^\lambda(t,s,y,x,\vartheta)= v^0\Big(t,s,y,x-\mathrm{CE}(t,s,y,x)\Big)+o(\lambda^{1/2}),
        \end{align*}
        where
        $$
        \mathrm{CE}(\zeta)=\mathbb{E}_\Q\left[\lambda^{1/2}
            \int_t^T \frac{\Tr{
                  \frac{d\langle\theta^0\rangle_r}{dr} \Lambda^{1/2}(\Lambda^{1/2}\sigma_S \sigma_S^\top \Lambda^{1/2})^{1/2}\Lambda^{1/2}}
                }{\sqrt{2 R}}(r,S_r^{\zeta},Y_r^{\zeta},X^{\zeta,\theta^0}_r)
            dr
        \right].$$
        Hence, the certainty equivalent loss $\mathrm{CE}$ due to small price impact is given by the above $\Q$-expectation. This is the amount of initial endowment the investor would give up to trade without frictions. For a single risky asset, Formula \eqref{eq:cevloss} from the introduction obtains.

\end{remark}

Under the sufficient Condition {\rm B} for the abstract Assumption {\rm A} provided in Section \ref{sec: condition for Assumption B}, we can also produce an ``almost optimal'' policy that achieves the leading-order optimal performance in Theorem~\ref{theo: main result}:

\begin{theorem}{\bf(Almost Optimal Policy)}\label{theo: main result 2}
Suppose Assumptions \ref{prop:dpefric} and {\rm B} are satisfied. Then, the feedback control
$$\dot{\theta}^\Lambda(\zeta,\vartheta)=\lambda^{-1/2}\left(\frac{\Lambda^{-1/2}(\Lambda^{-1/2}\sigma_S \sigma_S^\top \Lambda^{-1/2})^{1/2}\Lambda^{1/2}}{(2 R)^{1/2}}\right)(\zeta)(\theta^0(\zeta)-\vartheta), \quad  \zeta \in \Df,\ \vartheta\in\R^d,$$
is optimal at the leading order $O(\lambda^{1/2})$, where $R(\zeta)=-\partial_xv^0(\zeta)/\partial_{xx}v^0(\zeta)$ denotes the risk tolerance of the frictionless value function $v^0$. For a single risky asset ($d=1$), this formula simplifies to
$$
    \dot{\theta}^\Lambda(\zeta,\vartheta)=
        \sqrt{\left(\frac{\sigma_S^2}{2\lambda\Lambda R}\right)(\zeta)}(\theta^0(\zeta)-\vartheta),
$$
in accordance with \eqref{eq:rate}.
\end{theorem}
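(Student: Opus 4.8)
The plan is a verification argument for the explicit feedback control $\dot\theta^\Lambda$, using the first two terms of the expansion in Theorem~\ref{theo: main result} as the ``candidate value function''. Since Condition~{\rm B} implies Assumption~{\rm A} (Section~\ref{sec: condition for Assumption B}), Theorem~\ref{theo: main result} yields $v^\lambda(\zeta,\vartheta)=v^0(\zeta)-\lambda^{1/2}\big(u(\zeta)+\varpi(\zeta,\vartheta-\theta^0(\zeta))\big)+o(\lambda^{1/2})$; as the performance $J^\lambda(\zeta,\vartheta)$ of $\dot\theta^\Lambda$ (well defined and admissible by Condition~{\rm B}) trivially satisfies $J^\lambda\le v^\lambda$, it suffices to establish the reverse inequality up to $o(\lambda^{1/2})$, i.e.\ that $J^\lambda(\zeta,\vartheta)=\phi^\lambda(\zeta,\vartheta)+o(\lambda^{1/2})$ for $\phi^\lambda(\zeta,\vartheta):=v^0(\zeta)-\lambda^{1/2}\big(u(\zeta)+\varpi(\zeta,\vartheta-\theta^0(\zeta))\big)$, whose derivatives are locally bounded thanks to the regularity afforded by Condition~{\rm B} and the bound \eqref{eq: estimate for w}. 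I would first record that $\phi^\lambda$ matches the frictional terminal payoff \emph{exactly}: since $u=0$ on $\Dfb$ and the explicit $k_2$ of Lemma~\ref{lem: explicit resolution 1st corrector} gives $\varpi\big(T,\vartheta-\theta^0(T,\cdot)\big)=U'(x)\,\Pf(T,s,y,x,\vartheta)$ (using $\partial_xv^0(T,s,y,x)=U'(x)$), one gets $\phi^\lambda(T,s,y,x,\vartheta)=U(x)-\lambda^{1/2}U'(x)\,\Pf(T,s,y,x,\vartheta)$.

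Next I would let $(\hat S,\hat Y,\hat X,\hat\theta)$ be the state process driven by $\dot\theta^\Lambda$ from $(\zeta,\vartheta)$, write $\hat\zeta_r=(r,\hat S_r,\hat Y_r,\hat X_r)$, apply Itô's formula to $\phi^\lambda(\hat\zeta_r,\hat\theta_r)$ on $[t,T]$ and take expectations; the martingale part has zero mean and, identifying $\phi^\lambda(\hat\zeta_T,\hat\theta_T)$ with the terminal payoff,
$$J^\lambda(\zeta,\vartheta)=\phi^\lambda(\zeta,\vartheta)+\E\!\left[\int_t^T\!\Big(\Lc^{\hat\theta_r}\phi^\lambda+\dot\theta^\Lambda_r\!\cdot\! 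D_\vartheta\phi^\lambda-\lambda(\dot\theta^\Lambda_r)^\top\Lambda\dot\theta^\Lambda_r\,\partial_x\phi^\lambda\Big)(\hat\zeta_r,\hat\theta_r)\,dr\right].$$
The crucial point is that the integrand is $o(\lambda^{1/2})$. Indeed, by \eqref{eq:strategy2} the control $\dot\theta^\Lambda$ coincides with the pointwise maximizer in the operator $\Hc^\lambda\phi^\lambda$ of \eqref{eq: non linear operator} up to the $O(\lambda^{1/2})$ discrepancy $\partial_x\phi^\lambda-\partial_xv^0$, and the (concave quadratic) maximand is flat to second order at its argmax, so the integrand equals $-\big(\Lc^{\hat\theta_r}\phi^\lambda+\Hc^\lambda\phi^\lambda\big)(\hat\zeta_r,\hat\theta_r)+O(\lambda)$. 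Expanding $-(\Lc^{\hat\theta}\phi^\lambda+\Hc^\lambda\phi^\lambda)$ in powers of $\lambda^{1/4}$ exactly as in the heuristic derivation of Section~\ref{sec:heuristics} (with $v^\lambda$ replaced by $\phi^\lambda$): the $\lambda^0$-term vanishes by $\Lc^{\theta^0}v^0=0$ (Remark~\ref{rem: simplification HJB frictionless}), the $\lambda^{1/4}$-term by the first-order condition \eqref{eq: optimum control frictionless}, and the $\lambda^{1/2}$-term by the First and Second Corrector Equations \eqref{eq: first corrector equation}--\eqref{eq: 2nd corrector equation}. Since all coefficients are smooth and $\varpi$ is quadratic in $\vartheta-\theta^0$, the remainder has the form $\lambda^{1/2}r_\lambda\big(\hat\zeta_r,\xib_\lambda(\hat\zeta_r,\hat\theta_r)\big)$, with $r_\lambda\to0$ locally uniformly and $|r_\lambda(\zeta,\xi)|\le C(\zeta)\lambda^{1/4}(1+|\xi|^4)$, where $\xib_\lambda$ is the fast variable \eqref{eq: fast variable}.

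The hard part will be turning this pointwise bound into $\E\big[\int_t^T(\cdots)\,dr\big]=o(\lambda^{1/2})$, which requires an a priori estimate on the rescaled deviation $\xib_\lambda(\hat\zeta_r,\hat\theta_r)=\lambda^{-1/4}\big(\hat\theta_r-\theta^0(\hat\zeta_r)\big)$, uniform in $r\in[t,T]$ and in small $\lambda$. Here the mean-reverting structure of $\dot\theta^\Lambda$ is decisive: the deviation solves $d\big(\hat\theta_r-\theta^0(\hat\zeta_r)\big)=-\lambda^{-1/2}M(\hat\zeta_r)\big(\hat\theta_r-\theta^0(\hat\zeta_r)\big)\,dr-d\theta^0(\hat\zeta_r)$, where $M=(2R)^{-1/2}\Lambda^{-1/2}(\Lambda^{-1/2}\sigma_S\sigma_S^\top\Lambda^{-1/2})^{1/2}\Lambda^{1/2}$ is uniformly bounded and uniformly positive definite under Condition~{\rm B}, while $r\mapsto\theta^0(\hat\zeta_r)$ has drift and diffusion coefficients of order $O(1)$; a Gronwall-type energy estimate then gives $\E\big[\sup_{r\in[t,T]}|\hat\theta_r-\theta^0(\hat\zeta_r)|^{2p}\big]=O(\lambda^{p/2})$ for every $p\ge1$, i.e.\ all moments of $\xib_\lambda$ are bounded uniformly in $\lambda\in(0,1]$. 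Combining these with the integrability of $(\hat S,\hat Y,\hat X)$ from Condition~{\rm B} yields $\E\big[\int_t^T|r_\lambda(\hat\zeta_r,\xib_\lambda(\hat\zeta_r,\hat\theta_r))|\,dr\big]=O(\lambda^{1/4})\to0$, so the integral term is $o(\lambda^{1/2})$ and $J^\lambda(\zeta,\vartheta)=\phi^\lambda(\zeta,\vartheta)+o(\lambda^{1/2})$, matching the expansion of $v^\lambda$ from Theorem~\ref{theo: main result}. The one-dimensional formula for $\dot\theta^\Lambda$ is finally obtained by specializing the matrix expression, exactly as in Lemma~\ref{lem: explicit resolution 1st corrector} and Section~\ref{sec:heuristics}.
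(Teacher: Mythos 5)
Your proposal follows essentially the same route as the paper's proof (Proposition \ref{prop: sufficient assumption}, Step 1, and Corollary \ref{cor: optimal strat}): apply It\^o's formula to the candidate $v^0-\lambda^{1/2}(u+\varpi\circ\xib_1)$ along the state process driven by the feedback control, use the frictionless DPE, the first-order condition, and the two corrector equations to annihilate the drift up to a remainder, match the terminal condition exactly via the explicit $k_2$ and the liquidation penalty $\Pf$, and conclude by comparing with the expansion of $v^\lambda$ from Theorem \ref{theo: main result} (which is available because Condition {\rm B} implies Assumption {\rm A}). Your observation that the control is only the maximizer of $\Hc^\lambda$ up to the $O(\lambda^{1/2})$ discrepancy between $\partial_x v^0$ and $\partial_x\phi^\lambda$ is precisely what the paper's correction term $\tilde\Rc$ in Assumption \reff{ass: sufficient control of the remainder} accounts for.

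The one place where you diverge is the treatment of the remainder: you attempt to \emph{prove}, in the general Markovian setting, that all moments of the rescaled deviation $\xib_\lambda(\hat\zeta_r,\hat\theta_r)$ are bounded uniformly in $\lambda$ via a Gronwall estimate, asserting that $r\mapsto\theta^0(\hat\zeta_r)$ has coefficients of order $O(1)$. That assertion fails in general: the coefficients of $\theta^0$ evaluated along the controlled path depend on the (unbounded) wealth $\hat X_r$ and are only locally bounded, so the claimed bound $\E[\sup_r|\hat\theta_r-\theta^0(\hat\zeta_r)|^{2p}]=O(\lambda^{p/2})$ does not follow from the standing hypotheses. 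The paper avoids this by making the needed control part of the hypotheses --- Assumptions \reff{ass: sufficient growth condition} and \reff{ass: sufficient control of the remainder} directly postulate the integrability allowing dominated convergence along localizing stopping times and the bound $\E[\int_t^T|\Rc^\veps_\Lc+\tilde\Rc|\,dr]\le\veps\beta$ --- and verifies them only in the concrete models of Section \ref{sec:final}, where exactly your Gronwall/explicit-ODE argument is carried out under uniform boundedness of all coefficients. Since Condition {\rm B} is assumed in the theorem, you should simply invoke \reff{ass: sufficient control of the remainder} and \reff{ass: sufficient growth condition} at this point rather than re-deriving them; with that substitution your argument is complete and coincides with the paper's.
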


This result is proved in Section~\ref{sec: condition for Assumption B}.

\section{Interpretation and Application}

In this section, we discuss the interpretation of our main results, their connections to the extant literature on portfolio choice with market frictions, and how they can be applied to determine utility-based option prices and hedging strategies. For simplicity, we mostly focus on the case of a single risky asset ($d=1$), and refer the interested reader to Guasoni and Weber \cite{guasoni.weber.13} for a detailed discussion of portfolio choice in a multivariate Black-Scholes model with price impact.

\subsection{Connections to Other Portfolio Choice Models with Price Impact}\label{sec:lit}

Let us first place our results in context by comparing them to the most closely related studies from the extant literature.

Garleanu and Pedersen \cite{garleanu.pedersen.13a,garleanu.pedersen.13b} consider investors with an infinite horizon and \emph{local} mean-variance preferences, who consume trading gains immediately. These investors trade several risky assets driven by arithmetic Brownian motion with returns following a stationary Markovian state variable. In this setting, and also for time-varying risk aversion or volatility, the optimal policy is characterized by the solution of a multidimensional nonlinear ordinary differential equation (henceforth ODE). The latter can be solved in closed form if the state variable is of Ornstein-Uhlenbeck-type, risk aversion and volatility are constant, and price impact is proportional to the assets' covariance matrix.\footnote{More generally, explicit solutions in a class of policies linear in the state variable are studied by \cite{dufresne.al.12}.}

Like Garleanu and Pedersen, Almgren and Li \cite{almgren.li.11} also focus on local mean-variance preferences. For a single risky asset following arithmetic Brownian motion, traded with constant linear price impact, they study the hedging of European options.  Explicit formulas for the optimal trading rate obtain under the assumption that the option's ``Gamma'' is constant.

Guasoni and Weber \cite{guasoni.weber.12,guasoni.weber.13} study a global optimization problem, namely an investor with constant relative risk aversion who maximizes utility from terminal wealth over a long horizon. For asset prices following geometric Brownian motions and price impact inversely proportional to the (representative) investor's wealth, they characterize the optimal policy and the corresponding welfare by the solution of an Abel ODE. In the limit for small trading costs, explicit formulas obtain, that are found to provide an excellent approximation of the exact solution.

The above studies differ with respect to preferences (local vs.\ global criteria, constant absolute vs. constant relative risk aversion), asset dynamics (arithmetic vs.\ geometric Brownian motions), price impacts (proportional to number of shares vs.\ proportional to amount of wealth traded), and time horizons (infinite vs.\ finite). For small price impact parameters, the broad conclusions nevertheless are the same in each model. Indeed, consider a single risky asset for simplicity.\footnote{The discussion for several risky assets is analogous, but the formulas are more involved and harder to interpret.} Then, for small trading costs, the trading rate -- interpreted appropriately in each model -- is linear in i) the displacement from the frictionless target position and ii) a constant determined by the constant market, cost, and preference parameters.

The present study extends and unifies these results. Our optimal policy in Theorem \ref{theo: main result 2} shows that -- asymptotically -- this structure indeed applies universally, even for general Markovian dynamics of asset prices, factors, and costs, as well as for arbitrary preferences over terminal wealth. In each case, the optimal trading rate (in numbers of shares traded) is given by
\begin{equation}\label{eq:reprate}
\dot\theta^\Lambda_t=\sqrt{\frac{(\sigma^S_t)^2}{2\Lambda_t R_t}}(\theta_t-\theta^\Lambda_t).
\end{equation}
If the driving Brownian motion is arithmetic, the asset's local variance $(\sigma^S_t)^2$ is constant, so that a constant trading rate obtains for a constant price impact $\Lambda$ proportional to the number of shares traded, and constant risk tolerance $R$, in line with the results of Garleanu and Pedersen \cite{garleanu.pedersen.13a,garleanu.pedersen.13b} as well as Almgren and Li \cite{almgren.li.11}. If the driving Brownian motion is geometric, as in Guasoni and Weber~\cite{guasoni.weber.12,guasoni.weber.13}, then $(\sigma^S_t)^2=\sigma^2 S_t^2$ is proportional to the squared asset price. Hence, a constant trading rate (in terms of relative wealth turnover $\dot{\theta}_t^\Lambda S_t/X^{\theta^\Lambda}_t$) obtains if risk tolerance $R_t$ is proportional to current wealth $X^{\theta^\Lambda}_t$ (i.e., if relative risk aversion is constant), and price impact is proportional to the square of the current stock price and inversely proportional to current wealth,  $\Lambda_t=\lambda S_t^2/X^{\theta^\Lambda}_t$ as in Guasoni and Weber \cite{guasoni.weber.12,guasoni.weber.13}.

For more general preferences as well as price and cost dynamics, the same policy remains optimal if variance, risk tolerance, and impact costs are updated dynamically. These inputs are all ``myopic'', in the sense that they are determined by the frictionless problem and the current state of the model. In particular, the same leading-order corrections obtain for local preferences (as in \cite{garleanu.pedersen.13a,almgren.li.11}) and for global maximization problems (like in \cite{guasoni.weber.12} and the present study). This parallels the situation for proportional transaction costs, where local and global preferences also lead to the same leading-order corrections for small costs \cite{soner.touzi.13,kallsen.muhlekarbe.13,martin.12,kallsen.li.13}.

\subsection{Connections to the Optimal Execution Literature}

The optimal trading rate \eqref{eq:reprate} can also be connected to the optimal execution literature, which studies how to split up a single, exogenously given order efficiently.

 Indeed, the key parameter -- the square root of variance, times risk aversion, divided by two times the trading cost -- also plays a pivotal role in the analysis of Almgren and Chriss \cite{almgren.chriss.01} as well as Schied and Sch\"oneborn \cite{schied.schoeneborn.09}. This can be related to the present model for dynamic portfolio choice as follows. Suppose that the investor currently holds a position $\theta^\Lambda_t$. In the absence of frictions ($\lambda=0$), she would immediately trade towards the optimal frictionless allocation $\theta^0_t$. With price impact ($\lambda>0$),  she instead trades towards the latter at the finite absolutely continuous rate $\dot{\theta}_t^\Lambda$ from \eqref{eq:reprate}. \emph{Locally}, the latter corresponds to the optimal initial execution rate for the order $\theta^\Lambda_t-\theta^0_t$ determined by Almgren and Chriss \cite{almgren.chriss.01} as well as Schied and Sch\"oneborn \cite{schied.schoeneborn.09}.\footnote{Almgren and Chriss \cite{almgren.chriss.01} consider mean-variance preferences, whereas Schied and Sch\"oneborn \cite{schied.schoeneborn.09} extend their analysis to general von Neumann-Morgenstern utilities.} The same remains true in a multidimensional setting, where optimal execution has been studied by Schied, Sch\"oneborn, and Tehranchi \cite{schied.al.10} as well as Sch\"oneborn \cite{schoeneborn.11}.

On each infinitesimally short time interval, the dynamic portfolio choice policy therefore corresponds to the Almgren-Chriss execution path towards the frictionless target position. That is, for small price impacts, the local trade scheduling is the same, with market, price impact, and preference parameters updated dynamically over time. The key difference is that there is not a single buy or sell order to be executed here; instead one tracks a moving target that evolves dynamically over time.

\subsection{Application to Utility-Based Option Pricing and Hedging}\label{sec:options}

Suppose that the investor under consideration has constant absolute risk aversion $\eta>0$, i.e., an exponential utility function $U(x)=-e^{-\eta x}$. Then, it is well known that a random endowment $H$ at the terminal time $T$ can be absorbed by a change of measure. To wit, defining
$$\frac{d\P^H}{d\P}= \frac{e^{-\eta H}}{\mathbb{E}[e^{-\eta H}]},$$
the investor's problem is then equivalent to the pure investment problem without random endowment under the equivalent probability $\P^H$. If the change of measure leaves the structure of the model intact, random endowments can therefore be dealt with without additional difficulties.

In the present setting, suppose the investor has sold a European option with payoff $h(S_T)$ at time $T$ for a premium $p$. Then, $H=p-h(S_T)$, so that the change of measure is governed by the Radon-Nikodym derivative $d\P^H/d\P=e^{\eta h(S_T)}/\mathbb{E}[e^{\eta h(S_T)}]$. Given sufficient regularity, the Markov property implies that the corresponding density process $Z^H_t=\mathbb{E}[\frac{d\P^H}{d\P}|\mathcal{F}_t]$ is given by a function $f(t,S_t,Y_t)$ of time, the underlying, and the state variable, which can be determined from It\^o's formula and the martingale property of $Z^H$. The model dynamics under $\P^H$ can in turn be computed with Girsanov's theorem by adjusting the drift rates of prices and state variables accordingly. If $f$ and its derivatives are sufficiently regular to satisfy Condition {\rm B} also under $\mathbb{P}^H$, then our main results,
Theorems \ref{theo: main result} and \ref{theo: main result 2}, still apply. In particular, this shows that the trading rate
of Theorem \ref{theo: main result 2} is universal, in that it applies both for pure investment problems (as in \cite{garleanu.pedersen.13a,guasoni.weber.12}), and option hedging (as in \cite{almgren.li.11}). The only change is the frictionless target strategy. The expansion of the value function from Theorem~\ref{theo: main result} in turn enables us to compute first-order approximations of utility-indifference prices \`a l\`a Hodges and Neuberger \cite{hodges.neuberger.89} as well as Davis, Panas and Zariphopoulou \cite{davis.al.93}.\footnote{For proportional transaction costs, a number of corresponding results have been obtained, formally \cite{WhWi97,kallsen.muhlekarbe.13a} and rigorously \cite{bichuch.13,bouchard.al.13,PoRo13}.
}

\subsection{Connections to Models with Proportional and Fixed Transaction Costs}\label{subsec:tac}

In the above sections, we have argued that the trading rate \eqref{eq:reprate} is ubiquitous in all kinds of optimization problems with small linear price impact. Now, we want to compare this policy to its counterparts for other market frictions, namely proportional and fixed transaction costs.

At first glance, the respective policies are radically different. With linear price impact, one always trades towards the frictionless target at a finite, absolutely continuous rate. In contrast, proportional and fixed transaction costs both lead to a ``no-trade region'' around the frictionless optimizer. In this region, investors remain inactive, and only trade once its boundaries are breached. This different ``fine structure'' is a consequence of the different penalizations of trades of various sizes: the quadratic trading costs induced by linear price impact are low for small trades, so that it is optimal to trade at all times. Conversely, they are prohibitively high for large orders, so that bulk trades (as for fixed costs) or ``local-time-type'' reflection (like for proportional costs) cannot be implemented, and the displacement from the frictionless target cannot be kept uniformly small. Compared to quadratic costs, proportional trading costs punish small trades more severely, leading to a no-trade region. However, as larger trades are penalized less, the position can always be kept inside this region by reflection at the boundaries (``pushing at an infinite rate''). With fixed costs, all trades are penalized alike. Whence, infinitely many small trades become infeasible and positions are immediately rebalanced to the frictionless target once the boundaries of the no-trade region are breached.

Despite these fundamental differences, all three market frictions nevertheless induce a surprisingly similar ``coarse structure'' as we now argue informally.\footnote{These arguments could be made rigorous similarly as in \cite{kallsen.li.13}.} Indeed, with proportional transaction costs $\Lambda_t$, investors always keep their actual position in a no-trade region around the frictionless target, whose halfwidth can be determined explicitly for small costs \cite{martin.12,soner.touzi.13,kallsen.muhlekarbe.13,kallsen.li.13}. In the interior of this region, the investor's portfolio evolves uncontrolled, with instantaneous reflection at the boundaries. At the leading order, the distribution of such diffusion processes can be approximated by the uniform stationary law for reflected Brownian motion \cite{rogers.04,janecek.shreve.04,goodman.ostrov.10,kallsen.muhlekarbe.13a,kallsen.muhlekarbe.13,kallsen.li.13}. Hence, the average squared deviation of the actual position from the frictionless target is given by one third of the halfwidth of the corresponding no-trade region:
$$\frac{1}{\sqrt[3]{12}}\left(\frac{R_t \Lambda_t}{(\sigma^S_t)^2}\right)^{2/3} \left(\sigma^{\theta^0}_t\right)^{4/3},$$
where $\sigma^{\theta^0}_t=\sqrt{d\langle \theta^0\rangle_t/dt}$ is the volatility of the frictionless optimizer $\theta^0$.

For fixed transaction costs, the portfolio again moves uncontrolled inside a no-trade region, but is rebalanced directly to the frictionless target position once its boundaries are breached. At the leading order, this leads to a deviation with probability density given by a ``hat function'', which arises as the stationary law for Brownian motion killed and restarted at the origin upon hitting the boundaries of a symmetric interval. As a result, the variance of the corresponding deviation from the frictionless optimizer equals one sixth of the halfwidth of the respective no-trade region:
$$\frac{1}{\sqrt{3}}\left(\frac{R_t \Lambda_t}{(\sigma^S_t)^2}\right)^{1/2} \sigma^{\theta^0}_t.$$
Up to the change of powers and a constant, the optimal policy is therefore determined by the same quantities in each case.

The optimal trading rate \eqref{eq:rate} with linear price impact leads to a deviation $\Delta_t=\theta^\Lambda_t-\theta^0_t$ following a mean-reverting diffusion process:
$$d\Delta_t= - \sqrt{\frac{(\sigma^S_t)^2}{2 \Lambda_t R_t}} \Delta_t dt +d\theta^0_t.$$
For small price impact ($\Lambda \sim 0$) this is locally an Ornstein-Uhlenbeck process (globally, if the frictionless target strategy follows Brownian motion and the mean-reversion speed is constant), with Gaussian stationary law and leading-order variance
$$
\sqrt{2}\left(\frac{R_t\Lambda_t}{(\sigma^S_t)^2}\right)^{1/2}\left(\sigma^{\theta^0}_t\right)^2.
$$
Again, the specific friction contributes the respective powers and a universal constant. In contrast, the input parameters and the corresponding comparative statics are universal: the effect of a small friction is large if market risk is high compared to the investor's risk tolerance, if trading costs are substantial, or if the frictionless target strategy prescribes a lot of rebalancing.

In summary, even though different trading costs lead to fundamentally different optimal policies on a ``microscopic'' level, the ``macroscopic'' picture turns out to be surprisingly robust.

\section{Proof of Theorem \ref{theo: main result}}\label{sec: PDE derivation}

This section contains the proof of our first main result, the asymptotic expansion of the value function $v^\lambda$ for small price impacts $\lambda \Lambda(\cdot) \sim 0$ from Theorem \ref{theo: main result}. Throughout, we write\footnote{Here, $E$ is the unique symmetric, positive definite matrix for which this representation holds true.}
$$\lambda=\varepsilon^4 \quad \mbox{and} \quad \Lambda(\zeta)=E(\zeta)^4,$$
to avoid the use of fractional powers. With a slight abuse of notation, we also index all quantities associated to the problem with price impact by $\varepsilon$. For example, we write $v^\varepsilon$ for the frictional value function $v^\lambda$, denote the corresponding optimal portfolio $\theta^\Lambda$ by $\theta^\varepsilon$, etc.\\

The strategy for the proof of Theorem~\ref{theo: main result} is as follows: 
   Lemma \ref{lem: explicit resolution 1st corrector} together with
    the results of Section \ref{sec: PDE characterization} (see Propositions \ref{prop: sub sol for u}, \ref{prop: super sol for u}, and \ref{prop: terminal condition for u})
    and Assumption \reff{ass: comparison for u} yield 
    $$
    \bar{u}_\ast(\zeta,\theta^0(\zeta))\ge u(\zeta)\ge \bar{u}^\ast(\zeta,\theta^0(\zeta)), \quad \mbox{for all $\zeta\in\Df$.}
    $$
            On the other hand, we show in Proposition~\ref{prop: u indep pi comparison}
            (the functions	 $u_\ast$ and $u^\ast$ therein are defined in Section \ref{sec: u ast}) that, for all $(\zeta,\vartheta)\in\Df\x\R^d$:
            \begin{equation*}\label{eq: intermed eq}
                \bar{u}_\ast(\zeta,\theta^0(\zeta))
                    \le \bar u_\ast(\zeta,\vartheta)-\varpi\circ\xib_1(\zeta,\vartheta)
                    \le \bar{u}^\ast(\zeta,\vartheta)-\varpi\circ\xib_1(\zeta,\vartheta)
                    \le \bar{u}^\ast(\zeta,\theta^0(\zeta)).
            \end{equation*}
            Together, these two estimates prove Theorem~\ref{theo: main result}.

    \subsection{Remainder Estimate}

       The first -- and the most tedious -- step is to estimate the remainders of the expansion in Theorem~\ref{theo: main result}.
        This parallels \cite[Remark 3.4, Section 4.2]{soner.touzi.13};
        see also \cite[Lemma 4.4]{bouchard.al.13}.

                \begin{lemma}\label{lem: remainder estimate}
          Suppose Assumption \reff{ass: v smooth} is satisfied, and recall $\xib_\veps(\zeta,\vartheta)=(\vartheta-\theta^0(\zeta))/\veps$.
          Fix $\veps>0$, two $C^{1,2}(\Df\x\R^d)$-functions $\phi$ and $w$, and define
          $$
            \psi^\veps:(\zeta,\vc)\longmapsto v^0(\zeta)-\veps^2\phi(\zeta,\vc)-\veps^4w^\veps(\zeta,\vc), \quad \mbox{with }
            w^\veps(\zeta,\vc):=w\circ\xib_\veps(\zeta,\vc)=w(\zeta,\xib_\eps(\zeta,\vc)).
          $$
          Set $D^\iota_\veps:=\{\partial_x\psi^\veps>0\}\cap\{\veps^2\partial_x(\phi+\veps^2w^\veps)/\partial_xv^0\le\iota\}$ for some $\iota<1$.
          Then:
          \b*
            \displaystyle
                \Lc^\vartheta\psi^\veps
                    &=&
                        \veps^2
                        \left(
                            \frac12\abs{\xib_\veps^\top \sigma_S}^2\partial_{xx}v^0
                            -\Lc^{\theta^0}\phi
                            -\frac12\Tr{c_{\theta^0} D^2_{\xi\xi}w}
                            +\Rc^\veps_\Lc
                        \right),
                \\
            \displaystyle
                \Hc^\veps\psi^\veps
                    &=&
                        \veps^2
                            \left(
                                \frac{(D_\xi w\circ\xib_\veps)^\top E^{-4}D_\xi w\circ\xib_\veps}{4\partial_xv^0}
                                +\Rc^\veps_\Hc
                            \right)
                +\hat\Lc^\veps\phi,
                \quad \mbox{on } D^\iota_\veps,
          \e*
          with
          \beq\label{eq: definition hat Lc}
            \hat\Lc^\veps\phi:=
                \frac
                    {(D_\vartheta\phi)^\top E^{-4}(D_\vartheta\phi+2\veps D_\xi w\circ\xib_\veps)}
                    {4\partial_xv^0}
                +
                \frac
                    {\veps^2\partial_x\phi}
                    {4(\partial_xv^0)^2}(D_\vartheta\phi)^\top E^{-4}D_\vartheta\phi,
          \eeq
          $\theta^0$ defined as in \reff{eq: optimum control frictionless},
          and where $\Rc^\veps_\Lc$ and $\Rc^\veps_\Hc$ are continuous maps defined on $D^\iota_\veps$ such that:
          \benumlabi{R}
            \item\label{eq: remainder for subsol} For each bounded set $B\subset \Df\x\R^d\x\R^d$,
                there exists $\veps_B>0$ such that
                $$\left\{\veps^{-1}\left(\abs{\Rc^\veps_\Lc}+\abs{\Rc^\veps_\Hc}\right)(\zeta,\vartheta):(\zeta,\vartheta,\xib_\veps(\zeta,\vartheta))\in B, \veps\in(0,\veps_B]\right\}$$
                is bounded;
            \item\label{eq: remainder for supersol} Let $B\subset \Df$ be a bounded set.
                Assume that $\phi\in C^\infty_b(B\x\R^d)$ and that $w$ satisfies \reff{eq: estimate for w}.
                Then, there exist $\veps_B>0$ and $C_B>0$ such that
                $$
                    \abs{\Rc^\veps_\Lc(\zeta,\vartheta)}+\abs{\Rc^\veps_\Hc(\zeta,\vartheta)}
                        \le C_B\left(
                            1+\veps\abs{\xib_\veps}+\veps^2\abs{\xib_\veps}^2
                        \right),
                $$
                for all $\veps\in(0,\veps_B]$ and $(\zeta,\vartheta)\in B\x\R^d$.
          \enumlab
        \end{lemma}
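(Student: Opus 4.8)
The statement is essentially a bookkeeping lemma: one inserts the ansatz $\psi^\veps(\zeta,\vartheta)=v^0(\zeta)-\veps^2\phi(\zeta,\vartheta)-\veps^4 w\circ\xib_\veps(\zeta,\vartheta)$ into the two differential operators $\Lc^\vartheta$ and $\Hc^\veps$ and expands in powers of $\veps$, separating the ``main'' terms (order $\veps^2$) from the remainder terms, and then controls the remainders. The plan is to proceed in three stages.

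\medskip

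\textbf{Step 1: Compute the action of $\Lc^\vartheta$.} Since $\Lc^\vartheta$ is a linear second-order differential operator in $(t,s,y,x)$ with coefficients $\mu_\vartheta,\sigma_\vartheta$, and $\psi^\veps$ is a sum of three pieces, I would apply it termwise. The term $\Lc^\vartheta v^0$ is rewritten using the frictionless PDE \reff{eq: friction less HJB simplified} together with the first-order condition \reff{eq: optimum control frictionless}: the identity $-\Lc^{\theta^0}v^0=0$ holds, and for general $\vartheta=\theta^0+\veps\,\xib_\veps$ one expands $\Lc^\vartheta v^0$ around $\vartheta=\theta^0(\zeta)$. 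The key algebraic fact is that the difference $\Lc^\vartheta v^0-\Lc^{\theta^0}v^0$ is \emph{quadratic} in $\vartheta-\theta^0$ (the linear term vanishing by \reff{eq: optimum control frictionless}), giving precisely the leading term $\tfrac12\veps^2\abs{\xib_\veps^\top\sigma_S}^2\partial_{xx}v^0$. For the $\veps^2\phi$ piece, the leading contribution is $-\veps^2\Lc^{\theta^0}\phi$, with the error from replacing $\Lc^\vartheta$ by $\Lc^{\theta^0}$ being $O(\veps)$ times $\veps^2$ (again because $\vartheta-\theta^0=\veps\,\xib_\veps$), which feeds into $\Rc^\veps_\Lc$. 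For the $\veps^4 w\circ\xib_\veps$ piece, one must apply the chain rule through the fast variable $\xib_\veps(\zeta,\vartheta)=(\vartheta-\theta^0(\zeta))/\veps$; each $\xi$-derivative carries a factor $\veps^{-1}$, so the second-order $\xi$-part produces the $O(\veps^2)$ term $-\tfrac12\Tr{c_{\theta^0}D^2_{\xi\xi}w}$ (using the definition \reff{eq: def delta multi d} of $c_{\theta^0}$ as the quadratic variation of $\theta^0$ under $\sigma_{\theta^0}$), the mixed $\zeta$-$\xi$ derivatives produce $O(\veps^3)$ terms, and the pure $\zeta$-derivatives of $w$ produce $O(\veps^4)$ terms; all of these subleading contributions go into $\Rc^\veps_\Lc$.

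\medskip

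\textbf{Step 2: Compute the action of $\Hc^\veps$.} Here I would use the explicit pointwise optimizer form: on the set $\{\partial_x\psi^\veps>0\}$, positive-definiteness of $\Lambda=E^4$ gives $\Hc^\veps\psi^\veps=\dfrac{(D_\vartheta\psi^\veps)^\top E^{-4}D_\vartheta\psi^\veps}{4\veps^4\partial_x\psi^\veps}$ (cf.\ Remark~\ref{rem: simplification HJB with friction} with $\lambda=\veps^4$). Now $D_\vartheta\psi^\veps=-\veps^2 D_\vartheta\phi-\veps^4 D_\vartheta(w\circ\xib_\veps)=-\veps^2 D_\vartheta\phi-\veps^3 D_\xi w\circ\xib_\veps$, since $D_\vartheta(w\circ\xib_\veps)=\veps^{-1}D_\xi w\circ\xib_\veps$. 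Squaring this sum produces three groups: the $\veps^6$ term $(D_\xi w)^\top E^{-4}(D_\xi w)$, which after dividing by $4\veps^4\partial_x\psi^\veps$ and expanding $\partial_x\psi^\veps=\partial_xv^0(1-\veps^2\partial_x(\phi+\veps^2 w^\veps)/\partial_xv^0)$ in a geometric series (valid on $D^\iota_\veps$ since the correction is $\le\iota<1$) gives $\veps^2\dfrac{(D_\xi w)^\top E^{-4}D_\xi w}{4\partial_xv^0}$ plus an $O(\veps^4)$-type remainder; the cross term $2\veps^5 (D_\vartheta\phi)^\top E^{-4}(D_\xi w)$ and the pure $\phi$ term $\veps^4(D_\vartheta\phi)^\top E^{-4}(D_\vartheta\phi)$, which together with the leading factor of the geometric series assemble exactly into $\hat\Lc^\veps\phi$ as written in \reff{eq: definition hat Lc}, and the higher-order tails of the geometric series go into $\veps^2\Rc^\veps_\Hc$. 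Matching terms carefully yields the claimed decomposition.

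\medskip

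\textbf{Step 3: Estimate the remainders.} The remainders $\Rc^\veps_\Lc,\Rc^\veps_\Hc$ are explicit finite sums of products of: derivatives of $v^0,\theta^0$ (bounded on compacts by \reff{ass: v smooth} and continuity), derivatives of $\phi,w$ evaluated at $(\zeta,\xib_\veps(\zeta,\vartheta))$, powers of $\veps$, and powers of $\abs{\xib_\veps}$ coming from the polynomial dependence of the coefficients on $\vartheta-\theta^0=\veps\,\xib_\veps$. For \reff{eq: remainder for subsol}: if $(\zeta,\vartheta,\xib_\veps(\zeta,\vartheta))$ ranges over a bounded set $B$, then $\xib_\veps$ stays bounded, the $\zeta$-arguments stay in a compact set, and the derivatives of the fixed $C^{1,2}$ functions $\phi,w$ are bounded on that compact set; since every remainder term carries at least one explicit factor $\veps$, multiplying by $\veps^{-1}$ still leaves a bounded quantity. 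For \reff{eq: remainder for supersol}: now $\xib_\veps$ is \emph{not} assumed bounded, so one uses instead the growth bound \reff{eq: estimate for w} on $w$ (linear growth of $D_\xi w$, boundedness of $D^2_{\xi\xi}w$, quadratic growth of $w$ itself, with matching bounds on its $\zeta$-derivatives) together with $\phi\in C^\infty_b$; tracking the worst power of $\abs{\xib_\veps}$ in each term and recalling the accompanying $\veps$-power, one finds the total is $O(1+\veps\abs{\xib_\veps}+\veps^2\abs{\xib_\veps}^2)$ uniformly for $\veps$ small and $\zeta$ in the compact base $B$.

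\medskip

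\textbf{Main obstacle.} The genuine difficulty is not conceptual but organizational: keeping track of \emph{which} of the many cross-terms produced by the chain rule through $\xib_\veps$ and by squaring $D_\vartheta\psi^\veps$ land in the displayed leading terms versus in $\Rc^\veps_\Lc,\Rc^\veps_\Hc$, and verifying that the geometric-series expansion of $1/\partial_x\psi^\veps$ on $D^\iota_\veps$ distributes correctly so that the $\hat\Lc^\veps\phi$ term comes out exactly as in \reff{eq: definition hat Lc} with no leftover pieces of the wrong order. The bound \reff{eq: remainder for supersol} requires particular care because $\abs{\xib_\veps}$ is unbounded there, so one must confirm that no term accumulates a power of $\abs{\xib_\veps}$ higher than $2$ without a compensating $\veps^2$ — this is exactly where the structure of the ansatz (quadratic $w$, and $\phi$ independent of the fast scale) is essential.
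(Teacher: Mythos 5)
Your proposal follows essentially the same route as the paper's proof: termwise expansion of $\Lc^\vartheta\psi^\veps$ around $\vartheta=\theta^0+\veps\xib_\veps$ (with the linear term killed by \reff{eq: optimum control frictionless} and $\Lc^{\theta^0}v^0=0$), the chain rule through the fast variable producing the $\Tr{c_{\theta^0}D^2_{\xi\xi}w}$ term, the explicit-optimizer form of $\Hc^\veps$ on $\{\partial_x\psi^\veps>0\}$ with an expansion of $1/\partial_x\psi^\veps$ assembling $\hat\Lc^\veps\phi$, and the same remainder bookkeeping via \reff{ass: v smooth}, \reff{eq: estimate for w}, and boundedness of $\phi$. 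The only minor imprecision is the appeal to a ``geometric series'': on $D^\iota_\veps$ one only knows $\veps^2\partial_x(\phi+\veps^2w^\veps)/\partial_xv^0\le\iota$, which may be very negative, so one should (as the paper does) use a first-order Taylor expansion of $x\mapsto 1/(1-x)$ with remainder bounded by $x^2/(1-\iota)$ rather than an infinite series; this changes nothing of substance.
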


        \proof

                For the sake of clarity, write                $$
                    \bar\mu^0_{\vartheta}:=\begin{pmatrix} 0\\0\\\vartheta^\top \mu_S \end{pmatrix}
                    \et
                    \bar\sigma^0_\vartheta:=\begin{pmatrix} 0\\0\\\vartheta^\top \sigma_S \end{pmatrix},
                $$
                for any $\vartheta\in\R^d$. We work on $D^\iota_\veps$ and omit the corresponding arguments for brevity.

            \emph{Step 1: expand the linear operator.}
                First, use $\vartheta=\theta^0+\veps\xib_\veps$, obtaining
                \be
                    \displaystyle \Lc^\vartheta v^0
                        &=&\Lc^{\theta^0} v^0 + \bar\mu^0_{\veps\xibu_\veps}D_\zeta v^0+\Tr{\sigma_{\theta^0}(\bar\sigma^0_{\veps\xibu_\veps})^\top D^2_\zeta v^0}
                            +\frac12\veps^2\abs{\xib_\veps^\top \sigma_S}^2\partial_{xx}v^0\nonumber\\
                    \displaystyle
                        &=& \Lc^{\theta^0} v^0
                            +(\veps\xib_\veps)^\top \left(\mu_S\partial_xv^0+\sigma_S\bar\sigma^\top_0D_{(s,y)}(\partial_x v^0) +\sigma_S\sigma^\top_S\theta^0\partial_{xx}v^0 \right)
                        +\frac12\veps^2\abs{\xib_\veps^\top \sigma_S}^2\partial_{xx}v^0\nonumber\\
                    \displaystyle
                        &=&\frac12\veps^2\abs{\xib_\veps^\top \sigma_S}^2\partial_{xx}v^0,\nonumber
                \ee
                by the frictionless DPE \reff{eq: PDE friction less simplified with pf} and the first-order condition \reff{eq: optimum control frictionless} for the frictionless optimizer $\theta^0$,
                which hold due to Assumption \reff{ass: v smooth}.
                The same calculation also yields $\Lc^\vartheta (\veps^2\phi)
                        =\veps^2\Lc^{\theta^0} \phi + \veps^2 \Rc^\veps_1$,
                with
                \b*
                    \displaystyle
                        \Rc^\veps_1 &:=&(\veps\xib_\veps)^\top \left(\mu_S\partial_x\phi+\sigma_S\bar\sigma^\top_0D_{(s,y)}(\partial_x\phi) +\sigma_S\sigma^\top_S\theta^0\partial_{xx}\phi \right)
                        +\frac12\veps^2\abs{\xib_\veps^\top \sigma_S}^2\partial_{xx}\phi.
                \e*
                Now, observe $\xib_\veps=\xib_1/\veps$ so that, by definition of $\xib_1$ and $w^\veps$:
                \begin{gather*}
                    D_\zeta w^\veps=D_\zeta w-\frac1\veps D_\zeta\theta^0 D_\xi w,\\
                    D^2_{\zeta\zeta} w^\veps
                    =\frac1{\veps^2}D_\zeta\theta^0 D^2_{\xi\xi}w D^\top_\zeta\theta^0
                        -\frac1{\veps}
                            \left(
                                D_\zeta\theta^0 D^\top_\zeta(D_\xi w)
                                +D_\zeta(D_\xi w)D^\top_\zeta\theta^0
                                +D^2_{\zeta\zeta}\theta^0 D^\top_\xi w
                            \right)
                    +D^2_{\zeta\zeta}w.
                \end{gather*}
                As a result (recall \reff{eq: def delta multi d}):
                \begin{equation}\label{eq: first line for w}
                    \Lc^\vartheta(\veps^4 w^\veps)
                    =\veps^2\frac12\Tr{D^\top_\zeta\theta^0\sigma_{\theta^0}\sigma_{\theta^0}^\top D_\zeta\theta^0 D^2_{\xi\xi}w}
                        +\veps^2\Rc^\veps_2,
                \end{equation}
                with
                $$\bal
                    \Rc^\veps_2:=\;&
                        \veps^2\partial_t w^\veps
                        + \veps^2\mu_{\theta^0+\veps\xibu_\veps}\cdot D_\zeta w^\veps
                        +\veps^2\frac12\Tr{
                            \sigma_{\theta^0+\veps\xibu_\veps}\sigma_{\theta^0+\veps\xibu_\veps}^\top D^2_{\zeta\zeta} w^\veps
                            -\frac1{\veps^2}D^\top_\zeta \theta^0\sigma_{\theta^0}\sigma_{\theta^0}^\top D_\zeta \theta^0 D^2_{\xi\xi}w
                            }\\
                        =\;&
                            \veps^2\partial_tw
                            -\veps D_t\theta^0\cdot D_\xi w
                        + \veps^2\mu_{\theta^0+\veps\xibu_\veps}\cdot D_\zeta w
                        - \veps\mu_{\theta^0+\veps\xibu_\veps}\cdot D_\zeta\theta^0 D_\xi w\\
                        \;&
                        +\frac12\Tr{
                            \left(
                                \sigma_{\theta^0}\bar\sigma^{0\top}_{\veps\xibu_\veps}
                                +\bar\sigma^{0\top}_{\veps\xibu_\veps}\sigma_{\theta^0}^\top
                                +\bar\sigma^{0}_{\veps\xibu_\veps}\bar\sigma^{0\top}_{\veps\xibu_\veps}
                            \right)
                            D_\zeta\theta^0 D^2_{\xi\xi}w D_\zeta^\top\theta^0}\\
                        \;& -\Tr{\sigma_{\theta^0+\veps\xibu_\veps}\sigma_{\theta^0+\veps\xibu_\veps}^\top
                            \left(
                                \veps\left(
                                    D_\zeta\theta^0 D^\top_\zeta(D_\xi w)
                                    +D_\zeta(D_\xi w)D^\top_\zeta\theta^0
                                    +D^2_{\zeta\zeta}\theta^0 D^\top_\xi w
                                \right)
                                -\veps^2 D^2_{\zeta\zeta}w
                            \right)}.
                \eal$$
                The asserted estimates for $\Rc^\veps_\Lc:=\Rc^\veps_1+\Rc^\veps_2$ now follow from Assumption \reff{ass: v smooth},
                \reff{eq: estimate for w}, and the continuity of the coefficients of the SDEs
                \reff{eq:dynS}, \reff{eq:dynY}, \reff{eq:strat}, and \reff{eq:wealth}.

            \emph{Step 2: expand the nonlinear operator.} First, observe that $\partial_x\psi^\veps>0$ on $D^\iota_\veps$;
                whence (recall Remark \ref{rem: simplification HJB with friction}):
                $$
                    \Hc^\veps\psi^\veps=
                        \frac{(D_\vartheta\psi^\veps)^\top E^{-4}D_\vartheta\psi^\veps}{4\veps^4\partial_xv^0}
                            \x\frac{1}{1-\veps^2\partial_x(\phi+\veps^2w^\veps)/\partial_xv^0}.
                $$
                A first-order expansion of the right-hand side in turn gives
                $$
                    \Hc^\veps\psi^\veps=
                        \frac{(D_\vartheta\psi^\veps)^\top E^{-4}D_\vartheta\psi^\veps}{4\veps^4\partial_xv^0}
                        \left(1+\veps^2\frac{\partial_x\phi}{\partial_xv^0}\right)+\veps^2\Rc^\veps_3,
                $$
                with
                $$\bal
                    \abs{\Rc^\veps_3}
                    \le \;&
                        \frac{(D_\vartheta\psi^\veps)^\top E^{-4}D_\vartheta\psi^\veps}{4\veps^6\partial_xv^0}
                        \x\left(\veps^4\frac{\partial_xw^\veps}{\partial_xv^0}+
                            \frac{2}{(1-\iota)^3}
                            \x\frac{\veps^4\abs{\partial_x(\phi+\veps^2w^\veps)}^2}{(\partial_xv^0)^2}\right)\\
                    =\;& \frac
                        {(D_\vartheta\phi+\veps^2D_\xi w)^\top E^{-4}(D_\vartheta\phi+\veps^2D_\xi w)}
                        {4\partial_xv^0}\\
                            &\;\x\left(\frac{\veps^2\partial_xw-\veps\partial_x\theta^0\cdot D_\xi w}{\partial_xv^0}+
                                \frac{2\veps^2\abs{\partial_x\phi-\veps\partial_x\theta^0\cdot D_\xi w+\veps^2\partial_xw}^2}{(1-\iota)^3(\partial_xv^0)^2}\right),
                \eal$$
                where we have used for the first estimate that we are working on $D^\iota_\veps$.
                Thus, we compute
                $$\bal
                    \Hc^\veps\psi^\veps=\;&
                        \frac
                            {\veps^2(D_\xi w)^\top E^{-4}D_\xi w + (D_\vartheta\phi)^\top E^{-4}(D_\vartheta\phi+2\veps D_\xi w)}
                            {4\partial_xv^0}\\
                        \;&
                        +\frac{\veps^2\partial_x\phi}{4(\partial_xv^0)^2}(D_\vartheta\phi)^\top E^{-4}D_\vartheta\phi
                        +\veps^2(\Rc^\veps_3+\Rc^\veps_4),
                \eal$$
                with
                $$
                    \Rc^\veps_4:=
                        \frac{2\veps \partial_x\phi(D_\vartheta\phi)^\top E^{-4}D_\xi w+\veps^2(D_\xi w)^\top E^{-4}D_\xi w}{4(\partial_xv^0)^2}.
                $$
                Again, the asserted estimates for $\Rc^\veps_\Hc:=\Rc^\eps_3+\Rc^\eps_4$ now follow from the continuity of the involved functions,
                Assumption \reff{ass: v smooth}, and \reff{eq: estimate for w}. Together with Step 1, this completes the proof.
        \qed

    \subsection{The Adjusted Relaxed Semi-Limits $u^\ast, u_\ast$}\label{sec: u ast}

Unlike for models with proportional \cite{soner.touzi.13,possamai.al.12,bouchard.al.13} or fixed transaction costs \cite{altarovici.al.13}, the relaxed semilimits of $\bar u^\varepsilon=(v^0-v^\veps)/\veps^2$ do depend on the number of shares in the investor's portfolio for the present price impact model. As a result, the crucial simplification offered by homogenization apparently breaks down: the number of variables in the first-order correction term is the same as in the original frictional value function, rather than being reduced to the variables of its frictionless counterpart as in \cite{soner.touzi.13,possamai.al.12,bouchard.al.13,altarovici.al.13}.

However -- crucially --  the heuristic arguments from Section~\ref{sec:heuristics} suggest that $\bar{u}^\eps$ only depends on the initial number of risky shares $\vartheta$ through the quadratic function $\varpi$ determined by the first corrector equation. For intermediate times, this follows from the expansion of the frictional DPE, at the terminal time this is a consequence of the definition of the liquidation penalty in \eqref{eq: definition friction value function}. In fact, the latter is chosen precisely so that a simple quadratic function does the job here, see Remark \ref{rem:2.4}.

After subtracting this penalty term, the remaining first-order correction becomes independent of the current portfolio like for proportional and fixed costs.

To proceed, define for all $\veps>0$ the map
        $u^\veps:\Df\x\R^d\rightarrow\R$ by
        \beq\label{eq: def u_eps}
            u^\veps := \bar u^\veps - \veps^2 \varpi\circ\xib_\veps,
        \eeq
        where the normalized deviation $\xib_\veps(\zeta,\vartheta)=(\vartheta-\theta^0(\zeta))/\veps$ from the frictionless target $\theta^0$ is defined as in \reff{eq: fast variable} and $\varpi(\xi)$ is the solution of the first corrector equation constructed in
        Lemma \ref{lem: explicit resolution 1st corrector}. In analogy with \reff{eq: def bar u ast}, the corresponding relaxed semilimits are then defined as
        $$
            u^\ast(\zeta,\vartheta) := \Limsup_{\veps\to 0,(\zeta',\vartheta')\rightarrow(\zeta,\vartheta)} u^\veps(\zeta',\vartheta'),
            \qquad
            u_\ast(\zeta,\vartheta) := \Liminf_{\veps\to 0,(\zeta',\vartheta')\rightarrow(\zeta,\vartheta)} u^\veps(\zeta',\vartheta').
        $$
        Evidently, the families $\{\bar u^\veps: \veps>0\}$ and $\{u^\veps: \veps>0\}$ do not have the same relaxed semilimits.
        Indeed, $\bar u^\ast$ and $\bar u_\ast$ are not independent of the $\vartheta$-variable, as is immediately apparent for $t=T$.
        In contrast, we shall see that $u^\ast$ and $u_\ast$ do not depend on the $\vartheta$-variable (this is again evident for $t=T$).
        This will be verified \emph{a posteriori}, contrary to \cite{soner.touzi.13}, where this can be checked \emph{a priori} for the relaxed semilimits $\bar u^\ast$ and $\bar u_\ast$,
        and is crucially used to establish the main result.

       Define, for all $\eps>0$ and $(\zeta,\vc)\in\Df\x\R^d$,
            $$
                u^\eps_\ast(\zeta,\vc):=\frac{v^0(\zeta)-v^{\eps\ast}(\zeta,\vc)}{\varepsilon^2}
                \et
                u^{\eps\ast}(\zeta,\vc):=\frac{v^0(\zeta)-v^{\eps}_{\ast}(\zeta,\vc)}{\varepsilon^2},
            $$
            where $v^{\eps\ast}$ and $v^{\eps}_{\ast}$ denote the upper and lower semicontinuous envelopes of $v^\eps$, respectively,
            and observe that
        \beq\label{eq: same relaxed semi-limits}
            u^\ast(\zeta,\vartheta) = \Limsup_{\veps\to 0,(\zeta',\vartheta')\rightarrow(\zeta,\vartheta)} u^{\veps\ast}(\zeta',\vartheta'),
            \qquad
            u_\ast(\zeta,\vartheta) = \Liminf_{\veps\to 0,(\zeta',\vartheta')\rightarrow(\zeta,\vartheta)} u^\veps_\ast(\zeta',\vartheta').
        \eeq

        The following is a simple consequence of Assumptions~\reff{ass: u locally bounded from above}, \reff{ass: v smooth},
        as well as Lemma~\ref{lem: explicit resolution 1st corrector}:

        \begin{lemma}\label{lem: bar u unif bounded}
          Suppose Assumptions \eqref{ass: u locally bounded from above} and \eqref{ass: v smooth} are satisfied. Then,
          for all $(\zeta_o,\vartheta_o)\in\Df\x\R^d$, there are $r_o,\veps_o>0$ such that
          $$
            -\infty<u^\veps_\ast\le u^{\veps\ast}<+\infty, \quad \mbox{on } B_{r_o}(\zeta_o,\vartheta_o)\cap\Df, \pourtout\veps\in(0,\veps_o].
          $$
          In particular, the relaxed semilimits $u_\ast$ and $u^\ast$ are locally bounded.
        \end{lemma}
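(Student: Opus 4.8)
The plan is to read the lemma off from two elementary facts and a one-line manipulation of semicontinuous envelopes — consistent with the paper's billing of it as a simple consequence of Assumptions~\ref{ass: v smooth}, \ref{ass: u locally bounded from above} and Lemma~\ref{lem: explicit resolution 1st corrector}. The first fact is that $v^\veps\le v^0$ on all of $\Df\x\R^d$ (noted just before Assumption~\ref{prop:dpefric}: every admissible frictional control is admissible for the frictionless problem and yields a weakly larger terminal wealth, $U$ is nondecreasing, and $\Pf\ge 0$). The second is that, by Lemma~\ref{lem: explicit resolution 1st corrector}, $\varpi(\zeta,\xi)=\xi^\top k_2(\zeta)\xi$ with $k_2\in C^{1,2}(\Df;\S^d)$ positive semidefinite, so that the term entering \reff{eq: def u_eps},
$$g(\zeta,\vartheta):=\veps^2\,\varpi\circ\xib_\veps(\zeta,\vartheta)=(\vartheta-\theta^0(\zeta))^\top k_2(\zeta)(\vartheta-\theta^0(\zeta)),$$
is \emph{independent of} $\veps$, nonnegative, and continuous on $\Df\x\R^d$ (using $\theta^0\in C^{1,2}$ from Assumption~\ref{ass: v smooth}), hence locally bounded uniformly in $\veps$. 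I would also record at the outset that $v^\veps$ is locally bounded (Assumption~\ref{prop:dpefric}), so that $v^{\veps\ast}\ge v^\veps\ge v^\veps_\ast$ are finite-valued; that, since $v^0$ and $g$ are continuous, $u^\veps_\ast$ and $u^{\veps\ast}$ (as in \reff{eq: same relaxed semi-limits}) are the lower and upper semicontinuous envelopes of $u^\veps=\bar u^\veps-g$, i.e.\ $u^\veps_\ast=(v^0-v^{\veps\ast})/\veps^2-g$ and $u^{\veps\ast}=(v^0-v^\veps_\ast)/\veps^2-g$; and that $v^\veps\le v^0$ with $v^0$ continuous forces $v^{\veps\ast}\le v^0$.

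The lower bound and the middle inequality then come for free: $v^{\veps\ast}\le v^0$ yields $u^\veps_\ast\ge -g$, which is bounded below on any bounded set uniformly in $\veps$, while $u^\veps_\ast-u^{\veps\ast}=(v^\veps_\ast-v^{\veps\ast})/\veps^2\le 0$. For the upper bound I would invoke Assumption~\ref{ass: u locally bounded from above}: fixing $(\zeta_o,\vartheta_o)$, there are $r_o,\veps_o>0$ and $M<\infty$ with $\bar u^\veps=(v^0-v^\veps)/\veps^2\le M$ on $B_{r_o}(\zeta_o,\vartheta_o)\cap(\Df\x\R^d)$ for all $\veps\in(0,\veps_o]$, i.e.\ $v^\veps\ge v^0-M\veps^2$ there. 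The one step that wants a moment's attention is to transfer this one-sided bound to the lower semicontinuous envelope: since $B_{r_o}(\zeta_o,\vartheta_o)$ is open and $v^0$ is continuous, $v^\veps_\ast\ge v^0-M\veps^2$ holds on $B_{r_o}(\zeta_o,\vartheta_o)\cap(\Df\x\R^d)$ too; hence $(v^0-v^\veps_\ast)/\veps^2\le M$ and, using $g\ge 0$, $u^{\veps\ast}\le M$ there, uniformly in $\veps\in(0,\veps_o]$.

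Combining the three estimates, on $B_{r_o}(\zeta_o,\vartheta_o)\cap(\Df\x\R^d)$ and for $\veps\in(0,\veps_o]$ one has $-\sup_{B_{r_o}}g\le u^\veps_\ast\le u^{\veps\ast}\le M$ (after shrinking $r_o$, if need be, so that $g$ is bounded on the ball — possible since $g$ is continuous), which is precisely the asserted chain of inequalities. Passing to the relaxed semilimits via \reff{eq: same relaxed semi-limits} then gives $u^\ast=\Limsup u^{\veps\ast}\le M$ and $u_\ast=\Liminf u^\veps_\ast\ge -\sup_{B_{r_o}}g$ on $B_{r_o}(\zeta_o,\vartheta_o)$, so that $u^\ast$ and $u_\ast$ are locally bounded. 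The only input with real content is Assumption~\ref{ass: u locally bounded from above} — without it there is no $\veps$-uniform upper bound at all; everything else is bookkeeping, the sole subtlety being the passage of the pointwise inequality for $v^\veps$ through its semicontinuous envelopes, which works because $v^0$ is continuous and so unaffected by the envelope operation.
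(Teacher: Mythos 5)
Your argument is correct and is precisely the ``simple consequence'' the paper intends (it states this lemma without a written proof): nonnegativity of $\bar u^\veps$ from $v^\veps\le v^0$, the $\veps$-uniform local upper bound from Assumption~\reff{ass: u locally bounded from above} transferred through the semicontinuous envelopes using continuity of $v^0$, and the observation via Lemma~\ref{lem: explicit resolution 1st corrector} that $\veps^2\varpi\circ\xib_\veps=\varpi\circ\xib_1$ is a continuous, $\veps$-independent, locally bounded correction. Your reading of $u^{\veps\ast},u^\veps_\ast$ (pulling the continuous term $\varpi\circ\xib_1$ through the envelope operation) is the sensible one, and the chain of bounds you derive is unaffected by this bookkeeping, so nothing is missing.
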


    \subsection{PDE Characterization Along the Frictionless Optimizer}\label{sec: PDE characterization}

        In this section, we show that $\zeta\in\Df\longmapsto u^\ast(\zeta,\theta^0(\zeta))=\bar{u}^\ast(\zeta,\theta^0(\zeta))$ and $\zeta\in\Df\longmapsto u_\ast(\zeta,\theta^0(\zeta))=\bar{u}_\ast(\zeta,\theta^0(\zeta))$
        are viscosity sub- and supersolutions, respectively, of the Second Corrector Equation~\reff{eq: 2nd corrector equation}, where
        $(a,\varpi)$ is the solution of the First Corrector Equation \reff{eq: first corrector equation} constructed in Lemma~\ref{lem: explicit resolution 1st corrector}.

    \subsubsection{Viscosity Subsolution Property}\label{sec: visco subsol}

        \begin{proposition}\label{prop: sub sol for u}
          Suppose Assumptions \ref{prop:dpefric} and { \rm A} are satisfied. Then,
          $\zeta\in\Df\longmapsto u^\ast(\zeta,\theta^0(\zeta))=\bar{u}^\ast(\zeta,\theta^0(\zeta))$ is a viscosity subsolution of the Second Corrector Equation
          \reff{eq: 2nd corrector equation} on $\Dfi$.
        \end{proposition}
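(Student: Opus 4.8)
\noindent\emph{Proof strategy.} The plan is to run Evans's perturbed test function method, with the quadratic function $\varpi$ from Lemma~\ref{lem: explicit resolution 1st corrector} as the corrector in the fast variable $\xib_\varepsilon$, deducing the subsolution inequality from the \emph{super}solution property of the frictional value function $v^\varepsilon$. Since $\varpi(\zeta,0)=0$, the maps $u^\varepsilon$ and $\bar u^\varepsilon$ agree along the graph $\{\vartheta=\theta^0(\zeta)\}$, so $u^\ast(\cdot,\theta^0(\cdot))=\bar u^\ast(\cdot,\theta^0(\cdot))$, which is locally bounded and upper-semicontinuous by Assumptions~\reff{ass: v smooth} and~\reff{ass: u locally bounded from above}. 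It therefore suffices to fix $\zeta_0\in\Dfi$ and $\phi\in C^{1,2}(\Df)$ with bounded derivatives such that $\zeta\mapsto u^\ast(\zeta,\theta^0(\zeta))-\phi(\zeta)$ has a strict local maximum, equal to $0$, at $\zeta_0$, and to argue by contradiction that $(\Lc^{\theta^0}\phi+a)(\zeta_0)<0$ is impossible; here $a$ is the constant of the First Corrector Equation given by Lemma~\ref{lem: explicit resolution 1st corrector}.

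First I would introduce the perturbed test function
$$\psi^\varepsilon(\zeta,\vartheta):=v^0(\zeta)-\varepsilon^2\phi(\zeta)-\varepsilon^4(\varpi\circ\xib_\varepsilon)(\zeta,\vartheta)=v^0(\zeta)-\varepsilon^2\phi(\zeta)-\varepsilon^2(\varpi\circ\xib_1)(\zeta,\vartheta),$$
the last equality by quadratic homogeneity of $\varpi$. Because $\phi$ does not depend on $\vartheta$, the term $\hat\Lc^\varepsilon\phi$ of Lemma~\ref{lem: remainder estimate} vanishes identically; feeding $\phi$ and $w=\varpi$ into that lemma and invoking the First Corrector Equation~\reff{eq: first corrector equation} — whose role is precisely to make the $\tfrac12\abs{\xib_\varepsilon^\top\sigma_S}^2\partial_{xx}v^0$, the $\tfrac12\Tr{c_{\theta^0}D^2_{\xi\xi}\varpi}$, and the $(D_\xi\varpi)^\top\Lambda^{-1}D_\xi\varpi/(4\partial_xv^0)$ contributions collapse to $-a$ — one obtains, on $\{\partial_x\psi^\varepsilon>0\}$,
$$-\Lc^\vartheta\psi^\varepsilon-\Hc^\varepsilon\psi^\varepsilon=\varepsilon^2\Big[(\Lc^{\theta^0}\phi+a)(\zeta)-(\Rc^\varepsilon_\Lc+\Rc^\varepsilon_\Hc)(\zeta,\vartheta)\Big].$$

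Next, I would let $(\zeta^\varepsilon,\vartheta^\varepsilon)$ minimize $v^\varepsilon_\ast-\psi^\varepsilon=\varepsilon^2\big(\phi+\varpi\circ\xib_1-u^{\varepsilon\ast}\big)$ over a fixed compact neighbourhood $N$ of $(\zeta_0,\theta^0(\zeta_0))$ in $\Dfi\times\R^d$, chosen large in the $\vartheta$-direction. Positive definiteness of $k_2$ (Lemma~\ref{lem: explicit resolution 1st corrector}, by the local ellipticity~\reff{eq: ellipticity} of $\sigma_S$ and $\partial_xv^0>0$) gives $\varpi(\zeta,\xi)\ge c_N\abs{\xi}^2$ on $N$; together with the local bounds of Lemma~\ref{lem: bar u unif bounded}, this makes $v^\varepsilon_\ast-\psi^\varepsilon$ bounded below and coercive in $\vartheta$ about $\theta^0(\zeta)$. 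Using the strict-maximum property of $\phi$ along the graph and standard half-relaxed-limit arguments, one shows that, along a subsequence, $(\zeta^\varepsilon,\vartheta^\varepsilon)\to(\zeta_0,\theta^0(\zeta_0))$ with the minimizer interior to $N$ for small $\varepsilon$; one must moreover arrange — this is the crux, see below — that the rescaled deviation $\xib_\varepsilon(\zeta^\varepsilon,\vartheta^\varepsilon)$ lies in the regime where the remainder estimate~\reff{eq: remainder for subsol} forces $\Rc^\varepsilon_\Lc+\Rc^\varepsilon_\Hc\to0$, the constraints defining $D^\iota_\varepsilon$ in Lemma~\ref{lem: remainder estimate} then holding near $(\zeta^\varepsilon,\vartheta^\varepsilon)$ along with $\partial_x\psi^\varepsilon>0$ (Assumption~\reff{ass: v smooth}). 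Finally, $v^\varepsilon_\ast$ being a viscosity supersolution of the frictional DPE of Assumption~\ref{prop:dpefric}, touched from below by $\psi^\varepsilon$ at the interior point $(\zeta^\varepsilon,\vartheta^\varepsilon)$ — where the relaxation of $\Hc^\varepsilon$ is superfluous by Remark~\ref{rem: simplification HJB with friction} — the supersolution inequality yields $(-\Lc^\vartheta\psi^\varepsilon-\Hc^\varepsilon\psi^\varepsilon)(\zeta^\varepsilon,\vartheta^\varepsilon)\ge0$, whereas the displayed identity makes the left-hand side equal to $\varepsilon^2\big[(\Lc^{\theta^0}\phi+a)(\zeta_0)+o(1)\big]<0$ for $\varepsilon$ small. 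This contradiction gives $(\Lc^{\theta^0}\phi+a)(\zeta_0)\ge0$, i.e.\ $\zeta\mapsto u^\ast(\zeta,\theta^0(\zeta))$ is a viscosity subsolution of~\reff{eq: 2nd corrector equation} on $\Dfi$; the companion supersolution and terminal statements are Propositions~\ref{prop: super sol for u} and~\ref{prop: terminal condition for u}.

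The main obstacle is precisely the localization of the contact point. Unlike for proportional or fixed transaction costs — where the semilimits are $\vartheta$-independent a priori, a single bulk trade to the frictionless target being negligible at leading order — here the semilimits genuinely depend on $\vartheta$, and one must simultaneously pin the contact point near the graph $\{\vartheta=\theta^0(\zeta)\}$ \emph{and} control the scale of $\vartheta^\varepsilon-\theta^0(\zeta^\varepsilon)$ finely enough that the remainder estimate~\reff{eq: remainder for subsol} still produces an $o(1)$ error — morally, a bound on $\xib_\varepsilon$ at the contact point. This is what the coercivity of the quadratic corrector $\varpi$ (positive definiteness of $k_2$, ultimately the ellipticity of $\sigma_S$) together with the a priori bound of Assumption~\reff{ass: u locally bounded from above} is designed to provide; it is also the reason for working with the adjusted function $u^\varepsilon=\bar u^\varepsilon-\varpi\circ\xib_1$ of Section~\ref{sec: u ast} instead of $\bar u^\varepsilon$ itself, and is the step that has no counterpart in the transaction-cost literature.
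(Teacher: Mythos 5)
Your overall architecture (perturbed test function with corrector $\varpi$, contradiction via the supersolution property of $v^\varepsilon$, remainder control through Lemma~\ref{lem: remainder estimate}) matches the paper's, and you correctly identify the crux: one must bound the \emph{rescaled} deviation $\xib_\varepsilon(\tilde\zeta^\varepsilon,\tilde\vartheta^\varepsilon)$ at the contact point before \reff{eq: remainder for subsol} can be invoked. But the mechanism you propose for this step does not deliver it, and this is a genuine gap. With your test function $\psi^\varepsilon=v^0-\varepsilon^2\phi-\varepsilon^4\varpi\circ\xib_\varepsilon$, one has $v^\varepsilon_\ast-\psi^\varepsilon=\varepsilon^2(\phi-u^{\varepsilon\ast})$: the corrector in the test function cancels \emph{exactly} against the corrector subtracted in the definition of $u^\varepsilon$, so the coercivity of $\varpi$ buys you nothing beyond the local bound on $u^{\varepsilon\ast}$. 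Even reading your argument as acting on $\phi+\varpi\circ\xib_1-\bar u^{\varepsilon\ast}$, the coercivity $\varpi\circ\xib_1\ge c\abs{\vartheta-\theta^0(\zeta)}^2=c\,\varepsilon^2\abs{\xib_\varepsilon}^2$ together with Assumption~\reff{ass: u locally bounded from above} only yields $\abs{\vartheta^\varepsilon-\theta^0(\zeta^\varepsilon)}\le C$, i.e.\ $\abs{\xib_\varepsilon}\le C/\varepsilon$. That is off by a full factor of $\varepsilon^{-1}$ from what \reff{eq: remainder for subsol} requires, and the remainders genuinely contain terms of size $\varepsilon^2\abs{\xib_\varepsilon}^2$ (e.g.\ $\tfrac12\varepsilon^2\abs{\xib_\varepsilon^\top\sigma_S}^2\partial_{xx}\phi$ and the $\bar\sigma^0_{\varepsilon\xib_\varepsilon}\bar\sigma^{0\top}_{\varepsilon\xib_\varepsilon}$ contributions), which are then $O(1)$ of indefinite sign, so no contradiction follows.

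The paper's device, which has no counterpart in your sketch, is to dilate the corrector: the test function carries $(1+\eta)\varpi\circ\xib_\varepsilon$ rather than $\varpi\circ\xib_\varepsilon$, and the localization penalty $\phi^\varepsilon$ includes a quartic term $c_o\abs{\vartheta-\theta^0(\zeta)}^4$. After using the First Corrector Equation, the dilation leaves an uncancelled term $\tfrac{\eta}{2}\abs{\xib_\varepsilon^\top\sigma_S}^2\partial_{xx}v^0$, which is a \emph{negative definite} quadratic in $\xib_\varepsilon$ (by \reff{eq: ellipticity} and $\partial_{xx}v^0<0$); the resulting inequality $\iota\eta\gamma_o\abs{\xib_\varepsilon}^2/2\le\text{(bounded)}$ is what pins $\abs{\xib_\varepsilon(\tilde\zeta^\varepsilon,\tilde\vartheta^\varepsilon)}\le C_\eta$ uniformly in $\varepsilon$ (the sign check $1+\eta-(1+\eta)^2\le0$ on the gradient-squared term, and a separate estimate on $\hat\Lc^\varepsilon\bar\phi^\varepsilon$ generated by the $\vartheta$-dependent penalty, are needed along the way). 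One then passes to the limit $\varepsilon\to0$ first and $\eta\to0$ last. Without this $\eta$-perturbation — or some substitute producing a definite-sign quadratic in $\xib_\varepsilon$ — your Step ``one must arrange that $\xib_\varepsilon$ is bounded'' remains an unproved assertion, and the proof does not close.
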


        \proof

            Consider $\zeta_o\in \Dfi$ and $\vp\in C^{1,2}(\Dfi)$ such that
            \beq\label{eq: subsol Ac max strict for u}
                \max_{\zeta \in \Dfi}\strict(u^\ast(\zeta,\theta^o(\zeta))-\vp(\zeta))=u^\ast(\zeta_o,\vartheta_o)-\vp(\zeta_o)=0,
            \eeq
            where $\vartheta_o:=\theta^0(\zeta_o)$.
            We have to show that
            $
                -\Lc^{\theta^0}\vp(\zeta_o)\le a(\zeta_o).
            $

            \emph{ Step 1: provide a localizing sequence.} By \reff{eq: same relaxed semi-limits} and continuity of $\vp$,
                there exist $(\zeta^\veps,\vartheta^\veps)_{\veps>0}\subset\Dfi\x\R^d$ such that
                \beq\label{eq: sous sol Ac convergence t, x, y_eps}
                    \displaystyle (\zeta^\veps,\vartheta^\veps)\underset{\veps\to 0}{\longrightarrow}(\zeta_o,\vartheta_o)
                        \;,\quad
                    u^{\veps\ast}(\zeta^\veps,\vartheta^\veps)\underset{\veps\to0}{\longrightarrow}u^*(\zeta_o,\vartheta_o),
                    \quad\mbox{and}\quad p^\veps
                    \underset{\veps\to0}{\longrightarrow}0,
                \eeq
                where
                \beq\label{eq: definition p eps for referee}
                    p^\veps:= u^{\veps\ast}(\zeta^\veps,\vartheta^\veps)-\vp(\zeta^\veps).
                \eeq
                Now, on the one hand, Lemma \ref{lem: bar u unif bounded} guarantees the existence of $r_o, \veps_0>0$
                such that, with $B_o:=B_{r_o}(\zeta_o)\x B_{r_o}(\vartheta_o)$,\footnote{Here and in the following viscosity proofs, we always choose $r_o$ sufficiently small to guarantee that the respective neighborhoods are contained in $\Dfi$ resp.\ $\Df$.} we have
                $
                    b^* := \sup \left\{ u^{\veps\ast}(\zeta,\vartheta)\;,\; (\zeta,\vartheta)\in B_o \;,\; \veps\in(0,\veps_0] \right\}<\infty.
                $
                On the other hand, by Assumption \reff{ass: v smooth},
                there exists $\alpha\in(0,r_o]$ for which
                \beq\label{eq: subsol Ac pf bounded r_o}
                    \theta^0 \in\bar B_{\frac{r_o}4}(\vartheta_o), \quad \mbox{on }\bar B_\alpha(\zeta_o),
                \eeq
                and, for some $\iota>0$:
                \beq\label{eq: v_x greater iota}
                    2/\iota>-\partial_{xx}v^0 \wedge \partial_xv^0>\iota, \quad\mbox{on } \bar B_{\alpha}(\zeta_o).
                \eeq
                Now, choose $\mathbf d>0$ such that:
                \begin{equation*}\label{eq: subsol d for zeta zeta ' boundary}
                    \abs{\zeta-\zeta'}^4\ge\mathbf d,
                    \quad\pourtout (\zeta,\zeta')\in \left(\bar B_\alpha(\zeta_o)\backslash B_{\alpha/2}(\zeta_o)\right)\x\bar B_{\alpha/4}(\zeta_o).
                \end{equation*}
                By continuity of $\vp$, we have
                $
                    \sup \left\{ 2+b^\ast - \vp(\zeta)\; ; \; \zeta\in \bar B_{\alpha}(\zeta_o) \right\}=: M<+\infty,
                $
                and we in turn define the constant
                $
                    c_o:=M/(\mathbf d\wedge(\frac{r_o}4)^4)\;.
                $
                In view of \reff{eq: sous sol Ac convergence t, x, y_eps}, Assumption \reff{ass: v smooth}, as well as Lemma \ref{lem: explicit resolution 1st corrector},
                and reducing $\veps_o>0$ if necessary, we obtain:
                \beq\label{eq: subsol Ac eps_1}\bal
                        \left|\zeta^\veps-\zeta_o\right|\vee
                                \left|\vartheta^\veps-\vartheta_o\right|\le \frac{\alpha}{4}\;,
                        \quad \abs{\vartheta^\veps-\theta^0(\zeta^\veps)}^4\le1/3c_o\;,
                        &\\
                        \left| p^\veps\right|\le1, \quad
                            \mbox{and }
                        \varpi\circ\xib_1(\zeta^\veps,\vartheta^\veps)\le1/3,&
                    \quad\pourtout \veps\in(0,\veps_o].
                \eal\eeq
                Then, with $B_{\alpha}:=B_\alpha(\zeta_o)\x B_{r_o}(\vartheta_o)$, 
                observe that we still have
                \begin{equation*}\label{eq: subsol Ac sup for u eps}
                    u^{\veps\ast}(\zeta,\vartheta)\le b^\ast, \quad\pourtout (\zeta,\vartheta)\in \bar B_\alpha \mbox{ and }\veps\in(0,\veps_o].
                \end{equation*}

            \emph{ Step 2: construct a test function for $v^{\veps}_{\ast}$ and a sequence of local interior minimizers.}  For each $\veps\in(0,1)$, define
                $$
                    \phi^\veps:(\zeta,\vartheta)\in\Df\x\R^d\longmapsto
                        c_o \left(
                                \left| \zeta-\zeta^\veps \right|^4
                                + \left| \vartheta-\theta^0(\zeta) \right|^4
                            \right)
                $$
                and introduce the following subset of $\bar B_\alpha$:
                $$
                B_{o,\alpha}:=\bar{B}_{\alpha/2}(\zeta_o)\times \bar{B}_{r_0/2}(\vartheta_o).
                $$
                Recalling \reff{eq: subsol Ac pf bounded r_o}, \reff{eq: subsol Ac eps_1}, and the choice of $c_o$, it follows that
                \beq\label{eq: phi eps grand}
                     \displaystyle\phi^\veps(\zeta,\vartheta) \ge 2+b^*-\vp(\zeta),
                        \quad\pourtout\veps\le\veps_o\mbox{ and } (\zeta,\vartheta)\in \bar B_\alpha\backslash B_{o,\alpha}.
                \eeq
                On the other hand, the last estimate in the first line of \reff{eq: subsol Ac eps_1} gives:
                \beq\label{eq: subsol localization for pi eps}
                    \phi^\veps(\zeta^\veps,\vartheta^\veps)\le1/3.
                \eeq
                We now define, for all $\veps,\eta\in(0,1]$, the function
                $$
                    \psi^{\veps,\eta}:=
                        v^0
                        - \veps^2
                            \left(  p^\veps + \vp + \phi^\veps
                            \right)
                        - \veps^4 (1+\eta)\varpi\circ\xib_\veps,
                $$
                and show that $v^{\veps}_\ast-\psi^{\veps,\eta}$
                (or equivalently $I^{\veps,\eta}:=(v^{\veps}_\ast-\psi^{\veps,\eta})/\veps^2$)
                admits an interior local minimizer.
                By definition of $u^\veps$ in \reff{eq: def u_eps},
                $$\bal
                    I^{\veps,\eta}
                    &=
                        -u^{\veps\ast}
                        + ( p^\veps+\vp + \phi^\veps)
                        + \eta \varpi\circ\xib_1.
                \eal$$
                Combining the definition of $p^\veps$ with \reff{eq: subsol localization for pi eps}
                and the last term in \reff{eq: subsol Ac eps_1},
                we first notice that, for all $(\veps,\eta)\in(0,\veps_o]\x(0,1]$:
                $$
                    \inf_{\bar B_\alpha}I^{\veps,\eta}\le
                        \inf_{B_{o,\alpha}}I^{\veps,\eta}\le
                        I^{\veps,\eta}(\zeta^\veps,\vartheta^\veps)\le 2/3.
                $$
                On the other hand,
                because $\varpi\ge0$ by Lemma \ref{lem: explicit resolution 1st corrector}, it follows from
                \reff{eq: subsol Ac eps_1} and
                \reff{eq: phi eps grand} that
                $$
                    I^{\veps,\eta}(\zeta,\vartheta)\ge1, \quad\pourtout (\zeta,\vartheta)\in \bar B_\alpha\backslash B_{o,\alpha}\mbox{ and } \veps\in(0,\veps_o].
                $$
                Hence,
                by lower-semicontinuity of
                $I^{\veps,\eta}$
                and compactness of $B_{o,\alpha}$,
                there exists a minimizer
                $(\tilde \zeta^{\veps},\tilde \vartheta^{\veps})\in \bar B_{o,\alpha}\subset\bar B_\alpha$.
                (The latter also depends on $\eta$, but we do not explicitly note this dependence as it is of no importance here.)
               This minimizer satisfies, for all $\veps\in(0,\veps_o]$ and $\eta\in(0,1)$:
                \beq\label{eq: subsol Ac xi epx bounded}
                    I^{\veps,\eta}\left(\tilde \zeta^{\veps},\tilde \vartheta^{\veps}\right)\le0
                    \quad\mbox{and}\quad
                    \left|\veps\xib_\veps(\tilde\zeta^\veps,\tilde\vartheta^\veps)\right|
                            \vee\left| \tilde \zeta^\veps - \zeta_o \right|
                        \le r_1,
                \eeq
                for some constant $r_1>0$,
                where we recall that $\veps\xib_\veps(\tilde\zeta^\veps,\tilde\vartheta^\veps)=\tilde \vartheta^\veps-\theta^0(\tilde\zeta^\veps)$.

        \emph{Step 3: show that for each $\eta\in(0,1]$, there is $C_\eta>0$ such that
        $
            |\xib_\veps(\tilde\zeta^\veps,\tilde\vartheta^\veps)|\le C_\eta$, $\forall \veps\in(0,\veps_o].
        $}
            As $(\tilde{\zeta}^\veps,\tilde{\vartheta}^\veps)$ are interior local minimizers of $v_*^\veps-\psi^{\veps,\eta}$ by Step 2, the viscosity supersolution property of $v^\veps$ for \eqref{eq: viscosity solution for friction case} yields
            \beq\label{eq: subsol Ac supersol for v eps}
                -\left(\Lc^{\tilde\vartheta^\veps}+\Hc^\veps\right)\psi^{\veps,\eta}\left(\tilde \zeta^\veps,\tilde \vartheta^\veps\right)\ge0.
            \eeq
            Observe from \reff{eq: v_x greater iota}
            and \reff{eq: subsol Ac xi epx bounded} that, after possibly reducing $\veps_o>0$,
            we have $\partial_x\psi^{\veps,\eta}>0$ and
            $\veps^2\partial_x(\phi+\veps^2w^\veps)\le\iota\partial_xv^0$, for $\veps \in (0,\veps_o]$.
            Hence, the requirements of \reff{eq: remainder for subsol} in Lemma \ref{lem: remainder estimate} are satisfied so that, for all $\veps\in(0,\veps_o]$:
            \be
                \displaystyle \Lc^{\tilde\vartheta^\veps}\psi^{\veps,\eta}(\tilde\zeta^\veps,\tilde\vartheta^\veps) &=&
                    \veps^2
                    \left(
                            \frac12\abs{\xib_\veps^\top \sigma_S}^2\partial_{xx}v^0
                            -\Lc^{\theta^0}\bar\phi^\veps
                            -\frac12(1+\eta)\Tr{c_{\theta^0} D^2_{\xi\xi}\varpi}
                    \right)(\tilde\zeta^\veps,\tilde\vartheta^\veps)\nonumber\\
                    \displaystyle &&+\veps^2 \Rc^\veps_\Lc(\tilde\zeta^\veps,\tilde\vartheta^\veps),\nonumber\\
                \displaystyle \Hc^\veps\psi^{\veps,\eta}(\tilde\zeta^\veps,\tilde\vartheta^\veps)
                    &=&
                        \veps^2
                            \left(
                                \frac{(1+\eta)^2(D_\xi \varpi\circ\xib_\veps)^\top E^{-4}D_\xi \varpi\circ\xib_\veps}{4\partial_xv^0}
                                +\frac{\hat\Lc^\veps\bar\phi^\veps}{\veps^2}
                            \right)(\tilde\zeta^\veps,\tilde\vartheta^\veps)
                        \label{eq: simplification of Hc subsol}\\
                            \displaystyle && +\veps^2\Rc^\veps_\Hc.\nonumber
            \ee
            Here (recall \reff{eq: definition p eps for referee}),
            \beq\label{eq: definition bar phi eps for referee}
                \bar\phi^\veps:=p^\veps+\vp+\phi^\veps
            \eeq
            and $\Rc^\veps:=\Rc^\veps_\Lc+\Rc^\veps_\Hc$, which satisfies
            \beq\label{eq: subsol Ac remainder for Lc}
                \left|\Rc^\veps\right|(\tilde\zeta^\veps,\tilde\vartheta^\veps)\le c_1,
                \quad\pourtout \veps\in(0,\veps_o],
            \eeq
            for some constant $c_1>0$.
            Now, rewrite $\Lc^{\tilde\vartheta^\veps}\psi^{\veps,\eta}$ above using
            that $\varpi$ is a solution of the First Corrector Equation \reff{eq: first corrector equation}. For all $\veps\in(0,\veps_o]$,
            Estimate \reff{eq: subsol Ac supersol for v eps} then leads to:
            \beq\label{eq: subsol all operators}\bal
                &\left\{\frac\eta2\abs{\xib_\veps^\top \sigma_S}^2\partial_{xx}v^0
                            +\Lc^{\theta^0}\bar\phi^\veps
                    +(1+\eta)a
                                    -\Rc^\veps
                    +\frac{(1+\eta)(D_\xi \varpi\circ\xib_\veps)^\top E^{-4}D_\xi \varpi\circ\xib_\veps}{4\partial_xv^0}
                \right.\\
                &\quad
                    \left.
                                    -\frac{(1+\eta)^2(D_\xi \varpi\circ\xib_\veps)^\top E^{-4}D_\xi \varpi\circ\xib_\veps}{4\partial_xv^0}
                                -\frac{\hat\Lc^\veps\bar\phi^\veps}{\veps^2}
                    \right\}
                        (\tilde\zeta^\veps,\tilde\vartheta^\veps)\ge0.
            \eal\eeq
            Observe that, as $E$ is positive-definite and $\eta \geq 0$:
            $$
                \frac{[1+\eta-(1+\eta)^2](D_\xi \varpi\circ\xib_\veps)^\top E^{-4}D_\xi \varpi\circ\xib_\veps}{4\partial_xv^0}\le0.
            $$
            We prove in Step 4 below that there is a constant $c_2>0$ such that, for $\veps\in(0,\veps_o]$:
            \beq\label{eq: subsol Ac claim for dpi}
                                -\frac{\hat\Lc^\veps\bar\phi^\veps}{\veps^2}
                        (\tilde\zeta^\veps,\tilde\vartheta^\veps)\le c_2.
            \eeq
            Combining this with \reff{eq: subsol all operators}, \reff{eq: v_x greater iota}, \reff{eq: subsol Ac remainder for Lc},
            and the Ellipticity Condition \reff{eq: ellipticity}
            gives
            $$
                    c_1+c_2+
                            \left\{
                                    (1+\eta)a
                                    +\Lc^{\theta^0}\bar\phi^{\veps}
                            \right\}(\tilde\zeta^\veps,\tilde\vartheta^\veps)
                                    \ge(\iota\eta\gamma_o/2)\left|\xib_{\veps}\right|^2(\tilde\zeta^\veps,\tilde\vartheta^\veps),
                                    \quad \pourtout \veps\in(0,\veps_o],
            $$
            for some $\gamma_o>0$. The assertion of Step 3 now follows by taking into account the continuity of $a$ and $\Lc^{\theta^0}\bar\phi^{\veps}$ as well as
            \reff{eq: subsol Ac eps_1} and
            \reff{eq: subsol Ac xi epx bounded}.

            \emph{Step 4: prove \reff{eq: subsol Ac claim for dpi}.}
            Recall the definition of $\hat\Lc^{\veps}$ in \reff{eq: definition hat Lc};
            as $E$ and $k_2$ are positive-definite, it follows that
            $$\bal
                -\frac{\hat\Lc^\veps\bar\phi^\veps}{\veps^2}
                &\le
                        -\frac
                            {(1+\eta)(D_\vartheta\bar\phi^\veps)^\top E^{-4}(D_\xi \varpi\circ\xib_\veps)}
                            {2\veps\partial_xv^0}
                        -
                        \frac
                            {\partial_x\bar\phi^\veps}
                            {4(\partial_xv^0)^2}(D_\vartheta\bar\phi^\veps)^\top E^{-4}D_\vartheta\bar\phi^\veps\\
                &\le
                        -\frac
                            {(1+\eta)4c_o|\xib_1|^2\xib_1^\top E^{-4}k_2\xib_1}
                            {\veps^2\partial_xv^0}
                        -
                        \frac
                            {\partial_x\bar\phi^\veps}
                            {4(\partial_xv^0)^2}(D_\vartheta\bar\phi^\veps)^\top E^{-4}D_\vartheta\bar\phi^\veps\\
                &\le
                        -
                        \frac
                            {\partial_x\bar\phi^\veps}
                            {4(\partial_xv^0)^2}(D_\vartheta\bar\phi^\veps)^\top E^{-4}D_\vartheta\bar\phi^\veps,
            \eal$$
            where the second inequality follows from direct computations based on the definition of $\bar\phi^\eps$ in
            \reff{eq: definition bar phi eps for referee} and the construction of $\varpi$ in Lemma \ref{lem: explicit resolution 1st corrector}.
            By construction of $\bar\phi^\veps$, as well as \reff{eq: subsol Ac xi epx bounded} and \reff{eq: v_x greater iota},
            this yields the desired upper bound $c_2$ at $(\tilde\zeta^\veps,\tilde\vartheta^\veps)$.

        \emph{ Step 5: conclude the proof of the proposition.} By the previous step,
            $(\tilde \zeta^\veps,\xib_\veps(\tilde\zeta^\veps,\tilde\vartheta^\veps))_{\veps\in(0,\bar\veps_\eta]}$
            is uniformly bounded.
            Hence, there is $(\bar\zeta,\bar \xi)$ such that,
            possibly along a subsequence,
            $(\tilde \zeta^\veps,\xib_\veps(\tilde\zeta^\veps,\tilde\vartheta^\veps))\rightarrow(\bar \zeta,\bar \xi)$
            as $\veps\to 0$. Moreover, by \eqref{eq: subsol Ac max strict for u}, classical arguments in the theory of viscosity solutions give $\bar\zeta=\zeta_o$, see, e.g., \cite{CrIsLi92}.
            (Observe that $\bar\xi$ depends on $\eta$, but we shall see below that this dependence is harmless.)
            By \reff{eq: subsol Ac supersol for v eps},
            $$
                \lim_{\veps\to 0}
                -\frac1{\veps^2}\left(\Lc^{\tilde\vartheta^\veps}+\Hc^\veps\right)\psi^{\veps,\eta}
                    \left(\tilde \zeta^\veps,\tilde \vartheta^\veps\right)\ge0.
            $$
            Using \reff{eq: simplification of Hc subsol},
            we further deduce that
            $$\bal
                \lim_{\veps\to 0}
                    &\left(
                            -\frac12\abs{\xib_\veps^\top \sigma_S}^2\partial_{xx}v^0
                            +\Lc^{\theta^0}\vp
                            +\Lc^{\theta^0}\phi^\veps
                            +\frac{1+\eta}2\Tr{c_{\theta^0} D^2_{\xi\xi}\varpi\circ\xib_\veps}
                    \right.\\
                    &\qquad\left.
                                -\frac{(1+\eta)^2(D_\xi \varpi\circ\xib_\veps)^\top E^{-4}D_\xi \varpi\circ\xib_\veps}{4\partial_xv^0}
                            +\Rc^\veps
                            -\frac{\hat\Lc^\veps\phi^\veps}{\veps^2}
                    \right)
                \left(\tilde \zeta^\veps,\tilde \vartheta^\veps\right)\ge0,
            \eal$$
            where, by \reff{eq: remainder for subsol} in Lemma \ref{lem: remainder estimate}:
            $
                \Rc^\veps\left(\tilde \zeta^\veps,\tilde \vartheta^\veps\right)
                \rightarrow0,
                \quad\mbox{as }
                \veps\rightarrow0.
            $
            By definition of $\phi^\veps$ and Step 3,
            $(\Lc^{\theta^0}\phi^\veps-\frac{\hat\Lc^\veps\phi^\veps}{\veps^2})(\tilde \zeta^\veps,\tilde \vartheta^\veps)
            \rightarrow0$ as $\veps\rightarrow0$. Hence, also taking into account that $\varpi$ is a solution of the First Corrector Equation \reff{eq: first corrector equation}:
            \begin{equation}\label{eq:bla}
                    \left(
                            \Lc^{\theta^0}\vp
                            +\frac{\eta}2\Tr{c_{\theta^0} D^2_{\xi\xi}\varpi(\cdot,\bar\xi)}
                                -\frac{(2\eta+\eta^2)(D_\xi \varpi\circ(\cdot,\bar\xi))^\top E^{-4}D_\xi \varpi(\cdot,\bar\xi)}{4\partial_xv}
                                +a
                    \right)
                (\zeta_o)\ge0.
            \end{equation}
            Now, note that
            $$\frac{(2\eta+\eta^2)(D_\xi \varpi\circ(\zeta_o,\bar\xi))^\top E^{-4}D_\xi \varpi(\cdot,\bar\xi)}{4\partial_xv(\zeta_o)}\ge0$$
            due to \reff{eq: v_x greater iota}. Together with \eqref{eq:bla}, this shows
            $$
                    \left(
                            \Lc^{\theta^0}\vp
                            +\frac{\eta}2\Tr{c_{\theta^0} D^2_{\xi\xi}\varpi(\cdot,\bar\xi)}
                                +a
                    \right)
                (\zeta_o)\ge0.
            $$
      Finally, note that
            $\frac{\eta}2\mbox{Tr}[c_{\theta^0} D^2_{\xi\xi}\varpi(\zeta_o,\bar\xi)]
            =\eta\mbox{Tr}[c_{\theta^0} k_2(\zeta_o)]$
            does not depend on $\bar\xi$. We now send $\eta$ to zero to arrive at
            $
                    -\Lc^{\theta^0}\vp(\zeta_o)
                \le a(\zeta_o).
            $
            This completes the proof.
        \qed

    \subsubsection{Viscosity Supersolution Property}\label{sec: visco supersol}

        \begin{proposition}\label{prop: super sol for u}
          Suppose Assumptions \ref{prop:dpefric} and { \rm A} are satisfied. Then, $\zeta\in\Df\longmapsto u_\ast(\zeta,\theta^0(\zeta))=\bar{u}_\ast(\zeta,\theta^0(\zeta))$ is a viscosity supersolution of the Second Corrector Equation
          \reff{eq: 2nd corrector equation} on $\Dfi$.
        \end{proposition}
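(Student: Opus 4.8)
The proof will be the mirror image of Proposition~\ref{prop: sub sol for u}, with the roles of viscosity sub- and supersolutions interchanged throughout. Fix $\zeta_o\in\Dfi$ and $\vp\in C^{1,2}(\Dfi)$ realizing a strict local minimum, equal to $0$, of $\zeta\mapsto u_\ast(\zeta,\theta^0(\zeta))-\vp(\zeta)$; set $\vartheta_o:=\theta^0(\zeta_o)$, and note that the goal is to show $-\Lc^{\theta^0}\vp(\zeta_o)\ge a(\zeta_o)$. As in Steps~1--2 of Proposition~\ref{prop: sub sol for u}, I would first use the representation \reff{eq: same relaxed semi-limits} of $u_\ast$ as a $\liminf$ of the lower envelopes $u^\veps_\ast$, together with Lemma~\ref{lem: bar u unif bounded}, to produce a localizing sequence $(\zeta^\veps,\vartheta^\veps)\to(\zeta_o,\vartheta_o)$ along which $u^\veps_\ast(\zeta^\veps,\vartheta^\veps)\to u_\ast(\zeta_o,\vartheta_o)$ and $p^\veps:=u^\veps_\ast(\zeta^\veps,\vartheta^\veps)-\vp(\zeta^\veps)\to0$, and then fix a small closed ball $\bar B_\alpha$ around $(\zeta_o,\vartheta_o)$ on which Assumption~\reff{ass: v smooth} gives $2/\iota>-\partial_{xx}v^0\wedge\partial_xv^0>\iota$ for some $\iota\in(0,1)$.

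The essential difference from the subsolution proof lies in the perturbation: I would take the test function
$$\psi^{\veps,\eta}:=v^0-\veps^2\big(p^\veps+\vp+\phi^\veps\big)-\veps^4(1-\eta)\,\varpi\circ\xib_\veps,\qquad \phi^\veps(\zeta,\vartheta):=-c_o\big(\mod{\zeta-\zeta^\veps}^4+\mod{\vartheta-\theta^0(\zeta)}^4\big),$$
with $\eta\in(0,1)$ and $c_o>0$ large, so that the \emph{negative} quartic penalty together with the \emph{downscaled} corrector $(1-\eta)\varpi$ force, since $v^{\veps\ast}\le v^0$ and $\varpi\ge0$ by Lemma~\ref{lem: explicit resolution 1st corrector}, the difference $v^{\veps\ast}-\psi^{\veps,\eta}$ to be strictly smaller on $\partial\bar B_\alpha$ than near $(\zeta^\veps,\vartheta^\veps)$. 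Consequently $v^{\veps\ast}-\psi^{\veps,\eta}$ admits an interior local \emph{maximizer} $(\tilde\zeta^\veps,\tilde\vartheta^\veps)$; as $\psi^{\veps,\eta}$ then touches $v^{\veps\ast}$ from above at this point, the \emph{subsolution} part of Assumption~\ref{prop:dpefric} for \reff{eq: viscosity solution for friction case} yields $-(\Lc^{\tilde\vartheta^\veps}+\Hc^\veps)\psi^{\veps,\eta}(\tilde\zeta^\veps,\tilde\vartheta^\veps)\le0$ for $\veps$ small enough (one checks $\partial_x\psi^{\veps,\eta}>0$ near the maximizer, as in the subsolution proof).

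The main obstacle is again the a priori bound $\mod{\xib_\veps(\tilde\zeta^\veps,\tilde\vartheta^\veps)}\le C_\eta$. Expanding $-(\Lc^{\tilde\vartheta^\veps}+\Hc^\veps)\psi^{\veps,\eta}$ via Lemma~\ref{lem: remainder estimate} (with $\phi=p^\veps+\vp+\phi^\veps$, bounded with bounded derivatives on the relevant compact set, and $w=(1-\eta)\varpi$, which satisfies \reff{eq: estimate for w}) and then using that $(a,\varpi)$ solves the First Corrector Equation \reff{eq: first corrector equation}, the destabilizing term $\tfrac12\mod{\xib_\veps^\top\sigma_S}^2\partial_{xx}v^0$ \emph{cancels}; what remains is the coercive term $\tfrac{\eta(2-\eta)}{4\partial_xv^0}(D_\xi\varpi\circ\xib_\veps)^\top\Lambda^{-1}(D_\xi\varpi\circ\xib_\veps)$, which is bounded below by $c\,\eta\mod{\xib_\veps}^2$ because $\partial_xv^0>0$ and $k_2$ is positive definite under the ellipticity condition \reff{eq: ellipticity}. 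Crucially, this is the \emph{other} of the two quadratic-in-$\xi$ terms compared with Step~3 of Proposition~\ref{prop: sub sol for u}, reflecting the reversed direction of the DPE inequality; the negative sign of the quartic penalty is chosen precisely so that the contribution of $\hat\Lc^\veps\bar\phi^\veps/\veps^2$ is bounded \emph{above} (cf.\ \reff{eq: definition hat Lc} and Step~4 of Proposition~\ref{prop: sub sol for u}, with the sign reversed). Since the remainder grows at most like $1+\veps\mod{\xib_\veps}+\veps^2\mod{\xib_\veps}^2$, which is dominated by $c\,\eta\mod{\xib_\veps}^2$ once $\veps$ is small (after, if necessary, shrinking $\bar B_\alpha$ in a manner depending on $\eta$), one obtains $\mod{\xib_\veps(\tilde\zeta^\veps,\tilde\vartheta^\veps)}\le C_\eta$.

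Finally, I would conclude as in Step~5 of Proposition~\ref{prop: sub sol for u}: pass to a subsequence along which $\tilde\zeta^\veps\to\zeta_o$ (by strictness of the minimum, via the classical viscosity arguments of \cite{CrIsLi92}) and $\xib_\veps(\tilde\zeta^\veps,\tilde\vartheta^\veps)\to\bar\xi$, and let $\veps\downarrow0$ in $-(\Lc^{\tilde\vartheta^\veps}+\Hc^\veps)\psi^{\veps,\eta}\le0$, using $\Rc^\veps\to0$ (now permissible by \reff{eq: remainder for subsol}, since $\xib_\veps$ is bounded), $(\Lc^{\theta^0}\phi^\veps-\hat\Lc^\veps\phi^\veps/\veps^2)(\tilde\zeta^\veps,\tilde\vartheta^\veps)\to0$, and the First Corrector Equation for $\varpi$; this leaves
$$\Big(\Lc^{\theta^0}\vp+(1-\eta)\,a+\tfrac{\eta(2-\eta)}{4\partial_xv^0}\,(D_\xi\varpi(\cdot,\bar\xi))^\top\Lambda^{-1}D_\xi\varpi(\cdot,\bar\xi)\Big)(\zeta_o)\le0,$$
where I used $\tfrac\eta2\Tr{c_{\theta^0}D^2_{\xi\xi}\varpi(\cdot,\bar\xi)}=\eta\,\Tr{c_{\theta^0}k_2}=\eta\,a$, which does not depend on $\bar\xi$. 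Dropping the nonnegative last term and letting $\eta\downarrow0$ then gives $\Lc^{\theta^0}\vp(\zeta_o)+a(\zeta_o)\le0$, i.e.\ $-\Lc^{\theta^0}\vp(\zeta_o)\ge a(\zeta_o)$, which is the claimed supersolution property.
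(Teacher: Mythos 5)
There is a genuine gap, and it sits exactly at the step you yourself call the main obstacle. Your skeleton (mirroring Proposition \ref{prop: sub sol for u}, downscaling the corrector to $(1-\eta)\varpi$, extracting coercivity from the gradient term of the First Corrector Equation \reff{eq: first corrector equation}, then letting $\veps\to0$ and $\eta\to0$) is the right idea, and the cancellation you describe is correct \emph{provided} the localization penalty contributes nothing at order one to the Hamiltonian. Your negative quartic penalty violates precisely this. Flipping the sign of $\phi^\veps$ only flips the sign of the \emph{cross} term in $\hat\Lc^\veps\bar\phi^\veps$ (see \reff{eq: definition hat Lc}); the square term $\frac{(D_\vartheta\bar\phi^\veps)^\top E^{-4}D_\vartheta\bar\phi^\veps}{4\veps^2\partial_xv^0}$ is nonnegative regardless of that sign. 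In the subsolution proof one needs an \emph{upper} bound on $-\hat\Lc^\veps\bar\phi^\veps/\veps^2$, so this square term can simply be discarded; in your mirrored inequality the bound needed is on $+\hat\Lc^\veps\bar\phi^\veps/\veps^2$, and the square term now stands on the wrong side. With $D_\vartheta\phi^\veps=-4c_o\mod{\xib_1}^2\xib_1$ it equals $4c_o^2\mod{\xib_1}^4\,\xib_\veps^\top E^{-4}\xib_\veps/\partial_xv^0$, i.e.\ a quadratic in $\mod{\xib_\veps}$ whose coefficient contains no factor $\eta$ and no positive power of $\veps$: at the maximizer, $\mod{\xib_1}$ is only bounded by a constant determined by the ball and by $c_o$, and the boundary localization forces $c_o$ to be \emph{large} (at least of the order of $\sup\mod{k_2}$ divided by the square of the $\vartheta$-radius, so that the quartic beats the quadratic growth of $(1-\eta)\varpi\circ\xib_1$ on the boundary). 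The inequality you obtain in the Step~3 analogue therefore has the form $c\,\eta\mod{\xib_\veps}^2\le C+C'\mod{\xib_\veps}^2$ with $C'$ independent of $\eta$ and not small, and no bound $\mod{\xib_\veps(\tilde\zeta^\veps,\tilde\vartheta^\veps)}\le C_\eta$ follows; Steps 3 through 5 of your argument then collapse. The assertion that the reversed sign of the penalty makes $\hat\Lc^\veps\bar\phi^\veps/\veps^2$ bounded above is false, because the gradient-squared contributions are sign-definite irrespective of the sign of the penalty.

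This is exactly the structural difficulty the paper's supersolution proof is designed to avoid, and it explains why that proof is so much heavier than the subsolution one. There the penalty $\phi^\veps(\zeta)=\vp(\zeta)+p^\veps-c_o\mod{\zeta-\zeta^\veps}^4$ depends on $\zeta$ only, so the $\vartheta$-gradient of the test function consists solely of the corrector part; compactness in $\vartheta$ is instead recovered by truncating the corrector as $(1-\delta)h^\eta(\cdot/\xi^{\ast,\delta})\varpi$ with a self-consistently chosen radius $\xi^{\ast,\delta}$, working first with an approximate maximizer, and adding an $\veps^4$-scaled bump $f$ whose $\vartheta$-gradient is harmless in the $\veps^{-2}$-scaled Hamiltonian; one then shows a posteriori (Steps 4 and 5 of the paper) that the maximizer lies where the cutoff is inactive, and sends $\veps$, $\eta$, $\delta$ to zero in that order. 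Any repair of your shortcut would have to respect the same principle, namely that no order-one $\vartheta$-dependent localization term may enter the test function that is fed into $\Hc^\veps$, so the simpler mirror-image construction you propose does not go through as written.
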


        \proof

        Consider $\zeta_o\in\Dfi$ and $\vp\in C^{1,2}(\Dfi)$ such that
        \begin{equation}\label{eq:min}
            \min_{\zeta \in \Dfi}(\mbox{strict})(u_*(\zeta,\theta^o(\zeta))-\vp(\zeta))=u_*(\zeta_o,\vartheta_o)-\vp(\zeta_o)=0,
        \end{equation}
        where $\vartheta_o:=\theta^0(\zeta_o)$. We have to show
        $
            -\Lc^{\theta^0}\vp(\zeta_o)\ge a(\zeta_o).
        $
            By \reff{eq: same relaxed semi-limits} and continuity of $\vp$,
            there exist $(\zeta^\veps,\vartheta^\veps)_{\veps>0}\subset\Dfi\x\R^d$ such that
            \begin{equation*}\label{eq: sur sol convergence t, s, p_eps}
                (\zeta^\veps,\vartheta^\veps)\underset{\veps\to 0}{\longrightarrow}(\zeta_o,\vartheta_o),
                \quad u^{\veps}_{\ast}(\zeta^\veps,\vartheta^\veps)\underset{\veps\to 0}{\longrightarrow}u_*(\zeta_o,\vartheta_o),
                \quad \mbox{and } p^\veps
                \underset{\veps\to 0}{\longrightarrow}0,
            \end{equation*}
            where $p^\veps:= u^\veps_\ast(\zeta^\veps,\vartheta^\veps)-\vp(\zeta^\veps)$.
            By Assumption \reff{ass: v smooth} and Lemma \ref{lem: explicit resolution 1st corrector},
            there are $r_o>0$ and $\veps_o\in(0,1]$ satisfying
                \beq\label{eq: supersol Ac t eps small enough}
                    |\zeta^\veps-\zeta_o| \le \frac{r_o}{2}, \quad
                    |p^\veps|\le 1,
                    \quad \mbox{and} \quad
                    \varpi\circ\xib_1(\zeta^\veps,\vartheta^\veps)\le1/3,
                    \quad\pourtout\veps\le\veps_0.
                \eeq
            Moreover, Assumption \reff{ass: v smooth} ensures the existence of $\iota>0$ such that
            \beq\label{eq: sursol v_x iota}
                    2/\iota>-\partial_{xx}v^0 \wedge \partial_xv^0>2\iota, \quad\mbox{on } \bar B_{r_o}(\zeta_o).
            \eeq

        \emph{ Step 1: for each $\veps\in(0,\bar\veps]$,
        provide
        a penalization function $\phi^{\veps}$, in order to construct a convenient test function for $v^\veps$ in Steps 2 and 3.
        Also provide a constant $\xi^*$, independent of $\veps$, that will be used in Steps 5 and 6.
        }

            As $\vp$ is smooth, there exists a constant $M<\infty$ such that
            \beq\label{eq: supersol Ac sup vp finite c_o}
                \sup\left\{\vp(\zeta)\; ; \; \zeta\in\bar B_{r_o}(\zeta_o)\right\} \le M-4.
            \eeq
            In view of \reff{eq: supersol Ac t eps small enough}, there is a finite $\mathbf d>0$ so that
            $|\zeta-\zeta^\veps|^4\ge\mathbf d$ for all $\zeta\in\partial B_{r_o}(\partial_o)$,
            and we choose $c_o>0$ such that $c_o\mathbf d\ge M$.
          With this notation, define
            $$
                \phi^\veps(\zeta) := \vp(\zeta) + p^\veps - c_o|\zeta-\zeta^\veps|^4,
            $$
            and observe from \reff{eq: supersol Ac t eps small enough},
            \reff{eq: supersol Ac sup vp finite c_o}, and the choice of $c_o$ that
            \beq\label{eq: supersol Ac diff vp and phi eps}
                    \phi^\veps(\zeta)
                    \le -3, \quad\pourtout \zeta\in\partial B_{r_o}(\zeta_o)
                    \mbox{ and } \veps\in(0,\veps_o].
            \eeq
            Recall the definition of $p^\veps$ and the last term in \reff{eq: supersol Ac t eps small enough},
            and observe for later use that
            \beq\label{eq: supersol Ac u - vp = 0 in t eps}
                -\bar u_\ast^\veps(\zeta^\veps,\vartheta^\veps)+\phi^\veps(\zeta^\veps)
                \ge-1/3, \quad\pourtout \veps\in(0,\veps_o].
            \eeq
            Now, on the one hand, combining \reff{eq: sursol v_x iota} with the positive-definiteness of $k_2E^{-4}k_2$ yields the existence of
            $\gamma_E>0$ such that
            \beq\label{eq: sursol gamma E}
                \frac{{\rm \mathbf{x}}^\top (k_2E^{-4}k_2)(\zeta) {\rm \mathbf{x}}}{4\partial_xv(\zeta)}
                \ge \gamma_E \abs{{\rm \mathbf{x}}}^2,
                \quad\pourtout(\zeta,{\rm \mathbf{x}})\in \bar B_{r_o}(\zeta_o)\x\R^d.
            \eeq
            On the other hand, \reff{eq: sursol v_x iota} together with the continuity of $E^{-4}$ and $k_2$ ensures that
            there is $K_E>0$ such that
            \beq\label{eq: sursol K E}
                \frac{\abs{E^{-4}}\abs{k_2}^2(\zeta)}{4\partial_xv(\zeta)}\le K_E,
                \quad\pourtout \zeta\in\bar B_{r_o}(\zeta_o).
            \eeq
            Also denote for later use by $K_0, K_2, K_{\theta^0}>0$ three finite constants such that
            \beq\label{eq: sursol K 2 K pf}
                2\abs{k_2(\zeta)}\le K_2\;,
                \quad
                \abs{c_{\theta^0}(\zeta)}\le 2K_{\theta^0},
                \et
                \abs{\Lc^{\theta^0}\phi^0(\zeta)}\le K_0,
                \quad\pourtout\zeta\in\bar B_{r_o}(\zeta_o),
            \eeq
            where $\phi^0(\zeta):=\vp(\zeta) - c_o|\zeta-\zeta_o|^4$.
            By a slight adaptation of \cite[Lemma 5.4]{possamai.al.12},
            there exist $(h^\eta)_{\eta\in(0,1]}\subset C^\infty(\R^d;[0,1])$ and $(a_\eta)_{\eta\in(0,1]}\subset (1,\infty)$ satisfying
            \beq\label{eq: sursol estim h eta}\begin{matrix}
                \displaystyle h^\eta=1, \quad \mbox{on } \bar B_1(0)\;,
                \quad
                h^\eta=0,\quad \mbox{on } \bar B^c_{a_\eta}(0)\;,\lpt6
                \displaystyle
                \abs{{\rm \mathbf{x}}}\abs{D_{{\rm \mathbf{x}}}h^\eta({\rm \mathbf{x}})}\le\eta
                \et
                \abs{{\rm \mathbf{x}}}^2\abs{D^2_{{\rm \mathbf{x}}{\rm \mathbf{x}}}h^\eta({\rm \mathbf{x}})}\le C^\ast,
            \end{matrix}\eeq
            for all ${\rm \mathbf{x}}\in\R^d$ and some constant $C^\ast>0$ independent of $\eta$.
            Finally, for each $\delta\in(0,1]$, we choose $\xi^{\ast,\delta}>0$ satisfying
            \begin{equation*}\label{eq: construction xi ast}
                (\xi^{\ast,\delta})^2=1+\frac{2[K_0+K_{\theta^0} K_2(6+C^\ast)}{\gamma_E(2\delta-\delta^2)}.
            \end{equation*}

        \emph{ Step 2: construct a ``first draft'' of a test function for $v^\veps$, that will be used to construct the ``true'' test function in Step 3.}
        
            For every $(\veps,\eta,\delta)\in(0,\veps_o]\x(0,1)^2$,
            define
            $$
                \psi^{\veps,\eta,\delta} := v^0
                    -\veps^2\phi^\veps
                    - \veps^4 (\varpi H^{\eta,\delta})\circ\xib_\veps,
            $$
            where
            $$H^{\eta,\delta}:\xi\in\R^d\longmapsto (1-\delta)h^\eta\left(\frac\xi{\xi^{*,\delta}}\right),$$
            the normalized deviation
            $\xib_\veps$ is defined as in \reff{eq: fast variable},
            and $\varpi$ is the solution of the first corrector equation from Lemma \ref{lem: explicit resolution 1st corrector}.
            We want to construct a local maximizer of $v^{\veps\ast}-\psi^{\veps,\eta,\delta}$
            (or equivalently $I^{\veps,\eta,\delta}:=\frac1{\veps^2}(v^{\veps\ast}-\psi^{\veps,\eta,\delta})$).
            However, it will turn out below that $\psi^{\veps,\eta,\delta}$ needs to be modified further to make this possible.
            Indeed, consider
            \begin{eqnarray*}
                \displaystyle I^{\veps,\eta,\delta}
                &=&
                        -\bar u^\veps_\ast+\phi^\veps +\veps^2(\varpi H^{\eta,\delta})\circ\xib_\veps.
                    \label{eq: supersol Ac I eps eta with bar u}
            \end{eqnarray*}
            By \reff{eq: supersol Ac u - vp = 0 in t eps} and because $\varpi H^{\eta,\delta}\ge0$,
            \beq\label{eq: supersol Ac I=0 in t eps}
                I^{\veps,\eta,\delta}(\zeta^\veps,\vartheta^\veps)\ge-1/3.
            \eeq
            On the other hand,
 the construction of $\varpi$ in Lemma \ref{lem: explicit resolution 1st corrector} together
            with \reff{eq:baru},
            \reff{eq: supersol Ac t eps small enough},
            \reff{eq: sursol K 2 K pf}
            $\eta,\delta\in(0,1)$, and
            $0\le H^{\eta,\delta}(\xi)\le\1_{\{|\xi|\le a_\eta\xi^*\}}$
            implies that, for all $(\zeta,\vartheta)\in\bar B_{r_o}(\zeta_o)\x\R^d$:
            \be
                \displaystyle I^{\veps,\eta,\delta}(\zeta,\vartheta) &\le&
                    \phi^\veps(\zeta)+K_2\veps^2|\xib_\veps|^2\1_{\{|\xibu_\veps|\le a_\eta\xi^{*,\delta}\}}(\zeta,\vartheta)\nonumber\\
                \displaystyle &\le& \phi^\veps(\zeta)+K_2\veps^2(a_\eta\xi^{*,\delta})^2\nonumber\\
                \displaystyle &\le& \phi^\veps(\zeta)+1,\qquad\pourtout \veps\le\veps_{\eta,\delta},\label{eq: supersol Ac min for I eps eta}
            \ee
            where $\veps_{\eta,\delta}:= \veps_o\wedge(K_2^{1/2}a_\eta\xi^{*,\delta})^{-1}$.
            Observe that in \reff{eq: supersol Ac min for I eps eta},
            unlike in the proof of the subsolution property in Proposition \ref{prop: sub sol for u}, deviations of $\vartheta$ from $\theta^0(\zeta)$ are not penalized by $\phi^\veps$.
            Hence, the supremum -- even if it is finite -- is not necessarily attained.

            Define the set $\Qc_o:=\{(\zeta,\vartheta)\in\Dfi\x\R^d: \zeta\in \bar B_{r_o}(\zeta_o)\}$,
            and observe from \reff{eq: supersol Ac min for I eps eta} that
            $$
                \sup_{(\zeta,\vartheta)\in\Qc_o}I^{\veps,\eta,\delta}(\zeta,\vartheta)\le
                    \sup_{\zeta\in\bar B_{r_o}(\zeta_o)}\left\{\phi^\veps(\zeta)+1\right\},\quad\pourtout \veps\le\veps_{\eta,\delta}.
            $$
            Hence, by compactness of $\bar B_{r_o}(\zeta_o)$, continuity of $\phi^\veps$,
            \reff{eq: supersol Ac t eps small enough},
            and the fact that $\veps_{\eta,\delta}\le\veps_o$,
            we have:
            \begin{equation*}\label{eq: inf of I eps eta in the interior}
                \Ic^{\veps,\eta,\delta}:=\sup_{(\zeta,\vartheta)\in\Qc_o}I^{\veps,\eta,\delta}(\zeta,\vartheta)<\infty, \quad \forall \veps\le\veps_{\eta,\delta}.
            \end{equation*}
            As a result, for each $\veps\in(0,\veps_{\eta,\delta}]$,
            there exists $(\hat \zeta^{\veps,\eta,\delta},\hat \vartheta^{\veps,\eta,\delta})\in$Int$(\Qc_o)$ satisfying
            \beq\label{eq: supersol Ac n-1 optimal minimizer for I}
                I^{\veps,\eta,\delta}\left(\hat \zeta^{\veps,\eta,\delta},\hat \vartheta^{\veps,\eta,\delta}\right)\ge \Ic^{\veps,\eta,\delta}-\frac{\veps^2}2.
            \eeq

        \emph{ Step 3: for each $\eta,\delta\in(0,1)$ and $\veps\in(0,\veps_{\eta,\delta}]$,
        finally provide a test function $\bar\psi^{\veps,\eta,\delta}$ and
        a test point $(\tilde \zeta^{\veps,\eta,\delta},\tilde \vartheta^{\veps,\eta,\delta})\in$Int$(\Qc_o)$, satisfying
        $$
            \max_{\Qc_o}(v^{\veps\ast}-\bar\psi^{\veps,\eta,\delta})
            =(v^{\veps\ast}-\bar\psi^{\veps,\eta,\delta})(\tilde \zeta^{\veps,\eta,\delta},\tilde \vartheta^{\veps,\eta,\delta}).
        $$}

            Introduce an even real-valued function $f\in C^\infty_b(\R)$ satisfying
            $0\le f\le1$, $f(0)=1$ and $f(x)=0$ whenever $|x|\ge1$.
            Also fix $\eta,\delta\in(0,1)$ and $\veps\in(0,\veps_{\eta,\delta}]$.
            Consider
            $$
                \bar\psi^{\veps,\eta,\delta}(\cdot,\vartheta) := \psi^{\veps,\eta,\delta}(\cdot,\vartheta)-\veps^4f\left( \left|\vartheta-\hat \vartheta^{\veps,\eta,\delta}\right| \right)
            $$
            as well as
            $$
                \bar I^{\veps,\eta,\delta}(\cdot,\vartheta) :=\frac1{\veps^2}\left(v^{\veps\ast}-\bar\psi^{\veps,\eta,\delta}\right)(\cdot,\vartheta)
                    =I^{\veps,\eta,\delta}(\cdot,\vartheta)+\veps^2f\left( \left|\vartheta-\hat \vartheta^{\veps,\eta,\delta}\right| \right).
            $$
            By \reff{eq: supersol Ac n-1 optimal minimizer for I} and $f(0)=1$,
            \beq\label{eq: link I eps eta and I eps eta n}
                \bar I^{\veps,\eta,\delta}\left(\hat \zeta^{\veps,\eta,\delta},\hat \vartheta^{\veps,\eta,\delta}\right)
                    =I^{\veps,\eta,\delta}\left(\hat \zeta^{\veps,\eta,\delta},\hat \vartheta^{\veps,\eta,\delta}\right)+\veps^2
                    \ge \Ic^{\veps,\eta,\delta}+\frac{\veps^2}{2}.
            \eeq
            Moreover, by definition of $f$, if $\vartheta\in\R^d$ satisfies $|\vartheta-\hat\vartheta^{\veps,\eta,\delta}|>1$ then
            $$
                \bar I^{\veps,\eta,\delta}(\zeta,\vartheta)=I^{\veps,\eta,\delta}(\zeta,\vartheta).
            $$
            Hence, setting $\Qc^\veps_1:=\{(\zeta,\vartheta)\in\Qc_o: |\vartheta-\hat\vartheta^{\veps,\eta,\delta}|\le1\}$
            and because $(\hat \zeta^{\veps,\eta,\delta}, \hat \vartheta^{\veps,\eta,\delta})\in\Qc^\veps_1$,
            this equality combined with \reff{eq: link I eps eta and I eps eta n}
            implies
            $$
                \sup_{\Qc_1^\veps}\bar I^{\veps,\eta,\delta}>\sup_{\Qc_o}I^{\veps,\eta,\delta}
                    \ge\sup_{\Qc_o\backslash\Qc_1^\veps}I^{\veps,\eta,\delta}=\sup_{\Qc_o\backslash\Qc_1^\veps}\bar I^{\veps,\eta,\delta}.
            $$
            As a result:
            $$
                \sup_{(\zeta,\vartheta)\in\Qc_o}\bar I^{\veps,\eta,\delta}(\zeta,\vartheta)=\sup_{(\zeta,\vartheta)\in\Qc_1^\veps}\bar I^{\veps,\eta,\delta}(\zeta,\vartheta).
            $$
            Thus, by upper-semicontinuity of $\bar I^{\veps,\eta,\delta}$
            and compactness of $\Qc_1^\veps$,
            there exists $(\tilde \zeta^{\veps,\eta,\delta},\tilde\vartheta^{\veps,\eta,\delta})\in\Qc_o$
            maximizing $\bar I^{\veps,\eta,\delta}$.
            In fact, $(\tilde \zeta^{\veps,\eta,\delta},\tilde \vartheta^{\veps,\eta,\delta})\in$Int$(\Qc_o)$,
            because \reff{eq: supersol Ac t eps small enough}, \reff{eq: supersol Ac I=0 in t eps}, $f\ge0$,
            and $\veps\in(0,\veps_{\eta,\delta}]$ give
            $$
                \bar I^{\veps,\eta,\delta}\left(\tilde \zeta^{\veps,\eta,\delta},\tilde \vartheta^{\veps,\eta,\delta}\right)
                    \ge \bar I^{\veps,\eta,\delta}\left(\zeta^\veps,\vartheta^\veps\right)
                    \ge I^{\veps,\eta,\delta}\left(\zeta^\veps,\vartheta^\veps\right)=0,
            $$
            whereas \reff{eq: supersol Ac diff vp and phi eps},
            \reff{eq: supersol Ac min for I eps eta}, $f\le1$, and $\veps\in(0,\veps_{\eta,\delta}]$ with $\veps_{\eta,\delta}\le1$ imply
            \begin{equation*}\label{eq: supersol Ac (t,x,y) on the boundary of the ball}
                \displaystyle
                \bar I^{\veps,\eta,\delta}
                \le I^{\veps,\eta,\delta}
                \le -2+\veps^2<0,
                \quad\mbox{on } \partial\Qc_o.
            \end{equation*}

        \emph{Step 4: show that, for each $\eta,\delta\in(0,1)$,
        $\{\xib_\veps(\tilde\zeta^{\veps,\eta,\delta},\tilde\vartheta^{\veps,\eta,\delta})\;;\;\veps\in(0,\bar\veps_{\eta,\delta}]\}$ is uniformly bounded and therefore
        converges along a subsequence towards some $\bar\xi^{\eta,\delta}\in\R^d$ as $\veps\to 0$.}
        
                By the previous step and Proposition \ref{eq: viscosity solution for friction case},
                \begin{equation*}\label{eq: super sol Ac for v eps}
                    -\left(\Lc^{\tilde\vartheta^{\veps,\eta,\delta}}+\Hc^\veps\right)\bar\psi^{\veps,\eta,\delta}
                        (\tilde \zeta^{\veps,\eta,\delta},\tilde \vartheta^{\veps,\eta,\delta})\le0.
                \end{equation*}
                Moreover,
                by \reff{eq: sursol v_x iota},
                construction of $H^{\eta,\delta}$, as $\xi^\ast$ does not depend on $\veps$ and $f\in C^\infty_b(\R)$,
                possibly diminishing $\veps_{\eta,\delta}>0$ yields
                $\partial_x\bar\psi^{\veps,\eta,\delta}(\tilde \zeta^{\veps,\eta,\delta},\tilde \vartheta^{\veps,\eta,\delta})>0$
                and $\veps^2\partial_x(\phi+\veps^2(\varpi H^{\eta,\delta})\circ\xib_\veps)\le\iota\partial_xv^0$.
                Applying \reff{eq: remainder for supersol} in Lemma \ref{lem: remainder estimate} then gives
                \beq\label{eq: sursol remainder}\bal
                        \left\{
                            -\frac12\abs{\xib_\veps^\top \sigma_S}^2\partial_{xx}v^0
                            +\Lc^{\theta^0}\bar\phi^\veps
                            +\frac12\Tr{c_{\theta^0} D^2_{\xi\xi}(\varpi H^{\eta,\delta})\circ\xib_\veps}
                            \right.\quad&\\
                            \left.
                                -\mathcal{R}^\veps_\Lc
                                -\frac{(D_\vartheta\bar\psi^{\veps,\eta,\delta})^\top E^{-4}D_\vartheta\bar\psi^{\veps,\eta,\delta}}{4\veps^6\partial_x\bar\psi^{\veps,\eta,\delta}}
                        \right\}&(\tilde \zeta^{\veps,\eta,\delta},\tilde \vartheta^{\veps,\eta,\delta})
                        \le 0,
                \eal\eeq
                where $\bar\phi^\veps(\cdot,\vartheta):=\phi^\veps-\veps^2f(|\vartheta-\hat\vartheta^{\veps,\eta,\delta}|)$ and,
                for some constant $C>0$ and all $\veps\in(0,\veps_{\eta,\delta}]$:
                $$
                    |\mathcal{R}^\veps_\Lc|(\tilde \zeta^{\veps,\eta,\delta},\tilde \vartheta^{\veps,\eta,\delta})\le C\left(\veps+\abs{\veps\xib_\veps}+\abs{\veps\xib_\veps}^2\right)(\tilde \zeta^{\veps,\eta,\delta},\tilde \vartheta^{\veps,\eta,\delta}).
                $$
                Assume now that $\{\xib_\veps(\tilde \zeta^{\veps,\eta,\delta},\tilde\vartheta^{\veps,\eta,\delta})\;;\;\veps\in(0,\bar\veps_{\eta,\delta}]\}$ is \emph{not} uniformly bounded along some subsequence.
                Then, by construction of $H^{\eta,\delta}$ and as $\xi^{\ast,\delta}$ does not depend on $\veps$,
                it follows that $(\varpi H^{\eta,\delta})\circ\xib_\veps$ and all of its derivatives vanish.
                On the other hand, $f\in C^\infty_b(\R)$ implies that
                $|(D_\vartheta\bar\psi^{\veps,\eta,\delta})^\top E^{-4}D_\vartheta\bar\psi^{\veps,\eta,\delta}|\le \veps^8 c_f$ for some constant $c_f$.
                Finally, by construction of $\bar\phi^{\veps,\eta,\delta}$
                and $\tilde\zeta^{\veps,\eta,\delta}\in\bar B_{r_o}(\zeta_o)$,
                we conclude that
                $$
                    \frac{(D_\vartheta\bar\psi^{\veps,\eta,\delta})^\top E^{-4}D_\vartheta\bar\psi^{\veps,\eta,\delta}}{4\veps^6\partial_x\bar\psi^{\veps,\eta,\delta}}
                        (\tilde \zeta^{\veps,\eta,\delta},\tilde \vartheta^{\veps,\eta,\delta})
                        \rightarrow0,
                        \quad\mbox{as } \veps\rightarrow0.
                $$
                After possibly increasing $C>0$, it follows that
                $$
                        \left\{
                            -\frac12\abs{\xib_\veps^\top \sigma_S}^2\partial_{xx}v^0
                            +\Lc^{\theta^0}\bar\phi^\veps
                        \right\}(\tilde \zeta^{\veps,\eta,\delta},\tilde \vartheta^{\veps,\eta,\delta})
                        \le C\left(1+\abs{\veps\xib_\veps}+\abs{\veps\xib_\veps}^2\right)(\tilde \zeta^{\veps,\eta,\delta},\tilde \vartheta^{\veps,\eta,\delta}).
                $$
                Denote by $\gamma>0$ the constant in \reff{eq: ellipticity} corresponding to the set $\bar B_{r_o}(\zeta_o)$.
                Combining \reff{eq: sursol v_x iota} with the continuity of $\Lc^{\theta^0}\bar\phi_\veps$
                and $\tilde\zeta^{\veps,\eta,\delta}\in \bar B_{r_o}(\zeta_o)$,
                we then obtain
                $$
                    \gamma\iota|\xib_\veps|^2(\tilde\zeta^{\veps,\eta,\delta},\tilde\vartheta^{\veps,\eta,\delta})
                    \le C\left(1+\abs{\veps\xib_\veps}+\abs{\veps\xib_\veps}^2\right)
                        (\tilde \zeta^{\veps,\eta,\delta},\tilde \vartheta^{\veps,\eta,\delta}).
                $$
                This contradicts the assumption that $\{\xib_\veps(\tilde \zeta^{\veps,\eta,\delta},\tilde\vartheta^{\veps,\eta,\delta})\;;\;\veps\in(0,\bar\veps_{\eta,\delta}]\}$ is unbounded. In particular, along a subsequence,
                $(\tilde \zeta^{\veps,\eta,\delta},\xib_\veps(\tilde \zeta^{\veps,\eta,\delta},\tilde \vartheta^{\veps,\eta,\delta}))$
                therefore converges towards some finite
                $(\bar \zeta^{\eta,\delta},\bar\xi^{\eta,\delta})\in \Dfi\x\R^d$
                as $\veps\rightarrow0$.

        \emph{Step 5: show that, for each $\delta\in(0,1)$,
        there is $\bar\eta_\delta\in(0,1)$
        such that $\{\bar\xi^{\eta,\delta}\;;\;\eta\in(0,\bar\eta_\delta]\}\subset B_{\xi^{\ast,\delta}}(0)$
        and therefore converges, possibly along a subsequence, to a point $\hat\xi^\delta\in B_{\xi^{\ast,\delta}}(0)$.}
        
                First, notice that the previous step implies that the requirements of \reff{eq: remainder for subsol}
                in Lemma \ref{lem: explicit resolution 1st corrector} are satisfied,
                so that the remainder $\Rc^\veps_\Lc(\tilde \zeta^{\veps,\eta},\tilde \vartheta^{\veps,\eta})$
                in \reff{eq: sursol remainder}
                converges to zero as $\veps\rightarrow0$.
                By continuity of all the involved functions, sending $\veps\to 0$ in \reff{eq: sursol remainder} gives
                \beq\label{eq: sursol for xi bounded xi delta}\begin{matrix}
                        \displaystyle\left\{
                            -\frac12\abs{(\bar\xi^{\eta,\delta})^\top\sigma_S}^2\partial_{xx}v^0
                            -\frac{[D_\xi(H^{\eta,\delta}\varpi)]^\top E^{-4}D_\xi(H^{\eta,\delta}\varpi)}{4\partial_xv^0}
                        \right\}(\bar\zeta^{\eta,\delta},\bar\xi^{\eta,\delta})\lpt6
                        \displaystyle\le
                            \left\{
                                \abs{\Lc^{\theta^0}\phi^0}
                                +\frac12\abs{\Tr{c_{\theta^0} D^2_{\xi\xi}(H^{\eta,\delta}\varpi)}}
                            \right\}(\bar\zeta^{\eta,\delta},\bar\xi^{\eta,\delta}).
                \end{matrix}\eeq
                We focus first on the right-hand side of this inequality.
                As $(\bar\zeta^{\eta,\delta})_{(\eta,\delta)\in(0,1)^2}\subset\bar B_{r_o}(\zeta_o)$,
                combining
                Lemma \ref{lem: explicit resolution 1st corrector} with
                \reff{eq: sursol K 2 K pf} and the last term in \reff{eq: sursol estim h eta}
                gives, for all
                $(\eta,\delta)\in(0,1)^2$:
                \beq\label{eq: sursol one before last ineq for xi bounded xi delta}
                    \left\{
                                \abs{\Lc^{\theta^0}\phi^0}
                                +\frac12\abs{\Tr{c_{\theta^0} D^2_{\xi\xi}(H^{\eta,\delta}\varpi)}}
                            \right\}(\bar\zeta^{\eta,\delta},\bar\xi^{\eta,\delta})
                            \le K_0+K_{\theta^0}(6K_2+C^\ast K_2).
                \eeq
                Consider now the left-hand side in \reff{eq: sursol for xi bounded xi delta}
                and omit the parameters $(\bar\zeta^{\eta,\delta},\bar\xi^{\eta,\delta})$ to ease notation.
                As $0\le\abs{H^{\eta,\delta}}\le(1-\delta)$ and $E^{-4}$ is positive definite, we have
                \be
                    \displaystyle
                        &&
                            -\frac12\abs{(\bar\xi^{\eta,\delta})^\top \sigma_S}^2\partial_{xx}v^0
                            -\frac{[D_\xi(H^{\eta,\delta}\varpi)]^\top E^{-4}D_\xi(H^{\eta,\delta}\varpi)}{4\partial_xv^0}
                        \nonumber
                    \\
                    \displaystyle
                    &&\;\ge
                        -\frac12\abs{(\bar\xi^{\eta,\delta})^\top \sigma_S}^2\partial_{xx}v^0
                        -(1-\delta)^2\frac{
                                [D_\xi\varpi]^\top E^{-4} D_\xi\varpi
                            }
                            {4\partial_xv^0}
                        \label{eq: sursol 1st line}\\
                    \displaystyle
                        &&\quad-\frac{
                                2(1-\delta)H^{\eta,\delta}\varpi\frac1{\xi^{\ast,\delta}}[D_\xi\varpi]^\top E^{-4}
                                    D_{{\rm \mathbf{x}}} h\left(\frac{\cdot}{\xi^{\ast,\delta}}\right)
                            }
                            {4\partial_xv^0}
                        \label{eq: sursol 2nd line}\\
                    \displaystyle
                        &&\quad-\frac{
                                (1-\delta)^2\varpi^2\left(\frac1{\xi^{\ast,\delta}}\right)^2
                                    [D_{{\rm \mathbf{x}}} h\left(\frac{\cdot}{\xi^{\ast,\delta}}\right)]^\top
                                    E^{-4}
                                    D_{{\rm \mathbf{x}}} h\left(\frac{\cdot}{\xi^{\ast,\delta}}\right)
                            }
                            {4\partial_xv^0}.\label{eq: sursol 3rd line}
                \ee
                Because $\varpi$ solves the First Corrector Equation \reff{eq: first corrector equation},
                 the terms in \reff{eq: sursol 1st line} satisfy
                $$\bal
                        -\frac12\abs{(\bar\xi^{\eta,\delta})^\top \sigma_S}^2\partial_{xx}v^0
                        -(1-\delta)^2\frac{
                                [D_\xi\varpi]^\top E^{-4} D_\xi\varpi
                            }
                            {4\partial_xv^0}
                        &=
                            (2\delta-\delta^2)\frac{
                        [D_\xi\varpi]^\top E^{-4} D_\xi\varpi
                    }
                    {4\partial_xv^0}\\
                    &\ge (2\delta-\delta^2)\gamma_E\abs{\bar\xi^{\eta,\delta}}^2,
                \eal$$
                where the second inequality follows from \reff{eq: sursol gamma E} and Lemma \ref{lem: explicit resolution 1st corrector},
                recall that $\bar\zeta^{\eta,\delta}\in\bar B_{r_o}(\zeta_o)$.
                Next, Lemma \ref{lem: explicit resolution 1st corrector},
                \reff{eq: sursol K E}, \reff{eq: sursol estim h eta}, and $\bar\zeta^{\eta,\delta}\in\bar B_{r_o}(\zeta_o)$ imply the following estimate for
                \reff{eq: sursol 2nd line}:
                $$\bal
                    -\frac{
                                2(1-\delta)H^{\eta,\delta}\varpi\frac1{\xi^{\ast,\delta}}[D_\xi\varpi]^\top E^{-4}
                                    D_{{\rm \mathbf{x}}} h\left(\frac{\cdot}{\xi^{\ast,\delta}}\right)
                            }
                            {4\partial_xv^0}
                    &\ge
                    -4(1-\delta)\eta K_E\abs{\bar\xi^{\eta,\delta}}^2.
                \eal$$
                Likewise, for \reff{eq: sursol 3rd line}, we have
                $$
                    -\frac{
                                (1-\delta)^2\varpi^2\left(\frac1{\xi^{\ast,\delta}}\right)^2
                                   \left [D_{{\rm \mathbf{x}}} h\left(\frac{\cdot}{\xi^{\ast,\delta}}\right)\right]^\top
                                    E^{-4}
                                    D_{{\rm \mathbf{x}}} h\left(\frac{\cdot}{\xi^{\ast,\delta}}\right)
                            }
                            {4\partial_xv^0}
                    \ge
                        -(1-\delta)^2\eta^2K_E\abs{\bar\xi^{\eta,\delta}}^2.
                $$
                Together, these three inequalities give
                \b*
                    &\displaystyle
                        -\frac12\abs{(\bar\xi^{\eta,\delta})^\top \sigma_S}^2\partial_{xx}v^0
                        -\frac{[D_\xi(H^{\eta,\delta}\varpi)]^\top E^{-4}D_\xi(H^{\eta,\delta}\varpi)}{4\partial_xv^0}
                    &\\
                    &\displaystyle
                                \ge
                        \abs{\bar\xi^{\eta,\delta}}^2
                        \left[
                            (2\delta-\delta^2)\gamma_E
                            -K_E(1-\delta)\eta\left(
                                    4
                                    +(1-\delta)\eta
                                \right)
                        \right].
                    &
                \e*
                Now, notice that $(2\delta-\delta^2)\gamma_E>0$ for all $\delta\in(0,1)$. Hence,
                for each $\delta\in(0,1)$, there exists $\bar\eta_\delta\in(0,1)$ such that
                $-K_E(1-\delta)\eta(4+(1-\delta)\eta)\ge-(2\delta-\delta^2)\gamma_E/2$ and in turn
                \beq\label{eq: sursol last ineq for xi bounded xi delta}
                        -\frac12\abs{(\bar\xi^{\eta,\delta})^\top \sigma_S}^2\partial_{xx}v^0
                        -\frac{[D_\xi(H^{\eta,\delta}\varpi)]^\top E^{-4}D_\xi(H^{\eta,\delta}\varpi)}{4\partial_xv^0}
                        \ge \frac{(2\delta-\delta^2)\gamma_E}2\abs{\bar\xi^{\eta,\delta}}^2.
                \eeq
                Finally, combining \reff{eq: sursol for xi bounded xi delta} with
                \reff{eq: sursol one before last ineq for xi bounded xi delta} and
                \reff{eq: sursol last ineq for xi bounded xi delta} gives
                $$
                    \abs{\bar\xi^{\eta,\delta}}^2
                    \le
                    \frac{2\left[K_0+K_{\theta^0}(6K_2+C^\ast K_2)\right]}{(2\delta-\delta^2)\gamma_E}
                    <(\xi^{\ast,\delta})^2,
                $$
                completing Step 5.

        \emph{Step 6: conclude the proof of the proposition.} First, observe that $\abs{\bar\xi^{\eta,\delta}}<\xi^{\ast,\delta}$, for all $\eta\in(0,\bar\eta_\delta]$, together with the definition of $H^{\eta,\delta}$ gives that
            $H^{\eta,\delta}(\bar\xi^{\eta,\delta})=1-\delta$ and that its derivatives vanish for all $(\delta,\eta)\in(0,1)\x(0,\bar\eta_\delta]$.
            Let $(\hat\zeta^\delta,\hat\xi^\delta)$ denote the limits of the (sub)sequence $(\bar\zeta^{\eta,\delta},\bar\xi^{\eta,\delta})$ as $\eta\rightarrow0$.
            By classical arguments in the theory of viscosity solutions (cf, e.g., \cite{CrIsLi92}), \eqref{eq:min} implies that $\hat\zeta^\delta=\zeta_o$.
            Combining \reff{eq: sursol remainder} with the fact that $\varpi$ solves the First Corrector Equation
            \reff{eq: first corrector equation} in turn yields
            $$\bal
                    0&\ge
                        \left\{
                            (2\delta-\delta^2)\frac{(D_\xi\varpi)^\top E^{-4}D_\xi(\varpi)}{4\partial_xv}
                            +\Lc^{\theta^0}\vp
                            +(1-\delta)a
                        \right\}(\zeta_o,\hat\xi^{\delta})\\
                        &\ge
                            \Lc^{\theta^0}\vp(\zeta_o)
                            +(1-\delta)a(\zeta_o).
            \eal$$
            Here, the last inequality follows directly from $\delta\in(0,1)$,
            Lemma \ref{lem: explicit resolution 1st corrector},
            \reff{eq: sursol v_x iota}, and the positive-definiteness of $E^{-4}$.
            As $a(\zeta_o)$ does not depend on $\delta$,
            sending $\delta\rightarrow0$ completes the proof of the proposition.
        \ep

    \subsubsection{ Terminal Condition}\label{sec: Terminal condition}

        \begin{proposition}\label{prop: terminal condition for u}
          Suppose Assumptions \ref{prop:dpefric} and { \rm A} are satisfied. Then,
          $$u^\ast(\zeta,\theta^0(\zeta))=u_\ast(\zeta,\theta^0(\zeta))=0,\quad  \mbox{for all } \zeta\in\Dfb.$$
        \end{proposition}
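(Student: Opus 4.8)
The plan is to treat $t=T$ as a parabolic boundary layer: a direct computation settles the terminal slice, the lower bound is free, and the upper bound rests on a one‑sided barrier argument against $v^\veps$ near terminal points.

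First I would observe that $u^\veps\equiv0$ on $\Dfb\times\R^d$ for every $\veps$. On $\Dfb$ one has $\partial_xv^0=U'$ and $R=R_T$, so the explicit formula for $\varpi$ in Lemma~\ref{lem: explicit resolution 1st corrector} reads $\varpi(\zeta,\vartheta-\theta^0(\zeta))=U'(x)\,\Pf(T,s,y,x,\vartheta)$; since $\veps^2\varpi\circ\xib_\veps(\zeta,\vartheta)=\varpi(\zeta,\vartheta-\theta^0(\zeta))$ and, by Assumption~\ref{prop:dpefric}, $v^0-v^\veps=U'\veps^2\Pf$ on $\partial_T\Df\times\R^d$, this yields $\veps^2u^\veps=(v^0-v^\veps)-\veps^4\varpi\circ\xib_\veps=0$. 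Moreover $u^\veps(\zeta,\vartheta)=\bar u^\veps(\zeta,\vartheta)-\varpi(\zeta,\vartheta-\theta^0(\zeta))$ differs from $\bar u^\veps$ by a continuous function, so $u^\ast(\zeta,\theta^0(\zeta))=\bar u^\ast(\zeta,\theta^0(\zeta))$ and $u_\ast(\zeta,\theta^0(\zeta))=\bar u_\ast(\zeta,\theta^0(\zeta))$; only the behaviour of the relaxed semilimits as $t\uparrow T$ is therefore at stake. The lower bound is then immediate: $\bar u^\veps\ge0$ gives the pointwise estimate $u^\veps(\zeta,\vartheta)\ge-\veps^2\varpi\circ\xib_\veps(\zeta,\vartheta)=-\varpi(\zeta,\vartheta-\theta^0(\zeta))$, and passing to the relaxed liminf, using continuity of $\varpi$ and $\theta^0$ and $\varpi(\cdot,0)=0$, one gets $u_\ast(\zeta_o,\theta^0(\zeta_o))\ge0$ for every $\zeta_o\in\Dfb$.

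For the upper bound $u^\ast(\zeta_o,\theta^0(\zeta_o))\le0$ at $\zeta_o=(T,s_o,y_o,x_o)\in\Dfb$ (put $\vartheta_o:=\theta^0(\zeta_o)$), fix $\delta>0$ and work on a small ball $\mathcal B:=\bar B_r(\zeta_o,\vartheta_o)\cap(\Df\times\R^d)$, with $r$ small enough that $\mathcal B$ lies in the neighbourhood of $(\zeta_o,\vartheta_o)$ furnished by Lemma~\ref{lem: bar u unif bounded} (with uniform bound $b^\ast$ for $u^{\veps\ast}$), that $\partial_xv^0\wedge(-\partial_{xx}v^0)>\iota$ and $k_2$ is positive definite with smallest eigenvalue $\ge c$ on $\mathcal B$ (some $\iota,c>0$), and that $|\theta^0(\zeta)-\vartheta_o|$ is as small as wanted on $\mathcal B$. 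Consider the smooth test function
$$\underline\psi^\veps:=v^0-\veps^4(1+\kappa)\,\varpi\circ\xib_\veps-\veps^2\phi,\qquad \phi:=\delta+A(T-t)+B\,|\zeta-\zeta_o|^4 .$$
Since $\phi$ does not depend on $\vartheta$, Lemma~\ref{lem: remainder estimate} applied with $w=(1+\kappa)\varpi$ carries no $\hat\Lc^\veps$‑term, and combining it with the First Corrector Equation~\reff{eq: first corrector equation} gives, on $\mathcal B$,
$$-(\Lc^\vartheta+\Hc^\veps)\underline\psi^\veps=\veps^2\Bigl[(1+\kappa)a+\Lc^{\theta^0}\phi+\bigl(1-(1+\kappa)^2\bigr)\tfrac{(D_\xi\varpi\circ\xib_\veps)^\top E^{-4}D_\xi\varpi\circ\xib_\veps}{4\,\partial_xv^0}-\Rc^\veps\Bigr].$$
The $\varpi$‑term is nonpositive, $\Lc^{\theta^0}\phi=-A+(\text{bounded})$, and $|\Rc^\veps|$ is bounded on $\mathcal B$ by part~\reff{eq: remainder for supersol} of Lemma~\ref{lem: remainder estimate} (there $|\veps\xib_\veps|=|\vartheta-\theta^0|$ is bounded), so for $A$ large and $\veps$ small $\underline\psi^\veps$ is a \emph{strict} subsolution of the frictional DPE on $\mathcal B$ with $\partial_x\underline\psi^\veps>0$. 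On the part of the relative boundary of $\mathcal B$ with $t=T$, the terminal‑slice identity gives $\underline\psi^\veps=v^\veps-\veps^2\phi\le v^\veps-\veps^2\delta$; on the part with $t<T$, decomposing $(\zeta,\vartheta)-(\zeta_o,\vartheta_o)$ one has, for $r$ small, either $B|\zeta-\zeta_o|^4\gtrsim Br^4$ or $\varpi(\cdot,\vartheta-\theta^0)\gtrsim cr^2$, so $\underline\psi^\veps\le v^0-\veps^2\min\bigl(Br^4,(1+\kappa)cr^2\bigr)\le v^\veps_\ast$ once $B$ and $\kappa$ are large enough — the inflation factor $1+\kappa$ on the $\varpi$‑correction is the device replacing the $\vartheta$‑penalization used in Propositions~\ref{prop: sub sol for u}--\ref{prop: super sol for u}, forced here because the ball really extends in the $\vartheta$‑direction. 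The usual comparison argument — a positive maximum of $\underline\psi^\veps-v^\veps_\ast$ over the compact $\mathcal B$ is attained, cannot lie on the relative boundary by the two estimates above, and at an interior point contradicts the strict subsolution property via the viscosity supersolution property of $v^\veps$ (Assumption~\ref{prop:dpefric}) — yields $\underline\psi^\veps\le v^\veps_\ast$ on $\mathcal B$, hence $u^{\veps\ast}\le(1+\kappa)\varpi(\cdot,\vartheta-\theta^0)+\delta+A(T-t)+B|\zeta-\zeta_o|^4$ on $\mathcal B$. Taking the relaxed $\Limsup$ at $(\zeta_o,\vartheta_o)$ via~\reff{eq: same relaxed semi-limits} and using $\varpi(\zeta_o,0)=0$, the right‑hand side tends to $\delta$; letting $\delta\downarrow0$ gives $u^\ast(\zeta_o,\theta^0(\zeta_o))\le0$. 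Together with the lower bound and $u_\ast\le u^\ast$, this proves the proposition.

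The main obstacle is this barrier construction. The radius $r$, then $b^\ast$, then the coefficients $\kappa,B$, then $A$, and only then $\veps$, must be fixed in this order so that $\underline\psi^\veps$ is simultaneously a strict subsolution uniformly down to $\veps=0$ \emph{and} lies below $v^\veps$ on the whole relative boundary of $\mathcal B$; the $\vartheta$‑direction of that boundary has no counterpart in the proportional/fixed‑cost analyses and is exactly what forces the $(1+\kappa)$‑inflation of the $\varpi$‑correction, while the genuinely technical point is bounding the remainder $\Rc^\veps$ on a ball where $|\xib_\veps|$ is of order $\veps^{-1}$ rather than $O(1)$, which is what part~\reff{eq: remainder for supersol} of Lemma~\ref{lem: remainder estimate} is designed to handle.
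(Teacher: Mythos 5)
Your proposal is correct in substance, but it follows a genuinely different route from the paper's proof. The paper argues by contradiction: assuming $u^\ast(\zeta_o,\theta^0(\zeta_o))\ge 5\delta$, it builds a test function tied to the approximating sequence, namely $v^0-\veps^2\bigl[\delta\tfrac{T-t}{T-t_\veps}+c_o\bigl(|\zeta-\zeta_\veps|^4+|\vartheta-\theta^0(\zeta)|^2\bigr)\bigr]$, locates an interior minimizer of $v^\veps_\ast$ minus this function via the local bound of Assumption \reff{ass: u locally bounded from above}, and feeds the supersolution property of $v^\veps$ through Lemma~\ref{lem: remainder estimate}; the unbounded $\xib_\veps$-terms are dominated by choosing $c_o$ so large that $\bar\gamma-c_o^2\gamma_E\le0$, and the contradiction comes from the bound $\delta/(T-t_\veps)\le C$ as $t_\veps\to T$. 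You instead construct an $\veps$-uniform strict subsolution barrier $v^0-\veps^2[(1+\kappa)\varpi\circ\xib_1+\delta+A(T-t)+B|\zeta-\zeta_o|^4]$ on a fixed neighbourhood, kill the dangerous $\tfrac12|\xib_\veps^\top\sigma_S|^2\partial_{xx}v^0$ term exactly through the First Corrector Equation (the inflation $1+\kappa$ leaving only a nonpositive gradient term, since $\mathrm{Tr}[c_{\theta^0}D^2_{\xi\xi}\varpi]=2a$), run a comparison on the parabolic boundary of the neighbourhood, and pass to the relaxed $\limsup$. So the inputs are the same (Lemma~\ref{lem: remainder estimate}, Lemma~\ref{lem: bar u unif bounded}, the identity $\varpi\circ\xib_1=U'\Pf$ on $\Dfb$, and the terminal condition of Assumption~\ref{prop:dpefric} read for the envelope $v^\veps_\ast$ at $t=T$ --- the paper's Step~1 uses exactly this reading, so you are on equal footing there), but the mechanisms differ: your $(1+\kappa)$-inflation replaces the paper's penalty $c_o|\vartheta-\theta^0(\zeta)|^2$ calibrated by ellipticity, and your $\veps$-independent term $A(T-t)$ replaces the sequence-dependent weight $\delta(T-t)/(T-t_\veps)$. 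What your route buys is a quantitative local bound, $u^{\veps\ast}\le(1+\kappa)\varpi\circ\xib_1+\delta+A(T-t)+B|\zeta-\zeta_o|^4$ near the terminal point, rather than only the limit statement; what the paper's route buys is that no ordering of barrier constants against the full lateral boundary is needed, since the localization only has to produce one interior test point. Two small points to tighten in a write-up: on the lateral sphere your dichotomy ``either $B|\zeta-\zeta_o|^4\gtrsim Br^4$ or $\varpi\gtrsim cr^2$'' is valid only with constants depending on the Lipschitz bound of $\theta^0$ near $\zeta_o$ (if $|\vartheta-\vartheta_o|\ge r/\sqrt2$ but $|\vartheta-\theta^0(\zeta)|$ is small, then $|\zeta-\zeta_o|\gtrsim r/L$), so either record that dependence or use a product neighbourhood with a wider $\vartheta$-box; and the remainder constant from part \reff{eq: remainder for supersol} of Lemma~\ref{lem: remainder estimate} depends on the spatial derivatives of $\phi$ (hence on $B$) but not on $A$, which is exactly what makes your ordering $r$, then $\kappa,B$, then $A$, then $\veps$ legitimate and should be stated.
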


        \proof
       By definition, we have $u^\ast(\zeta,\theta^0(\zeta))\ge u_\ast(\zeta,\theta^0(\zeta))\ge 0$. Hence, it suffices to show $u^\ast(\zeta,\theta^0(\zeta)))\le0$, for all $\zeta\in\Dfb$.
       Assume to the contrary that there is $(\zeta_o,\delta)\in\Dfb\x(0,\infty)$ such that, with $\vartheta_o:=\theta^0(\zeta_o)$:
            \beq\label{eq: terminal sub sol contradiction}
                u^\ast(\zeta_o,\vartheta_o)\ge5\delta>0.
            \eeq

            \emph{ Step 1: provide a test function $\psi^\veps$ for $v^\veps_\ast$ and a local minimizer of $v^\veps_\ast-\psi^\veps$.}
            By \reff{eq: same relaxed semi-limits},
            there exist $(\zeta_\veps,\vartheta_\veps)_{\veps>0}\subset\Df\x\R^d$ such that
            \beq\label{eq: convergence points terminal}\begin{matrix}
                \displaystyle (\zeta_\veps,\vartheta_\veps)\underset{\veps\to 0}{\longrightarrow}(\zeta_o,\vartheta_o)
                \et
                u^{\veps\ast}(\zeta_\veps,\vartheta_\veps)\underset{\veps\to 0}{\longrightarrow}u^*(\zeta_o,\vartheta_o).
            \end{matrix}\eeq
            Assume that, possibly along a subsequence,
            $\zeta_\veps\in\Dfb$.
            Then, the terminal conditions in Assumption~\ref{prop:dpefric} and Proposition~\ref{theo: friction less value function}
            combined with $\varpi\ge0$ (cf.\ Lemma~\ref{lem: explicit resolution 1st corrector})
            yield
            $$u^{\veps\ast}(\zeta_\veps,\vartheta_\veps)=(\bar u^{\veps\ast}-\varpi\circ\xib_1)(\zeta_\veps,\vartheta_\veps)\le0,$$
            which contradicts \reff{eq: terminal sub sol contradiction} for small $\veps$.
            Therefore we can assume without loss of generality that
            \beq\label{eq: interior point terminal}
                \zeta_\veps\in \Dfi.
            \eeq
            By similar arguments as in the proof of Proposition \ref{prop: sub sol for u},
            Assumptions \reff{ass: v smooth} and \reff{ass: u locally bounded from above} combined with
            \reff{eq: terminal sub sol contradiction} and \reff{eq: convergence points terminal}
            enable us to find
            $r_o\ge\alpha>0$, $c_o>0, \iota>0$, and $\veps_o>0$ such that, for all $\veps\in(0,\veps_o]$:
            \be
                &\displaystyle
                        (\zeta_\veps,\vartheta_\veps)\in B_{o,\alpha}\;,
                        \quad
                        \abs{\vartheta_\veps-\theta^0(\zeta_\veps)}^2\le \delta/c_o,
                        \et
                        u^{\veps\ast}(\zeta_\veps,\vartheta_\veps)\ge4\delta,
                    \nonumber\label{eq: terminal u ge delta}
                &\\
                &\displaystyle
                        \partial_xv^0\wedge(-\partial_{xx}v^0)\ge2\iota
                        \et
                        \varpi\circ\xib_1\le\delta
                        \quad \mbox{on } \bar B_\alpha\;,
                    \label{eq: terminal v_p iota}
                &\\
                &\displaystyle
                        u^{\veps\ast}-\bar\phi(\cdot;\zeta_\veps)<0\quad \mbox{on } B_\alpha\backslash B_{o,\alpha},
                    \label{eq: terminal u - phi}
                &
            \ee
            where $B_\alpha:=(B_\alpha(\zeta_o)\cap\Df)\x B_{r_o}(\vartheta_o)$ as well as
            \b*
                &\displaystyle
                    B_{o,\alpha}:=
                        \left\{
                            (\zeta,\vartheta)\in \bar B_\alpha:
                                 \zeta\in\bar B_{\frac{\alpha}2}(\zeta_o)
                                    \mbox{ and }\vartheta\in\bar B_{\frac{\bar r_o}2}(\vartheta_o)
                        \right\},
                &\\
                &\displaystyle
                    \bar\phi:(\zeta,\vartheta;\zeta')\in \Df\x\R^d\x \Df\longmapsto c_o
                        \left( \abs{\zeta-\zeta'}^4 + \abs{\vartheta-\theta^0(\zeta)}^2 \right).
                &
            \e*
            By positive-definiteness and continuity of $E^{-4}$ combined with Assumption \reff{ass: v smooth}, there exists $\gamma_E>0$ such that
    \begin{equation}\label{eq:pd1}
        \frac{\xi^\top E^{-4}\xi}{\partial_xv^0}(\zeta)\ge \gamma_E\abs{\xi}^2,
        \quad\pourtout\xi\in\R^d \mbox{ and all }\zeta\in\bar B_\alpha.
    \end{equation}
    On the other hand, continuity of $\sigma_S$ and Assumption \reff{ass: v smooth} imply that there is $\bar\gamma>0$ such that
    \begin{equation}\label{eq:pd2}
        -\frac12\abs{\xi^\top \sigma_S}^2\partial_{xx}v^0\le \bar\gamma |\xi|^2,
        \quad\pourtout\xi\in\R^d \mbox{ and all }\zeta\in\bar B_\alpha.
    \end{equation}
            Hence, we can choose the constant $c_o$ in the definition of $\bar{\phi}$ large enough to satisfy \beq\label{eq: terminal choice c_o}
        \bar\gamma-c_o^2\gamma_E\le0.
    \eeq
        Define
            $$
                \phi^\veps:(\zeta,\vartheta)\in \Df\x\R^d\longmapsto \delta\frac{T-t}{T-t_\veps}+\bar\phi(\zeta,\vartheta;\zeta_\veps).
            $$
             Then, by Assumption \reff{ass: v smooth} and \reff{eq: interior point terminal},
            the function $\psi^\veps:=v^0-\veps^2\phi^\veps$ is smooth. The lower-semicontinuity of $v^\veps_\ast$
            in turn allows to deduce from \reff{eq: terminal u - phi} that, on $\bar B_\alpha$, the function $v^\veps_\ast-\psi^\veps$ has a local minimizer
            $(\tilde\zeta^\veps,\tilde\vartheta^\veps)\in B_{o,\alpha}\subset$Int$(B_\alpha)$. Moreover, by \reff{eq: terminal v_p iota},
            this minimizer satisfies
            $
                u^{\veps\ast}(\tilde\zeta_\veps,\tilde\vartheta_\veps)\ge \delta,
            $
            and repeating the arguments leading to \reff{eq: interior point terminal} shows $\tilde\zeta_\veps\in \Dfi$.

            \emph{ Step 2: conclude the proof.}
		In view of the previous step and Assumption \ref{prop:dpefric}, we have
                $$
                    -\left(\Lc^{\tilde\vartheta_\veps}+\Hc^\veps\right)\psi^\veps(\tilde \zeta_\veps,\tilde\vartheta_\veps)\ge0,
                    \quad\pourtout\veps\in(0,\veps_o].
                $$
                By construction of $\psi^\veps$ and because $(\tilde\zeta_\veps,\tilde\vartheta_\veps)\in\bar B_\alpha$,
                possibly reducing $\veps_o$ gives
                $$(\tilde\zeta_\veps,\tilde\vartheta_\veps)\in\{\partial_x\psi^\veps>0\}
                    \cap\{\veps^2\partial_x(\phi+\veps^2w^\veps)\le\iota\partial_xv^0\},$$
                so that \reff{eq: remainder for subsol} holds.
                Hence, Lemma \ref{lem: remainder estimate} yields
                $$
                    \left\{
                        -\frac12\abs{\xib_\veps^\top \sigma_S}^2\partial_{xx}v^0
                            +\Lc^{\theta^0}\phi^\veps
                            -\frac{(D_\vartheta\bar\phi)^\top E^{-4}D_\vartheta\bar\phi}{4\veps^2\partial_xv^0}
                            -\frac{\partial_x\bar\phi}{4(\partial_xv^0)^2}(D_\vartheta\bar\phi)^\top E^{-4}D_\vartheta\bar\phi
                            +\Rc^\veps
                    \right\}(\tilde\zeta_\veps,\tilde\vartheta_\veps)
                            \ge0,
                $$
                where $\Rc^\veps(\tilde\zeta_\veps,\tilde\vartheta_\veps)$ is uniformly bounded for $\veps\in(0,\veps_o]$.
                Thus, by Assumption \reff{ass: v smooth} and construction of $\psi^\veps$,
                there is a constant $C>0$ independent of $\veps$ such that:
                $$
                    \left\{
                            -\frac{\delta}{T-t_\veps}
                               -\frac12\abs{\xib_\veps^\top \sigma_S}^2\partial_{xx}v^0
                                -\frac{4c_o^2\abs{\xib_\veps}^\top E^{-4}\abs{\xib_\veps}}{4\partial_xv^0}
                    \right\}(\tilde\zeta_\veps,\tilde\vartheta_\veps)
                            \ge -C,
                            \quad\pourtout\veps\in(0,\veps_o].
                $$
                Recall that $(\tilde\zeta_\veps,\tilde\vartheta_\veps)\in\bar B_\alpha$; therefore, (\ref{eq:pd1}-\ref{eq: terminal choice c_o})
    yield
    $$
                                       \left\{-\frac12\abs{\xib_\veps^\top \sigma_S}^2\partial_{xx}v^0
                                -\frac{4c_o^2\abs{\xib_\veps}^\top E^{-4}\abs{\xib_\veps}}{4\partial_xv^0}
                                \right\}(\tilde\zeta_\veps,\tilde\vartheta_\veps)
                                \le(\bar\gamma-c_o^2\gamma_E)\xib^2_\veps(\tilde\zeta_\veps,\tilde\vartheta_\veps)\le0.
    $$
    As a result:
    $
        \delta/(T-t_\veps)\le C,
    $   
       for all $\veps\in(0,\veps_o]$.
               Note that the time component of $\zeta_o$ is $T$, because $\zeta_o\in \Dfb$. In contrast, the time component of $\zeta_\eps$ is $t_\eps$.
               For small $\veps$,
               this contradicts \reff{eq: convergence points terminal}, completing the proof.
                \qed


    \subsection{The Eikonal Equation}\label{sec: Eikonal Equation}

        This section is devoted to the proof of the following result, which is crucially used in the proof of our Main Theorem \ref{theo: main result}.

        \begin{proposition}\label{prop: u indep pi comparison}
          Suppose Assumptions \ref{prop:dpefric}, \reff{ass: v smooth} and \reff{ass: u locally bounded from above} are satisfied. Then,
          $$
            u_\ast(\zeta,\theta^0(\zeta))\le u_\ast(\zeta,\vartheta)\le u^\ast(\zeta,\vartheta)\le u^\ast(\zeta,\theta^0(\zeta)),
            \quad\pourtout (\zeta,\vartheta)\in\Df\x\R^d.
          $$
        \end{proposition}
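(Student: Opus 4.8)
The plan is to pin down how the relaxed semilimits depend on the share variable $\vartheta$ by freezing $\zeta$, showing that $\bar u^\ast(\zeta,\cdot)$ and $\bar u_\ast(\zeta,\cdot)$ are sub- resp.\ supersolutions in $\vartheta$ of an Eikonal-type equation whose nonnegative solutions coincide, up to an additive constant, with the quadratic $\varpi\circ\xib_1$ from Lemma~\ref{lem: explicit resolution 1st corrector}, and then running this through a comparison argument. Concretely: since $v^0$ does not depend on $\vartheta$, for smooth $\phi$ one has $\Hc^\veps(v^0-\veps^2\phi)=\frac{(D_\vartheta\phi)^\top\Lambda^{-1}D_\vartheta\phi}{4\,\partial_x(v^0-\veps^2\phi)}$ (no relaxation needed since $\partial_xv^0>0$ by Assumption~\ref{ass: v smooth}) and $\Lc^\vartheta(v^0-\veps^2\phi)=\tfrac12|(\vartheta-\theta^0(\zeta))^\top\sigma_S|^2\partial_{xx}v^0-\veps^2\Lc^\vartheta\phi$, by \eqref{eq: PDE friction less simplified with pf} and \eqref{eq: optimum control frictionless} exactly as in Step~1 of the proof of Lemma~\ref{lem: remainder estimate}. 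Inserting this test function into the frictional DPE of Assumption~\ref{prop:dpefric} and letting $\veps\to0$ by the usual relaxed-semilimit technique (the term $\veps^2\Lc^\vartheta\phi$ disappears since $\phi$ is smooth), one finds that, for each fixed $\zeta\in\Dfi$, $\bar u^\ast(\zeta,\cdot)$ is a u.s.c.\ viscosity subsolution and $\bar u_\ast(\zeta,\cdot)$ a l.s.c.\ viscosity supersolution of
\begin{equation*}
(D_\vartheta w)^\top\Lambda^{-1}(\zeta)\,D_\vartheta w=\nf(\zeta,\vartheta),\qquad
\nf(\zeta,\vartheta):=-2\,(\partial_xv^0\,\partial_{xx}v^0)(\zeta)\,\big|(\vartheta-\theta^0(\zeta))^\top\sigma_S(\zeta)\big|^2 ,
\end{equation*}
where, by Assumption~\ref{ass: v smooth} and the local ellipticity \eqref{eq: ellipticity}, $\nf$ is smooth, nonnegative, quadratic in $\vartheta-\theta^0(\zeta)$, and vanishes exactly at $\vartheta=\theta^0(\zeta)$. (That this holds with $\zeta$ frozen is standard, the equation carrying no $\zeta$-derivatives.)

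Next, the First Corrector Equation \eqref{eq: first corrector equation} amounts, for $\varpi(\zeta,\xi)=\xi^\top k_2(\zeta)\xi$, to the identity $k_2\Lambda^{-1}k_2=-\tfrac12(\partial_xv^0\,\partial_{xx}v^0)\sigma_S\sigma_S^\top$; hence $\nf=\big(D_\vartheta(\varpi\circ\xib_1)\big)^\top\Lambda^{-1}\big(D_\vartheta(\varpi\circ\xib_1)\big)$, so $\vartheta\mapsto\varpi\circ\xib_1(\zeta,\vartheta)=(\vartheta-\theta^0(\zeta))^\top k_2(\zeta)(\vartheta-\theta^0(\zeta))$ is a smooth classical solution of the Eikonal equation for every $\zeta$, vanishing at $\vartheta=\theta^0(\zeta)$; consequently so is $c+\varpi\circ\xib_1(\zeta,\cdot)$ for any constant $c$. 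Since $\varpi\circ\xib_1$ is continuous, $\bar u^\ast=u^\ast+\varpi\circ\xib_1$ and $\bar u_\ast=u_\ast+\varpi\circ\xib_1$, so — the middle inequality $u_\ast\le u^\ast$ being trivial — the proposition reduces to the two bounds $\bar u^\ast(\zeta,\vartheta)\le\bar u^\ast(\zeta,\theta^0(\zeta))+\varpi\circ\xib_1(\zeta,\vartheta)$ and $\bar u_\ast(\zeta,\vartheta)\ge\bar u_\ast(\zeta,\theta^0(\zeta))+\varpi\circ\xib_1(\zeta,\vartheta)$, each of which compares, with $\zeta$ frozen, a sub/supersolution of the Eikonal equation with the explicit solution above shifted by a constant so as to match it at $\vartheta=\theta^0(\zeta)$.

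For the upper bound I would subtract the smooth solution: $u^\ast=\bar u^\ast-\varpi\circ\xib_1$ is then a u.s.c.\ viscosity subsolution of $(D_\vartheta u^\ast)^\top\Lambda^{-1}D_\vartheta u^\ast+4(\vartheta-\theta^0(\zeta))^\top k_2\Lambda^{-1}D_\vartheta u^\ast\le0$ (the term not involving $D_\vartheta u^\ast$ is exactly $\nf$ and cancels), and dropping the nonnegative first term shows $u^\ast(\zeta,\cdot)$ is a subsolution of the linear transport inequality $b(\zeta,\vartheta)\cdot D_\vartheta u^\ast\le0$, with $b(\zeta,\vartheta):=4\Lambda^{-1}(\zeta)k_2(\zeta)(\vartheta-\theta^0(\zeta))$. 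As $\Lambda^{-1}k_2$ is similar to a positive multiple of $(\Lambda^{-1/2}\sigma_S\sigma_S^\top\Lambda^{-1/2})^{1/2}$, hence has positive spectrum, $\theta^0(\zeta)$ is a repeller of the linear flow $\dot\vartheta=b(\zeta,\vartheta)$, so the backward trajectory from any $\vartheta$ converges to $\theta^0(\zeta)$; and a u.s.c.\ subsolution of $b\cdot D_\vartheta u^\ast\le0$ is nonincreasing along the forward flow, whence $u^\ast(\zeta,\vartheta)\le u^\ast(\zeta,\theta^0(\zeta))$ follows by upper-semicontinuity. This is the first bound.

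The lower bound is where the nonnegativity of the semilimits (equation \eqref{eq:baru}) is essential, and where Eikonal equations would otherwise fail to admit a comparison principle. Here I would use the transformation technique of Kružkov \cite{kruvzkov.75} and Ishii \cite{ishii.87}: the increasing change of unknown $W:=1-e^{-w}\in[0,1)$ preserves viscosity sub-/supersolutions and converts the Eikonal equation into the \emph{proper} first-order equation $(D_\vartheta W)^\top\Lambda^{-1}D_\vartheta W=\nf(1-W)^2$, whose zeroth-order term has nonnegative $W$-derivative $2\nf(1-W)$, so that it is strictly proper away from $\theta^0(\zeta)$ and, $\nf$ being quadratic, increasingly proper as $|\vartheta|\to\infty$; a doubling-of-variables argument (cf.\ \cite{CrIsLi92}) then yields a comparison principle for the transformed equation on $\R^d$, with the common value at the isolated degeneracy point $\theta^0(\zeta)$ acting as a boundary datum and the nonnegativity of $w$ together with the quadratic growth of $\varpi\circ\xib_1$ supplying the control needed at $\theta^0(\zeta)$ and at infinity. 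Comparing the transform of $\bar u_\ast(\zeta,\theta^0(\zeta))+\varpi\circ\xib_1(\zeta,\cdot)$ (a subsolution, agreeing with $\bar u_\ast$ at $\theta^0(\zeta)$) against that of $\bar u_\ast(\zeta,\cdot)$ (a supersolution) and transforming back gives the second bound — and the same comparison reproves the first. The main obstacle is precisely this last step: formulating and verifying the right "boundary condition" at $\theta^0(\zeta)$ for the transformed equation (for which the nonnegativity is indispensable) and securing the behaviour at infinity on the unbounded domain $\R^d$; Steps 1–2 are routine bookkeeping on top of Lemma~\ref{lem: remainder estimate} and the viscosity arguments of Section~\ref{sec: PDE characterization}.
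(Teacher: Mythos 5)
Your architecture is essentially the paper's own (Section~\ref{sec: Eikonal Equation}): freeze $\zeta$, show $\bar u^\ast(\zeta,\cdot)$ and $\bar u_\ast(\zeta,\cdot)$ are sub-/supersolutions of the Eikonal equation with $\nf$, observe that $\varpi\circ\xib_1(\zeta,\cdot)$ is an explicit classical solution, and conclude by a Kru\v{z}kov-type exponential change of unknown plus comparison (your $W=1-e^{-w}$ is, up to the shift $W=1+\uf$, exactly the paper's $\uf=-e^{-\bar u}$ and Lemma~\ref{lem: u indep pi comparion for H}). But two genuine gaps remain. First, you only argue on $\Dfi$, while the statement (and its later use) covers all of $\Df$, including the terminal slice $\Dfb$. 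There the Eikonal equation does \emph{not} hold outright: the semilimits only satisfy the relaxed conditions $\min\{\bar u^\ast-\xib_1^\top k_2\xib_1,\ \mbox{Eikonal}\}\le0$ and the corresponding $\max\ge0$ (Lemma~\ref{lem: gradient for u at the boundary}, second half of Lemma~\ref{lem: gradient for u just in pi}), and one must combine these with Proposition~\ref{prop: terminal condition for u} to recover \reff{eq: compar on Oc ast compl} at $t=T$; this cannot be waved away, because $\bar u^\ast(\zeta,\vartheta)$ at $\zeta\in\Dfb$ also collects limits along interior times. Also, your parenthetical ``that this holds with $\zeta$ frozen is standard'' hides a real step: a test function in $\vartheta$ alone is not a test function for the joint viscosity property, and the paper needs the $-n\abs{\zeta-\zeta_o}^2$ penalization of Lemma~\ref{lem: gradient for u just in pi} (on top of Lemma~\ref{lem: gradient constraint for bar u}) to justify it.

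Second, the step you yourself flag as the main obstacle is precisely where the proof lives, and your sketch would not close as stated: the transformed Hamiltonian $(D_\vartheta W)^\top\Lambda^{-1}D_\vartheta W-\nf(1-W)^2$ loses strict properness exactly at the degeneracy point $\vartheta=\theta^0(\zeta)$, where your sub- and supersolutions are merely semicontinuous, so that point cannot simply serve as a Dirichlet datum in a routine doubling-of-variables argument on the unbounded set; the paper's Lemma~\ref{lem: u indep pi comparion for H} instead localizes a putative contradiction point with bump functions $\beta_\eta,\chi_\delta$ and exploits the sign and uniform bound of the transformed functions ($\vf^1+\vf^2<0$) to conclude. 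Relatedly, your alternative transport-flow argument for the upper bound rests on the unproven claim that a merely upper-semicontinuous viscosity subsolution of $b\cdot D_\vartheta u\le0$ is nonincreasing along forward trajectories in $\R^d$; this is plausible (a Gronwall-weighted sup-convolution along the trajectory would do it), but it is a lemma of comparable weight to the comparison it is meant to bypass, and since you note the Kru\v{z}kov comparison ``reproves the first bound'' anyway, it adds nothing. In short: right decomposition and right transform, but the terminal-time case and the actual comparison argument are missing, and those are the substance of the paper's proof.
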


        For notational convenience, define
        \begin{equation}\label{eq:n}
            \nf:(\zeta,\vartheta)\in\Df\x\R^d\longmapsto-2\partial_xv^0\partial_{xx}v^0\abs{\xib_1^\top \sigma_S}^2(\zeta,\vartheta).
        \end{equation}
        By Assumption \reff{ass: v smooth}, this is a nonnegative smooth function.

        \begin{lemma}\label{lem: gradient constraint for bar u}
         Suppose Assumptions \ref{prop:dpefric}, \reff{ass: v smooth} and \reff{ass: u locally bounded from above} are satisfied. Then,
          $\bar u^\ast$ and $\bar u_\ast$ are (discontinuous) viscosity sub- and supersolutions, respectively, of the \emph{Eikonal equation}
          $$
            (D_\vartheta\bar u^\ast)^\top E^{-4}D_\vartheta\bar u^\ast\le\nf,
            \quad\mbox{respectively}\quad
            (D_\vartheta\bar u_\ast)^\top E^{-4}D_\vartheta\bar u_\ast\ge\nf,
            \quad \mbox{on }\Dfi\x\R^d.
          $$
        \end{lemma}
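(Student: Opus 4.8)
The plan is to pass to the limit $\varepsilon\to 0$ in the frictional DPE of Assumption~\ref{prop:dpefric} via the usual relaxed-semilimit (Barles--Perthame/Evans) machinery, after dividing through by the appropriate power of $\varepsilon$. Recall $\lambda=\varepsilon^4$, $\Lambda=E^4$ and $\bar u^\varepsilon=(v^0-v^\varepsilon)/\varepsilon^2$. The one algebraic fact needed is that, under Assumption~\reff{ass: v smooth}, the frictionless equation \reff{eq: PDE friction less simplified with pf} together with the first-order condition \reff{eq: optimum control frictionless} give
\[
 \Lc^\vartheta v^0(\zeta)=\tfrac12\,\abs{(\vartheta-\theta^0(\zeta))^\top\sigma_S}^2\,\partial_{xx}v^0(\zeta)=-\frac{\nf(\zeta,\vartheta)}{4\,\partial_xv^0(\zeta)},
\]
where the last identity is just \eqref{eq:n} with $\xib_1(\zeta,\vartheta)=\vartheta-\theta^0(\zeta)$; this is Step~1 of Lemma~\ref{lem: remainder estimate} specialized to $\phi\equiv w\equiv 0$. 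Note that only the interior $\Dfi\x\R^d$ is relevant, so no terminal condition is involved.

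For the subsolution statement I would fix $\vp\in C^{1,2}(\Dfi\x\R^d)$ and a point $(\zeta_o,\vartheta_o)$ at which $\bar u^\ast-\vp$ attains a strict local maximum equal to $0$ (strictness is arranged by the standard quartic perturbation, which is harmless since the equation only involves $D_\vartheta$). Since Assumption~\reff{ass: u locally bounded from above} makes $\bar u^\ast$ finite and u.s.c., the relaxed semilimit \reff{eq: def bar u ast} coincides with the one built from the semicontinuous envelopes $\bar u^{\varepsilon\ast}:=(v^0-v^\varepsilon_\ast)/\varepsilon^2$, and one extracts $(\zeta^\varepsilon,\vartheta^\varepsilon)\to(\zeta_o,\vartheta_o)$, local maximizers of $\bar u^{\varepsilon\ast}-\vp$ with $\bar u^{\varepsilon\ast}(\zeta^\varepsilon,\vartheta^\varepsilon)\to\bar u^\ast(\zeta_o,\vartheta_o)$; equivalently, $(\zeta^\varepsilon,\vartheta^\varepsilon)$ is a local minimizer of $v^\varepsilon_\ast-(v^0-\varepsilon^2\vp)$. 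As $\partial_xv^0>0$ by Assumption~\reff{ass: v smooth}, we have $\partial_x(v^0-\varepsilon^2\vp)>0$ at $(\zeta^\varepsilon,\vartheta^\varepsilon)$ for small $\varepsilon$, so Remark~\ref{rem: simplification HJB with friction} lets us use the viscosity supersolution inequality for $v^\varepsilon$ in non-relaxed form; since $D_\vartheta v^0=0$, we have $D_\vartheta(v^0-\varepsilon^2\vp)=-\varepsilon^2 D_\vartheta\vp$, the $\varepsilon^4$ in $\Hc^\varepsilon$ cancels, and the inequality reads
\[
 -\Lc^{\vartheta^\varepsilon}v^0(\zeta^\varepsilon)+\varepsilon^2\Lc^{\vartheta^\varepsilon}\vp(\zeta^\varepsilon,\vartheta^\varepsilon)-\frac{(D_\vartheta\vp)^\top E^{-4}D_\vartheta\vp}{4\,\partial_x(v^0-\varepsilon^2\vp)}(\zeta^\varepsilon,\vartheta^\varepsilon)\ge0.
\]
Substituting the displayed identity for $\Lc^{\vartheta^\varepsilon}v^0$, rearranging, and letting $\varepsilon\to0$ — using $\partial_x(v^0-\varepsilon^2\vp)\to\partial_xv^0$, $\varepsilon^2\Lc^{\vartheta^\varepsilon}\vp\to 0$, and continuity of $\sigma_S$, $\partial_{xx}v^0$, $\theta^0$, $E$ — yields $(D_\vartheta\vp)^\top E^{-4}D_\vartheta\vp\le\nf$ at $(\zeta_o,\vartheta_o)$, which is exactly the subsolution inequality.

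The supersolution property of $\bar u_\ast$ is obtained by the mirror-image argument: test at a strict local minimum of $\bar u_\ast-\vp$, produce local maximizers $(\zeta^\varepsilon,\vartheta^\varepsilon)$ of $v^{\varepsilon\ast}-(v^0-\varepsilon^2\vp)$, apply the viscosity subsolution property of $v^\varepsilon$ (again in non-relaxed form via Remark~\ref{rem: simplification HJB with friction}), and pass to the limit to obtain $(D_\vartheta\vp)^\top E^{-4}D_\vartheta\vp\ge\nf$. I do not anticipate a real obstacle: this is the textbook stability proof for relaxed semilimits, and all the delicate penalization in Section~\ref{sec: PDE characterization} (needed there to control the blow-up of the fast variable $\xib_\varepsilon$) is absent here, because the point $(\zeta,\vartheta)$ is not rescaled and $\Lc^\vartheta v^0$ stays $O(1)$. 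The two points I would write out carefully are: (i) the finiteness and semicontinuity of $\bar u^\ast,\bar u_\ast$, which is precisely where Assumption~\reff{ass: u locally bounded from above} enters and which underlies the existence of the localizing sequences; and (ii) the reduction to the non-relaxed Hamiltonian $\Hc^\varepsilon$, which only requires checking that $\partial_x$ of the test function is positive at the perturbed points — immediate from $\partial_xv^0>0$ and $\varepsilon$ small.
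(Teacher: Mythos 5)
Your argument is correct and takes essentially the same route as the paper's proof: both pass to the limit in the frictional DPE by testing the supersolution (resp.\ subsolution) property of $v^\veps$ with functions of the form $v^0-\veps^2\vp$ at approximate extremizers produced from the strict max/min of the semilimit, use the exact cancellation $\Lc^{\vartheta}v^0=\tfrac12\abs{(\vartheta-\theta^0)^\top\sigma_S}^2\partial_{xx}v^0=-\nf/(4\partial_x v^0)$ coming from \reff{eq: PDE friction less simplified with pf} and \reff{eq: optimum control frictionless}, note that the $\veps^4$ in $\Hc^\veps$ cancels against $(D_\vartheta\psi^\veps)^\top E^{-4}D_\vartheta\psi^\veps=\veps^4(D_\vartheta\vp)^\top E^{-4}D_\vartheta\vp$, and send $\veps\to0$. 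The only presentational difference is that the paper realizes your ``standard extraction of extremizers'' step explicitly through a quartic penalization $c_o(\abs{\zeta-\zeta_\veps}^4+\abs{\vartheta-\vartheta_\veps}^4)$ built into the test function and channels the algebra through Lemma~\ref{lem: remainder estimate} (with $w\equiv0$), which is precisely the specialization you invoke.
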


        \proof

            We focus on the subsolution property; the supersolution property is obtained similarly.
 Consider $(\zeta_o,\vartheta_o)\in \Dfi\x\R^d$ and a smooth function $\vp$ such that
            $$
                \max_{\Dfi\x\R^d}\strict(\bar u^\ast-\vp)=(\bar u^\ast-\vp)(\zeta_o,\vartheta_o)=0.
            $$
            By definition of $\bar u^\ast$, there exist $(\zeta_\veps,\vartheta_\veps)_{\veps>0}\subset\Dfi\x\R^d$, for which
            \beq\label{eq: for lemma indep pi 1}\begin{matrix}
                \displaystyle (\zeta_\veps,\vartheta_\veps)\underset{\veps\to 0}{\longrightarrow}(\zeta_o,\vartheta_o)\;,
                \quad\bar u^{\veps\ast}(\zeta_\veps,\vartheta_\veps)\underset{\veps\to 0}{\longrightarrow}\bar u^\ast(\zeta_o,\vartheta_o),\\
                \displaystyle \mbox{and}\quad
                p^\veps:=\bar u^{\veps\ast}(\zeta_\veps,\vartheta_\veps)-\vp(\zeta_\veps,\vartheta_\veps)\underset{\veps\to 0}{\longrightarrow}0.
            \end{matrix}\eeq
            By Assumptions \reff{ass: v smooth}, \reff{ass: u locally bounded from above}, and \reff{eq: for lemma indep pi 1},
            there are $r_o, \veps_o, \iota>0$
            such that
            \beq\label{eq: for lemma indep pi 2}\begin{matrix}
                \displaystyle
                    2/\iota\ge -\partial_{xx}v\wedge \partial_xv\ge \iota \mbox{ on } B_o\;,
                    \quad
                    \abs{p^\veps}\le1\;,
                    \quad
                    (\zeta_\veps,\vartheta_\veps)\in B_{r_o}(\zeta_o,\vartheta_o),
                \\
                \displaystyle
                    \mbox{and}\quad
                    b^* := \sup \left\{ \bar u^{\veps\ast}(\zeta,\vartheta): (\zeta,\vartheta)\in B_o \;,\; \veps\in(0,\veps_o] \right\}<\infty,
            \end{matrix}\eeq
            where $B_o:=B_{4r_o}(\zeta_,\vartheta_o)$.
            The last estimate implies the existence of
            $\mathbf d>0$ for which
            $$
                \abs{\zeta-\zeta_\veps}^4+\abs{\vartheta-\vartheta_\veps}^4\ge\mathbf d,
                \quad\mbox{for all }
                (\zeta,\vartheta)\in\partial B_o
                \mbox{ and } \veps\in(0,\veps_o].
            $$
            On the other hand, continuity of $\vp$ yields
            $
                1\vee\sup \left\{ 2+b^\ast - \vp(\zeta,\vartheta): (\zeta,\vartheta)\in B_{o} \right\}=: M<+\infty,
            $
            so that we can choose a constant $c_o\ge M/\mathbf d>0$, independent of $\veps$.
            It follows that
            \beq\label{eq: for lemma indep pi 3}
                \phi^\veps(\zeta,\vartheta)\ge 2+b^\ast-\vp(\zeta,\vartheta),
                \quad\mbox{for all }
                (\zeta,\vartheta)\in\partial B_o
                \mbox{ and } \veps\in(0,\veps_o],
            \eeq
            where
            $$
                \phi^\veps:(\zeta,\vartheta)\in \Df\x\R^d\longmapsto c_o\left( \abs{\zeta-\zeta_\veps}^4 + \abs{\vartheta-\vartheta_\veps}^4 \right).
            $$
            Now, define $\psi^\veps:=v^0-\veps^2(p^\veps+\vp+\phi^\veps)$ and $I^\veps:=(v^\veps_\ast-\psi^\veps)/\veps^2$.
            Then, on the one hand, we have $I^\veps(\zeta_\veps,\vartheta_\veps)=0$.
            On the other hand,
            by definition of $p^\veps, \bar u^{\veps\ast}$, and $\phi^\veps$, as well as
            \reff{eq: for lemma indep pi 2} and \reff{eq: for lemma indep pi 3}:
            $I^\veps(\zeta,\vartheta)\ge1$ for all $(\zeta,\vartheta)\in\partial B_o$.
            By upper-semicontinuity of $I^\veps$,
            it follows that $I^\veps$ admits an interior minimizer $(\tilde\zeta_\veps,\tilde\vartheta_\veps)$ on $B_o$.
            Moreover, classical arguments \cite{CrIsLi92} show $(\tilde\zeta_\veps,\tilde\vartheta_\veps)\rightarrow(\zeta_o,\vartheta_o)$ as $\veps\to 0$.
            Hence, the viscosity supersolution property in Assumption \ref{prop:dpefric} implies
            $
                -( \Lc^{\tilde\vartheta_\veps}+\Hc^\veps)\psi^\veps(\tilde\eta_\veps,\tilde\vartheta_\veps)\ge0,
            $    
           for all $\veps\in(0,\veps_o]$.
            After possibly reducing $\veps_o>0$, we obtain $\partial_x\psi^\veps(\tilde\zeta_\veps,\tilde\vartheta_\veps)>0$. Hence,
            Lemma \ref{lem: remainder estimate}, continuity of $\vp$, and the fact that $\phi^\veps$ as well as its derivatives vanish
            as $\veps\rightarrow0$ yield
            $$
                \left(
                    -\frac12\abs{\xib_1^\top \sigma_S}^2\partial_{xx}v^0
                    +\veps^2 \Rc_\veps
                    -\frac{(D_\vartheta\vp)^\top E^{-4}D_\vartheta\vp}{4\partial_x\phi^\veps}
                \right)(\tilde\zeta_\veps,\tilde\vartheta_\veps)\ge0,
            $$
            where $\veps^2\Rc_\veps\to 0$ as $\veps\to 0$.
            Sending $\veps\to 0$ in turn gives
            $$
                    -\frac12\abs{\xib_1^\top \sigma_S}^2\partial_{xx}v^0(\zeta_o,\vartheta_o)
                \ge\frac{(D_\vartheta\vp)^\top E^{-4}D_\vartheta\vp}{4\partial_xv^0}(\zeta_o,\vartheta_o),
            $$
            which proves the asserted viscosity subsolution property.
        \ep\\

        Next, we show that $\bar u^\ast$ and $\bar u_\ast$ satisfy a generalized terminal condition as in
        \cite[Definition 7.4]{CrIsLi92}:

        \begin{lemma}\label{lem: gradient for u at the boundary}
          Suppose Assumptions \ref{prop:dpefric}, \reff{ass: v smooth} and \reff{ass: u locally bounded from above} are satisfied. Then,
          $\bar u^\ast$ and $\bar u_\ast$ are (discontinuous) viscosity sub- and supersolutions, respectively, of
          \b*
            &\displaystyle
                \min
                    \left\{
                        \bar u^\ast-\xib_1^\top k_2\xib_1 \;;\;
                        (D_\vartheta\bar u^\ast)^\top E^{-4}D_\vartheta\bar u^\ast-\nf
                    \right\}\le0, \quad \mbox{on } \Dfb\x\R^d,
            &\\
            &\displaystyle
            \mbox{and}\quad
                \max
                    \left\{
                        \bar u_\ast-\xib_1^\top k_2\xib_1 \;;\;
                        (D_\vartheta\bar u_\ast)^\top E^{-4}D_\vartheta\bar u_\ast-\nf
                    \right\}\ge0, \quad \mbox{on } \Dfb\x\R^d.
            &
          \e*

        \end{lemma}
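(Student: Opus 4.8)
\noindent The plan is to adapt the proof of Lemma~\ref{lem: gradient constraint for bar u} — which established the Eikonal equation in the interior — to the terminal face $\Dfb\x\R^d$, the genuinely new ingredient being the explicit terminal data. I treat the subsolution property of $\bar u^\ast$ only; the supersolution property of $\bar u_\ast$ is entirely symmetric (test $\vp$ from below, work with the upper-semicontinuous envelope $v^{\veps\ast}$ and the viscosity \emph{sub}solution property of $v^\veps$ in Assumption~\ref{prop:dpefric}, and reverse every inequality, ``$\min\le0$'' becoming ``$\max\ge0$''). The starting point is the pointwise identity
$$
  U'\Pf=\xib_1^\top k_2\,\xib_1 \on \Dfb\x\R^d,
$$
which is immediate by comparing the explicit formula for $k_2$ in Lemma~\ref{lem: explicit resolution 1st corrector} with the definition of $\Pf$ in Section~\ref{sec:pref}, using $\partial_xv^0(T,\cdot)=U'$ and $R(T,\cdot)=-U'/U''$ (this routine computation is precisely why the liquidation penalty was chosen as it was, cf.\ Remark~\ref{rem:2.4}). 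Combined with the terminal conditions of Assumption~\ref{prop:dpefric} and Proposition~\ref{theo: friction less value function}, it gives $\bar u^{\veps\ast}\le\xib_1^\top k_2\,\xib_1$ on $\Dfb\x\R^d$, exactly as in the step leading to \reff{eq: interior point terminal} in the proof of Proposition~\ref{prop: terminal condition for u}.

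Fix then $\vp\in C^{1,2}$ such that $\bar u^\ast-\vp$ has a strict maximum over $\Df\x\R^d$ at a point $(\zeta_o,\vartheta_o)\in\Dfb\x\R^d$, with value $0$, so that $\vp(\zeta_o,\vartheta_o)=\bar u^\ast(\zeta_o,\vartheta_o)$. If $\bar u^\ast(\zeta_o,\vartheta_o)\le\xib_1^\top k_2\,\xib_1(\zeta_o,\vartheta_o)$ the first entry of the minimum is nonpositive and there is nothing to prove, so I may assume the \emph{strict} inequality $\bar u^\ast(\zeta_o,\vartheta_o)>\xib_1^\top k_2\,\xib_1(\zeta_o,\vartheta_o)$; it then suffices to show $(D_\vartheta\vp)^\top E^{-4}D_\vartheta\vp\le\nf$ at $(\zeta_o,\vartheta_o)$. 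Choosing $(\zeta_\veps,\vartheta_\veps)\to(\zeta_o,\vartheta_o)$ with $\bar u^{\veps\ast}(\zeta_\veps,\vartheta_\veps)\to\bar u^\ast(\zeta_o,\vartheta_o)$ and $p^\veps:=\bar u^{\veps\ast}(\zeta_\veps,\vartheta_\veps)-\vp(\zeta_\veps,\vartheta_\veps)\to0$, the bound $\bar u^{\veps\ast}\le\xib_1^\top k_2\,\xib_1$ on $\Dfb\x\R^d$ and the continuity of $\xib_1^\top k_2\,\xib_1$ rule out $\zeta_\veps\in\Dfb$ along a subsequence (this would force $\bar u^\ast(\zeta_o,\vartheta_o)\le\xib_1^\top k_2\,\xib_1(\zeta_o,\vartheta_o)$), so without loss of generality $\zeta_\veps\in\Dfi$ for all small $\veps$. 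From here I follow Steps~1--3 of the proof of Lemma~\ref{lem: gradient constraint for bar u} verbatim: introduce the quartic penalization $\phi^\veps(\zeta,\vartheta):=c_o(\abs{\zeta-\zeta_\veps}^4+\abs{\vartheta-\vartheta_\veps}^4)$ with $c_o$ large, set $\psi^\veps:=v^0-\veps^2(p^\veps+\vp+\phi^\veps)$ and $I^\veps:=(v^\veps_\ast-\psi^\veps)/\veps^2$, and extract from $I^\veps\ge1$ on $\partial B_o$ and $I^\veps(\zeta_\veps,\vartheta_\veps)=0$ an interior local minimizer $(\tilde\zeta^\veps,\tilde\vartheta^\veps)$ of $I^\veps$, which converges to $(\zeta_o,\vartheta_o)$ by the usual argument exploiting strictness of the maximum \cite{CrIsLi92}.

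The one new verification is that $\tilde\zeta^\veps\in\Dfi$: were $\tilde\zeta^\veps\in\Dfb$, then $I^\veps(\tilde\zeta^\veps,\tilde\vartheta^\veps)\le I^\veps(\zeta_\veps,\vartheta_\veps)=0$ together with $\bar u^{\veps\ast}\le\xib_1^\top k_2\,\xib_1$ on $\Dfb\x\R^d$ and $\phi^\veps\ge0$ would give $\xib_1^\top k_2\,\xib_1(\tilde\zeta^\veps,\tilde\vartheta^\veps)\ge\vp(\tilde\zeta^\veps,\tilde\vartheta^\veps)-\abs{p^\veps}$, and letting $\veps\to0$ would contradict $\bar u^\ast(\zeta_o,\vartheta_o)=\vp(\zeta_o,\vartheta_o)>\xib_1^\top k_2\,\xib_1(\zeta_o,\vartheta_o)$ — the exact analogue of the argument for \reff{eq: interior point terminal}. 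With $(\tilde\zeta^\veps,\tilde\vartheta^\veps)$ interior, the viscosity supersolution property of $v^\veps$ in Assumption~\ref{prop:dpefric} holds there; expanding $\Lc^{\tilde\vartheta^\veps}\psi^\veps$ and $\Hc^\veps\psi^\veps$ by Lemma~\ref{lem: remainder estimate} (with the auxiliary function $w$ there taken to be $0$), using $\partial_xv^0\wedge(-\partial_{xx}v^0)>0$ and the vanishing of $\phi^\veps$ and its derivatives as $\veps\to0$, and passing to the limit exactly as in the final step of the proof of Lemma~\ref{lem: gradient constraint for bar u}, yields
$$
  -\tfrac12\abs{\xib_1^\top\sigma_S}^2\partial_{xx}v^0(\zeta_o,\vartheta_o)\ \ge\ \frac{(D_\vartheta\vp)^\top E^{-4}D_\vartheta\vp}{4\partial_xv^0}(\zeta_o,\vartheta_o),
$$
i.e.\ $(D_\vartheta\vp)^\top E^{-4}D_\vartheta\vp\le\nf$ at $(\zeta_o,\vartheta_o)$ by the definition \reff{eq:n} of $\nf$. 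This is the second entry of the minimum, completing the subsolution property; in the mirror-image argument for $\bar u_\ast$, the boundary subcase instead produces $\bar u_\ast(\zeta_o,\vartheta_o)=\xib_1^\top k_2\,\xib_1(\zeta_o,\vartheta_o)$ — making the first entry of the maximum vanish — via the symmetric identity $\bar u^\veps_\ast\ge\xib_1^\top k_2\,\xib_1$ on $\Dfb\x\R^d$.

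The main obstacle is exactly the point just isolated: controlling where in time the penalized extremizers sit. In contrast with Lemma~\ref{lem: gradient constraint for bar u}, the small ball around $(\zeta_o,\vartheta_o)$ now meets the terminal slice $\{t=T\}$, on which the interior equation is unavailable, and a priori the extremizer could land there. The generalized (min/max) shape of the terminal condition is precisely what defuses this: the reduction to the case in which the constraint $\bar u^\ast=\xib_1^\top k_2\,\xib_1$ fails to be active at $(\zeta_o,\vartheta_o)$ forces the extremizer off $\{t=T\}$ through the boundary identity $\bar u^{\veps\ast}\le\xib_1^\top k_2\,\xib_1$ on $\Dfb\x\R^d$. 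Everything else is a transcription of arguments already carried out for the interior Eikonal equation and in Proposition~\ref{prop: terminal condition for u}.
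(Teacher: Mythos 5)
Your proposal is correct and follows essentially the same route as the paper: the paper's own proof consists precisely of reducing to the case where the first entry of the min (resp.\ max) is strictly positive (resp.\ negative) and then ``repeating the arguments of Proposition~\ref{prop: terminal condition for u}'', i.e.\ using the terminal identity $U'\Pf=\xib_1^\top k_2\xib_1$ to force the approximating points and penalized extremizers into $\Dfi$, applying the frictional DPE there together with the remainder estimates of Lemma~\ref{lem: remainder estimate}, and passing to the limit to obtain the Eikonal inequality. Your write-up merely makes explicit that this ``repetition'' amounts to the computation already carried out in Lemma~\ref{lem: gradient constraint for bar u} (with $w=0$) combined with the interior-forcing step of Proposition~\ref{prop: terminal condition for u}, which is exactly what the paper intends.
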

        \proof
            Consider $(\zeta_o,\vartheta_o)\in\Dfb\x\R^d$ and a smooth function $\vp$ such that
            $$0=(\bar u^\ast-\vp)(\zeta_o,\vartheta_o)=\max_{\Df\x\R^d}\strict(\bar u^\ast-\vp).$$
            Assume that there is $\delta>0$ for which
            $
                \bar u^\ast(\zeta_o,\vartheta_o)-\xib_1(\zeta_o,\vartheta_o)^\top k_2(\zeta_o)\xib_1(\zeta_o,\vartheta_o)\ge\delta.
            $
            Repeating the arguments of Proposition \ref{prop: terminal condition for u}
            then gives
            $$
                    -\frac12\abs{\xib_1^\top \sigma_S}^2\partial_{xx}v^0(\zeta_o,\vartheta_o)
                \ge\frac{(D_\vartheta\vp)^\top E^{-4}D_\vartheta\vp}{4\partial_xv^0}(\zeta_o,\vartheta_o),
            $$
            and the subsolution property follows. The supersolution property is obtained similarly.
        \ep\\

        Next, we show that $\bar u^\ast, \bar u_\ast$ also solve the Eikonal equation if the $\zeta$-variable is fixed and they are considered as functions of the $\vartheta$-variable only:

        \begin{lemma}\label{lem: gradient for u just in pi}
          Suppose Assumptions \ref{prop:dpefric}, \reff{ass: v smooth} and \reff{ass: u locally bounded from above} are satisfied. Then,
          for any $\zeta_o\in\Df_<$, the functions
          $\vartheta \longmapsto \bar u^\ast(\zeta_o,\vartheta)$ and $\vartheta \longmapsto \bar u_\ast(\zeta_o,\vartheta)$ are viscosity sub- and supersolutions, respectively, of
          $$
            \left\{\bal
                (D_\vartheta\vp)^\top E^{-4}D_\vartheta\vp=\nf,
                &\quad\mbox{on }\R^d\backslash\{\theta^0(\zeta_o)\},\\
                \vp\ge\bar u_\ast(\zeta_o,\cdot)\mbox{ (resp. $\le\bar u^\ast(\zeta_o,\theta^0(\zeta_o))$)},
                    &\quad\mbox{on }\{\vartheta=\theta^0(\zeta_o)\}.
            \eal\right.
          $$
          For any $\zeta_o\in\Dfb$, the functions
          $\vartheta \longmapsto \bar u^\ast(\zeta_o,\vartheta)$ and $\vartheta \longmapsto \bar u_\ast(\zeta_o,\vartheta)$
          are viscosity sub- and supersolutions, respectively, of
          \b*
            &\displaystyle
                \min
                    \left\{
                        \bar u^\ast(\zeta_o,\cdot)-\xib_1^\top k_2(\zeta_o)\xib_1(\zeta_o,\cdot),
                        (D_\vartheta\bar u^\ast)^\top E^{-4}(\zeta_o)D_\vartheta\bar u^\ast(\zeta_o,\cdot)-\nf(\zeta_o,\cdot)
                    \right\}\le0,
            &\\
            &\displaystyle
                \max
                    \left\{
                        \bar u_\ast(\zeta_o,\cdot)-\xib_1^\top k_2(\zeta_o)\xib_1(\zeta_o,\cdot),
                        (D_\vartheta\bar u_\ast)^\top E^{-4}(\zeta_o)D_\vartheta\bar u_\ast(\zeta_o,\cdot)-\nf(\zeta_o,\cdot)
                    \right\}\ge0.
            &
          \e*
        \end{lemma}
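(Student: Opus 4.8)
The plan is to obtain both assertions as immediate consequences of the full--space Eikonal statements of Lemmas~\ref{lem: gradient constraint for bar u} and \ref{lem: gradient for u at the boundary}, by freezing the variable $\zeta=\zeta_o$ via a penalization. The reason this works is structural: the Eikonal operator $q_\vartheta\mapsto q_\vartheta^\top E^{-4}q_\vartheta-\nf$ differentiates only in $\vartheta$, so adding to a $\vartheta$--test function a smooth term penalizing deviations of $\zeta$ from $\zeta_o$ leaves its $\vartheta$--gradient untouched. I will carry out the subsolution property of $\vartheta\mapsto\bar u^\ast(\zeta_o,\vartheta)$ in detail; the supersolution statement for $\vartheta\mapsto\bar u_\ast(\zeta_o,\vartheta)$ is symmetric, obtained by replacing upper with lower semicontinuity, maxima with minima, the $+n|\zeta-\zeta_o|^2$ penalization with $-n|\zeta-\zeta_o|^2$, and the subsolution halves of the two lemmas with their supersolution halves.

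Let $\psi$ be a smooth function on $\R^d$ (a smooth test class suffices, since the sliced equation is first order) such that $\vartheta\mapsto\bar u^\ast(\zeta_o,\vartheta)-\psi(\vartheta)$ has a strict local maximum at $\vartheta_o$, which we may assume after adding $|\vartheta-\vartheta_o|^4$ to $\psi$, since this changes neither $D\psi(\vartheta_o)$ nor the relations to be verified. For $n\in\N$ set $\phi_n(\zeta,\vartheta):=\psi(\vartheta)+n|\zeta-\zeta_o|^2$, and on a small closed ball $\bar B:=(\bar B_r(\zeta_o)\cap\Df)\x\bar B_r(\vartheta_o)$, on which $\bar u^\ast$ is bounded by Assumption~\reff{ass: u locally bounded from above}, let $(\zeta_n,\vartheta_n)$ maximize the upper--semicontinuous function $\bar u^\ast-\phi_n$ over $\bar B$. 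Comparing with the value at $(\zeta_o,\vartheta_o)$ gives $n|\zeta_n-\zeta_o|^2\le\bar u^\ast(\zeta_n,\vartheta_n)-\psi(\vartheta_n)-\bar u^\ast(\zeta_o,\vartheta_o)+\psi(\vartheta_o)$, a bounded quantity, so $\zeta_n\to\zeta_o$; up to a subsequence $\vartheta_n\to\bar\vartheta$, and the upper semicontinuity of $\bar u^\ast$ together with $n|\zeta_n-\zeta_o|^2\ge0$ yields $\bar u^\ast(\zeta_o,\bar\vartheta)-\psi(\bar\vartheta)\ge\bar u^\ast(\zeta_o,\vartheta_o)-\psi(\vartheta_o)$, whence $\bar\vartheta=\vartheta_o$ by strictness of the maximum (for $r$ small). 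Consequently $\bar u^\ast(\zeta_n,\vartheta_n)\to\bar u^\ast(\zeta_o,\vartheta_o)$, $n|\zeta_n-\zeta_o|^2\to0$, and for $n$ large $(\zeta_n,\vartheta_n)$ is an interior local maximizer of $\bar u^\ast-\phi_n$, lying moreover in $\Dfi\x\R^d$ whenever $\zeta_o\in\Dfi$.

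Since $D_\vartheta\phi_n(\zeta_n,\vartheta_n)=D\psi(\vartheta_n)$, the subsolution part of Lemma~\ref{lem: gradient constraint for bar u}, applied when $\zeta_n\in\Dfi$, gives $(D\psi(\vartheta_n))^\top E^{-4}(\zeta_n)D\psi(\vartheta_n)\le\nf(\zeta_n,\vartheta_n)$, while the subsolution part of Lemma~\ref{lem: gradient for u at the boundary}, applied when $\zeta_n\in\Dfb$ (note that $\phi_n$ restricted to $\{t=T\}$ is an admissible test function there), gives $\min\{\bar u^\ast(\zeta_n,\vartheta_n)-\xib_1^\top k_2\xib_1(\zeta_n,\vartheta_n),\,(D\psi(\vartheta_n))^\top E^{-4}(\zeta_n)D\psi(\vartheta_n)-\nf(\zeta_n,\vartheta_n)\}\le0$. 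Letting $n\to\infty$ along a suitable subsequence and using the convergences above together with the continuity of $\psi$, $E^{-4}$, $\nf$, $\xib_1$ and $k_2$: for $\zeta_o\in\Dfi$ one obtains $(D\psi(\vartheta_o))^\top E^{-4}(\zeta_o)D\psi(\vartheta_o)\le\nf(\zeta_o,\vartheta_o)$, the Eikonal subsolution inequality (required only for $\vartheta_o\ne\theta^0(\zeta_o)$), and for $\zeta_o\in\Dfb$ one obtains the stated $\min$--inequality, the pure Eikonal bound (when it is the one surviving along the subsequence) implying the weaker $\min$--form. Finally, at $\vartheta_o=\theta^0(\zeta_o)$ the boundary condition holds trivially: a smooth function touching $\bar u^\ast(\zeta_o,\cdot)$ from above at $\theta^0(\zeta_o)$ satisfies $\psi(\theta^0(\zeta_o))=\bar u^\ast(\zeta_o,\theta^0(\zeta_o))$, and symmetrically for $\bar u_\ast$. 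The only delicate points are the convergence of the penalized extrema for a merely semicontinuous function — routine but to be done carefully — and, when $\zeta_o\in\Dfb$, the dichotomy $\zeta_n\in\Dfi$ versus $\zeta_n\in\Dfb$, which forces one to combine both preceding lemmas and to match their conclusions in the limit; conceptually the argument is short, the key observation being that freezing $\zeta$ costs nothing because the Eikonal equation is first order and involves $\vartheta$--derivatives only.
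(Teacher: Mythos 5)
Your proposal is correct and follows essentially the same route as the paper's proof: a penalization $\pm n\abs{\zeta-\zeta_o}^2$ to freeze the $\zeta$-variable, convergence of the penalized extrema to $(\zeta_o,\vartheta_o)$ via the strict extremum and semicontinuity, and an application of Lemmas~\ref{lem: gradient constraint for bar u} and~\ref{lem: gradient for u at the boundary} followed by a passage to the limit. The only cosmetic difference is that you detail the subsolution case while the paper details the mirror-image supersolution case, and you spell out the interior/terminal dichotomy for $\zeta_n$ that the paper leaves as ``obtained similarly.''
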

        \proof

            We focus on the viscosity supersolution property on $\R^d\backslash\{\theta^0(\zeta_o)\}$ for $\zeta_o\in\Dfi$;
            the other properties are either evident, or obtained similarly (compare Lemma \ref{lem: gradient for u at the boundary}).

            Fix an arbitrary $\zeta_o \in \Df_<$, and consider a smooth function $\vp$ and $\vartheta_o\in\R^d\backslash\{\theta^0(\zeta_o)\}$ such that \beq\label{eq: gradient for u just in pi strict min}
                0=\bar u_\ast(\zeta_o,\vartheta_o)-\vp(\vartheta_o)=\min_{\R^d\backslash\{\vartheta_o\}}\strict(\bar u_\ast(\zeta_o,\cdot)-\vp(\cdot)).
            \eeq
            For each $n\in\N$, define
            \b*
                &\displaystyle
                    \psi^n:(\zeta,\vartheta)\in\Df\x\R^d\longmapsto \vp(\vartheta)-n\abs{\zeta-\zeta_o}^2,
                &\\
                &\displaystyle
                    \mbox{and}\quad
                    I^n:(\zeta,\vartheta)\in\Df\x\R^d\longmapsto \bar u_\ast(\zeta,\vartheta)-\psi^n(\zeta,\vartheta).
                &
            \e*
            By Lemma \ref{lem: bar u unif bounded}, there are $r_o>0$ and $b_o\ge0$ for which
            \beq\label{eq: gradient for u just in pi u bounded below}
                \bar u_\ast\ge -b_o,\quad\mbox{on } B_o,
            \eeq
            where $B_o:=\bar B_{r_o}(\zeta_o,\vartheta_o)$ and $r_o$ is chosen so that $B_o\subset\Dfi$.
            By compactness of $B_o$ and lower-semicontinuity of $I^n$,
            there is $(\zeta_n,\vartheta_n)\in B_o$ minimizing $I^n$ on $B_o$ for each $n\in\N$. Moreover,
            there exist $(\zeta^\ast,\vartheta^\ast)\in B_o$ such that
            $(\zeta_n,\vartheta_n)\rightarrow(\zeta^\ast,\vartheta^\ast)$ as $n\rightarrow+\infty$, possibly along a subsequence.
            Now, on the one hand, the minimality of $I^n(\zeta_n,\vartheta_n)$ on $B_o$ implies that
            $I^n(\zeta_n,\vartheta_n)\le I^n(\zeta_o,\vartheta_o)=\bar u_\ast(\zeta_o,\vartheta_o)-\vp(\vartheta_o),$ which is finite and does not depend on $n$.
            On the other hand, if $\zeta^\ast\ne\zeta_o$, \reff{eq: gradient for u just in pi u bounded below} gives
            $I^n(\zeta_n,\vartheta_n)\rightarrow+\infty$ as $n\rightarrow+\infty$.
            Hence, $\zeta^\ast=\zeta_o$.

            Observe now that $\bar u_\ast(\zeta_o,\vartheta_o)-\vp(\vartheta_o)=I^n(\zeta_o,\vartheta_o)\ge I^n(\zeta_n,\vartheta_n)$ implies
            $$
                \bar u_\ast(\zeta_o,\vartheta_o)-\vp(\vartheta_o)
                   \ge \liminf_{n\rightarrow+\infty} I^n(\zeta_n,\vartheta_n)
                   \ge \bar u_\ast(\zeta_o,\vartheta^\ast)-\vp(\vartheta^\ast).
            $$
            Therefore, $\vartheta^\ast=\vartheta_o$ by the strict minimum property in \reff{eq: gradient for u just in pi strict min}.
            Hence, $(\zeta_n,\vartheta_n)\in$ Int$(B_o)$ for sufficiently large $n$
            so that, by construction, $(\zeta_n,\vartheta_n)$ is a local minimum of $I^n$.
            Lemma \ref{lem: gradient constraint for bar u} in turn yields
            $
                (D_\vartheta\psi^n)^\top E^{-4}D_\vartheta\psi^n(\zeta_n,\vartheta_n)\ge \nf(\zeta_n,\vartheta_n).
            $
            As a result, sending $n\rightarrow+\infty$ finally proves the assertion after
            recalling from Lemma \ref{lem: explicit resolution 1st corrector} that $\nf$ is continuous.
        \ep\\

        In view of Lemma \ref{lem: gradient for u just in pi} and Proposition \ref{prop: terminal condition for u} define, for each $\zeta\in\Df$, the following subsets of $\R^d$:
        \b*
            &\displaystyle
                \Oc^{\zeta\ast}:=
                    \left\{
                        \vartheta\in\R^d: (D_\vartheta\bar u^\ast)^\top E^{-4}D_\vartheta\bar u^\ast(\zeta,\vartheta)\le\nf(\zeta,\vartheta)
                    \right\}
                        \backslash\{\theta^0(\zeta)\},
            &\\
            &\displaystyle
                \Oc^{\zeta}_{\ast}:=
                    \left\{
                        \vartheta\in\R^d: (D_\vartheta\bar u_\ast)^\top E^{-4}D_\vartheta\bar u_\ast(\zeta,\vartheta)\ge\nf(\zeta,\vartheta)
                    \right\}
                        \backslash\{\theta^0(\zeta)\}.
            &
        \e*
        (Here, the inequalities have to be understood in the viscosity sense.)
        By construction, $\bar u^\ast$ and $\bar u_\ast$ are
         viscosity sub- resp.\ supersolutions of the Eikonal equation
        $$
            (D_\vartheta\vp)^\top E^{-4}D_\vartheta\vp(\zeta,\cdot)=\nf(\zeta,\cdot),
        $$
        on $\Oc^{\zeta\ast}$ resp.\ $\Oc^{\zeta}_{\ast}$.
            Observe from the first part of Lemma \ref{lem: gradient for u just in pi} that, for all $\zeta\in\Dfi$ (i.e., before the terminal time),
            we have the following simplification: $\Oc^{\zeta\ast}=\Oc^{\zeta}_{\ast}=\R^d\backslash\{\theta^0(\zeta)\}$,
            or equivalently $(\Oc^{\zeta\ast})^c=(\Oc^{\zeta}_{\ast})^c=\{\theta^0(\zeta)\}$.
            Hence, we have the following estimate for all $\zeta\in\Dfi$:
            \beq\label{eq: compar on Oc ast compl}\bal
                    \bar u^\ast(\zeta,\cdot)&\le \bar u^\ast(\zeta,\theta^0(\zeta))+\xib_1(\zeta,\cdot)^\top k_2(\zeta)\xib_1(\zeta,\cdot),
                        \quad \mbox{on } (\Oc^{\zeta\ast})^c\;,\\
                    \bar u_\ast(\zeta,\cdot)&\ge \bar u_\ast(\zeta,\theta^0(\zeta))+\xib_1(\zeta,\cdot)^\top k_2(\zeta)\xib_1(\zeta,\cdot),
                        \quad \mbox{on } (\Oc^{\zeta\ast})^c\;.
            \eal\eeq
            For $\zeta\in\partial_T\Df$, such a simplification of $\Oc^{\zeta\ast}$ or $\Oc^{\zeta}_{\ast}$ is not available.
            However, combining the second part of Lemma \ref{lem: gradient for u just in pi} with Proposition \ref{prop: terminal condition for u},
            we find that \reff{eq: compar on Oc ast compl} holds for all $\zeta\in\partial_T\Df$ as well, and hence for all $\zeta\in\Df$.
        
        For later use, also note the following. For any $\zeta\in\Df$, we have $\theta^0(\zeta) \notin \Oc^{\zeta\ast}\cup\Oc^{\zeta}_{\ast}$. Hence, Assumption~\reff{ass: v smooth} and the ellipticity of $\sigma_S\sigma_S^\top$ imply the following estimate for the function $\mathfrak{n}$ defined in \eqref{eq:n}:
            $$
            \nf(\zeta,\vartheta)>0
            \on \Oc^{\zeta\ast}\cup\Oc^{\zeta}_{\ast}.
        $$
        Now introduce, for any $\zeta\in\Df$, the operator
        $$
            H^\zeta:(\vartheta,r,q)\in\R^d\x\R\x\R^d\longmapsto -\nf(\zeta,\vartheta)r^2+q^\top E^{-4}(\zeta)q.
        $$
        Also define, for $M>0$, the class $\Cc^-_M$ of negative functions $\mathbb{R}^d \to \mathbb{R}$ bounded from below by $-M$.
        We can then establish the comparison property for $H^\zeta$ on $\Cc^-_M$:

        \begin{lemma}\label{lem: u indep pi comparion for H}
          Suppose Assumption \reff{ass: v smooth} is satisfied. For any $\zeta\in\Df$, let $\Oc^\zeta$ be a subset of $\R^d$ for which
          $\nf(\zeta,\cdot)>0$ on $\Oc^\zeta$, and let $\vf^{1\zeta}$, $\vf^{2\zeta}$, $\vf^{3\zeta}\in\Cc^-_M$ (for some $M>0$)
          be lower-semicontinuous, smooth, and upper-semicontinuous functions,
          satisfying (in the viscosity sense for $\vf^{1\zeta}$ and $\vf^{3\zeta}$):
          \beq\label{eq: u indep pi eq for vf}
            H^\zeta(\cdot,\vf^{1\zeta},D_\vartheta\vf^{1\zeta})\ge0\;,\quad
            H^\zeta(\cdot,\vf^{2\zeta},D_\vartheta\vf^{2\zeta})=0,
            \et
            H^\zeta(\cdot,\vf^{3\zeta},D_\vartheta\vf^{3\zeta})\le0, \quad \mbox{on } \Oc^\zeta.
          \eeq
          Then if $\vf^{1\zeta}\ge\vf^{2\zeta}\ge\vf^{3\zeta}$ on $\R^d\backslash\Oc^\zeta$,
          we have $\vf^{1\zeta}\ge\vf^{2\zeta}\ge\vf^{3\zeta}$ on $\R^d$.
        \end{lemma}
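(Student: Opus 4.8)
The plan is to reduce the assertion to two one-sided comparisons against the classical solution $\vf^{2\zeta}$, to remove by a change of unknown the feature of $H^\zeta$ that obstructs comparison, and then to invoke the comparison principle for the resulting eikonal equation. Fix $\zeta\in\Df$ and drop it from the notation, writing $\vf^i:=\vf^{i\zeta}$, $E:=E(\zeta)$, $\Oc:=\Oc^\zeta$, $\nf(\cdot):=\nf(\zeta,\cdot)$. It clearly suffices to prove $\vf^1\ge\vf^2$ and $\vf^2\ge\vf^3$ on $\R^d$. The difficulty is that $H^\zeta(\vartheta,r,q)=-\nf(\vartheta)r^2+q^\top E^{-4}q$ is not strictly proper: its dependence on $r$ is monotone on $\{r<0\}$ but degenerates as $r\to0$, so a naive doubling argument fails. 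On the other hand $H^\zeta(\vartheta,\cdot,\cdot)$ is homogeneous of degree two, and every $\vf^i$ takes values in $[-M,0)$; this is exactly the situation in which a logarithmic substitution linearizes the zeroth-order term.

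Concretely, set $w^i:=-\log(-\vf^i)$. Since $-\vf^i\in(0,M]$ these are well defined with $w^i\ge-\log M$, and $w^1,w^2,w^3$ are lower-semicontinuous, smooth, upper-semicontinuous respectively. The map $\Psi:s\in(-\infty,0)\mapsto-\log(-s)$ is a smooth increasing bijection onto $\R$ with $\Psi^{-1}(r)=-e^{-r}$, so for any $\phi\in C^1$ one has $D_\vartheta(\Psi^{-1}\circ\phi)=e^{-\phi}D_\vartheta\phi$ and hence
$$H^\zeta\big(\vartheta,\Psi^{-1}(\phi),D_\vartheta(\Psi^{-1}\circ\phi)\big)=e^{-2\phi}\Big((D_\vartheta\phi)^\top E^{-4}D_\vartheta\phi-\nf(\vartheta)\Big).$$
Because the factor $e^{-2\phi}$ is positive, the standard change-of-unknown rule for viscosity solutions turns the inequalities satisfied by the $\vf^i$ into the statements that $w^1$, $w^2$, $w^3$ are, respectively, a viscosity supersolution, a classical solution, and a viscosity subsolution of the eikonal equation $(D_\vartheta w)^\top E^{-4}D_\vartheta w=\nf$ on $\Oc$; and since $\Psi$ is increasing, the hypothesis $\vf^1\ge\vf^2\ge\vf^3$ on $\R^d\backslash\Oc$ becomes $w^1\ge w^2\ge w^3$ there.

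It then remains to show $w^1\ge w^2$ and $w^2\ge w^3$ on all of $\R^d$; applying the increasing map $\Psi^{-1}$ gives the lemma. Both are instances of the comparison principle ``subsolution $\le$ supersolution'' for the eikonal equation with the ordering imposed on $\R^d\backslash\Oc$ acting as a Dirichlet datum, and here the decisive structural fact is that $\nf>0$ on $\Oc$, which makes $q\mapsto q^\top E^{-4}q$ coercive and the Hamiltonian non-degenerate in the interior. I would prove this along classical lines (in the spirit of \cite{kruvzkov.75,ishii.87}): supposing, say, $\sup_{\Oc}(w^3-w^2)>0$, one maximizes $w^3(\vartheta)-w^2(\vartheta')-|\vartheta-\vartheta'|^2/(2\veps)-\chi(\vartheta)$ over $\Oc\x\Oc$ with a suitable coercive penalization $\chi$ (needed because $\Oc$ and the $w^i$ may be unbounded, but kept under control by the growth of $\nf$ at infinity together with the bound $w^i\ge-\log M$), shows the maximizers converge to an interior point of $\Oc$, and inserts the corresponding test gradients into the two viscosity inequalities to contradict $\nf>0$; the prescribed ordering accounts for the boundary term, the vanishing of $\nf$ along $\partial\Oc$ being harmless since there the admissible test gradients go to zero. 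The argument for $w^1\ge w^2$ is the same with $w^2$ now playing the role of a subsolution.

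The change of unknown and the transfer of the ordering are routine bookkeeping; the main obstacle is this last step, the comparison for the eikonal equation on $\Oc$. Since $\Oc$ is only assumed to be a set on which $\nf>0$ (in the application it is $\R^d$ minus the frictionless optimizer, hence unbounded and with $\nf$ degenerating at the removed point), the doubling argument cannot be quoted off the shelf and must be set up with attention to growth at infinity and to the behavior near $\partial\Oc$; making the ordering on $\R^d\backslash\Oc$ genuinely act as a boundary condition is the delicate point.
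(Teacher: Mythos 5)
Your reduction to the eikonal equation is a faithful reformulation (the change of unknown $w^i=-\log(-\vf^{i\zeta})$ does transfer the viscosity properties and the ordering correctly), but it runs in exactly the opposite direction to what makes the lemma provable, and the step you defer to ``classical lines'' is precisely where the argument breaks. The first-order eikonal equation $(D_\vartheta w)^\top E^{-4}D_\vartheta w=\nf$ has no zeroth-order term, hence no strict monotonicity in the unknown: at a maximum point of $w^3-w^2$ (or of your doubled functional), inserting the test gradient into the two viscosity inequalities yields $\nf\le\nf$ and contradicts nothing. Closing such an argument requires an additional device -- Ishii's interpolation with a strict subsolution, whose strictness degenerates exactly where $\nf\to0$, i.e.\ on $\partial\Oc^\zeta$ -- together with control of maximizers escaping to infinity; and after your substitution the $w^i$ are unbounded from above, since $\vf^{i\zeta}$ may approach $0$. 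None of this is supplied; the proposal flags it as ``the delicate point'' and stops. This is not routine bookkeeping: the paper states explicitly that no comparison principle is available for the eikonal equation in general and that the transformation is the device that makes comparison possible, so undoing it discards the essential structure.

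The paper's proof stays with the bounded, strictly negative functions and exploits both features. The two-sided bound $-M\le\vf^{i\zeta}<0$ calibrates a penalization $2M\beta_\eta(\cdot-\bar\vartheta)$ that forces near-minimizers of $\vf^{1\zeta}-\vf^{2\zeta}-2M\beta_\eta(\cdot-\bar\vartheta)$ into $\bar B_\eta(\bar\vartheta)\subset\Oc^\zeta$, so no doubling of variables and no growth control at infinity are needed (the smooth $\vf^{2\zeta}$ itself serves as the test function for the semicontinuous ones). The quadratic zeroth-order term of $H^\zeta$ then delivers the contradiction: passing to the limit gives $\bigl((\vf^{1\zeta})^2-(\vf^{2\zeta})^2\bigr)(\bar\vartheta)\le0$, and since $\vf^{1\zeta}+\vf^{2\zeta}<0$ this forces $\vf^{1\zeta}\ge\vf^{2\zeta}$ at $\bar\vartheta$. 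Your $w$-variables have neither the upper bound nor the zeroth-order term, so both pillars are gone; unless you actually carry out an Ishii-type comparison on the unbounded set $\Oc^\zeta$ for functions unbounded above with $\nf$ vanishing on $\partial\Oc^\zeta$, the proof has a genuine gap at its central step.
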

        \proof

            Fix $\zeta\in\Df$ and drop it from the notation for clarity.
            We focus on the inequality $\vf^1\ge\vf^2$; the other one is obtained analogously.
		 For $\vf^1$ and $\vf^2$ as in the statement of the lemma,
            assume that there are $\bar\vartheta\in\Oc$ and $\alpha>0$ such that
            \beq\label{eq: u indep pi comparison contrad}
                \vf^{1}(\bar\vartheta)-\vf^{2}(\bar\vartheta)\le-\alpha<0,
            \eeq
            and work towards a contradiction.
            Choose $\beta\in C^\infty(\R^d)$, satisfying $0\le\beta\le1$,
            $\beta(0)=1$, $D_\vc\beta(0)=0$
            and $\beta(x)=0$ for all $x\in\R^d\backslash\bar B_1(0)$, and define, for all $\eta>0$:
            $$
                \Phi_\eta:\vartheta\in\R^d\longmapsto (\vf^1-\vf^2-2M \beta_\eta(\cdot-\bar\vartheta))(\vartheta),
                \quad \mbox{where }
                \beta_\eta(x):=\beta(x/\eta).
            $$
            By definition of $\Cc^-_M$ and boundedness of $\beta_\eta$, we have
            $\inf_{\R^d}\Phi_\eta>-\infty$.
            Hence, for each $\delta>0$, there is $\vartheta_\delta\in\R^d$ such that
            \beq\label{eq: u indep pi comparison pi delta}
                \Phi_\eta(\vartheta_\delta)\le\inf_{\R^d}\Phi_\eta+\delta.
            \eeq
            Pick a function $\chi\in C^\infty(\R^d)$ satisfying
            
            $$
                0\le\chi\le1,
                    \quad
                \chi(0)=1,
                    \quad
                \chi(x)=0\mbox{ if } |x|^2>1,
                \et |D_\vartheta\chi|\le c,
            $$
            for a constant $c>0$ independent of $\delta$.
            For each $\delta>0$, let $\chi_\delta:=\chi(\cdot-\vc_\delta)$
            Then,
            for all $\delta>0$:
            $$
                0\le\chi_\delta\le1,
                    \quad
                \chi_\delta(\vartheta_\delta)=1,
                    \quad
                \chi_\delta(\vartheta)=0\mbox{ if } |\vartheta-\vartheta_\delta|^2>1,
                \et |D_\vartheta\chi_\delta|\le c.
            $$
            Now define, for every $\eta,\delta>0$:
            $$
                \Psi_{\eta,\delta}:\vartheta\in\R^d\longmapsto (\Phi_\eta-2\delta\chi_\delta)(\vartheta)
                    =(\vf^1-\vf^2-2M \beta_\eta(\cdot-\bar\vartheta)-2\delta\chi_\delta)(\vartheta).
            $$
            On the one hand, \reff{eq: u indep pi comparison pi delta} in turn enables us to deduce that, for all $\eta,\delta>0$,
            $$
                \Psi_{\eta,\delta}(\vartheta_\delta)=\Phi(\vartheta_\delta)-2\delta\le\inf_{\R^d}\Phi_\eta-\delta<\inf_{\R^d}\Phi_\eta.
            $$
            On the other hand:
            $$
                \Psi_{\eta,\delta}(\vartheta)=\Phi_\eta(\vartheta)\ge \inf_{\R^d}\Phi_\eta,
                \quad\mbox{for all $\vartheta\in\R^d$ such that $|\vartheta-\vartheta_\delta|^2>1$}.
            $$
            As a result, the lower-semicontinuity of $\Psi_{\eta,\delta}$ yields that we can find a minimizing sequence
            $(\hat\vartheta_{\eta,\delta})_{\eta,\delta>0}$ for $\Psi_{\eta,\delta}$.
            Moreover, $\chi_\delta\ge0$, \reff{eq: u indep pi comparison contrad}, and the definition of $\Psi_{\eta,\delta}$ give
            \beq\label{eq: u indep pi for pi in Oc}
                \Psi_{\eta,\delta}(\hat\vartheta_{\eta,\delta})\le
                    \Psi_{\eta,\delta}(\bar\vartheta)\le-\alpha-2M.
            \eeq
            As $\beta,\chi_\delta\le1$, it follows that
            $
                (\vf^1-\vf^2)(\hat \vartheta_{\eta,\delta})\le -\alpha+2\delta<0,
            $
            for all $\delta<\alpha/2$.
           Hence, $\hat \vartheta_{\eta,\delta}\in\Oc$ for all such small $\delta$.
           As $\vf^1,\vf^2\in\Cc^-_M$ and $\chi_\delta\le1$,
            $$
                \Psi_{\eta,\delta}(\hat\vartheta_{\eta,\delta})\ge -M-2M\beta_\eta(\hat\vartheta_{\eta,\delta}-\bar\vartheta)-2\delta.
            $$
            Combined with \reff{eq: u indep pi for pi in Oc}, this leads to
            $$
                2M\beta_\eta(\hat\vartheta_{\eta,\delta}-\bar\vartheta)\ge M-2\delta>0,
                \quad\pourtout(\eta,\delta)\in(0,\infty)\x(0,M/2).
            $$
            By definition of $\beta_\eta$,
            it in turn follows that $\hat\vartheta_{\eta,\delta}\in\bar B_\eta(\bar\vartheta)$ for all
            $(\eta,\delta)\in(0,\infty)\x(0,M/2)$.

            Because
            $\hat\vartheta_{\eta,\delta} \in \Oc$, \reff{eq: u indep pi eq for vf} yields, for all $(\eta,\delta)\in(0,\infty)\x(0,M/2\wedge\alpha/2)$:
            $$
                H(\cdot,\vf^{1},D_\vartheta(\vf^2+2M \beta_\eta(\cdot-\bar\vartheta)+2\delta\chi_\delta))(\hat\vartheta_{\eta,\delta})\ge0
                \et
                H(\cdot,\vf^{2},D_\vartheta\vf^{2})(\hat\vartheta_{\eta,\delta})=0.
            $$
            As $\nf>0$ on $\Oc$,
            this gives
            $$
                [(\vf^{1})^2-(\vf^{2})^2]
                    (\hat\vartheta_{\eta,\delta})
                    -\frac{
                        [D_\vartheta^\top\varrho E^{-4} D_\vartheta \varrho(\hat\vartheta_{\eta,\delta})]^2
                        -[D^\top_\vartheta\vf^2E^{-4}D_\vartheta\vf^2(\hat\vartheta_{\eta,\delta})]^2
                        }
                        {\nf(\hat\vartheta_{\eta,\delta})}
                    \le0,
            $$
            with
            $\varrho:=(\vf^2+2M \beta_\eta(\cdot-\bar\vartheta)+2\delta\chi_\delta)$.
            As we have seen above that $\hat\vartheta_{\eta,\delta}\in\bar B_{\eta}(\bar\vartheta)$,
            there exists $\bar\vartheta_\eta\in\bar B_\eta(\bar\vartheta)$ such that $\hat\vartheta_{\eta,\delta}\rightarrow\bar\vartheta_\eta$
            as $\delta\rightarrow0$, possibly along a subsequence,
            and in turn $\bar\vartheta_\eta\rightarrow\bar\vartheta$ as $\eta\rightarrow0$.
            Hence, taking into account Assumption \reff{ass: v smooth}, continuity of $\vf^2$ and its gradient,
            $D_\vartheta\beta(0)=0$,
            and $|D_\vartheta\chi_\delta|\le c$ independent of $\delta$, the following limit obtains after sending first $\delta\rightarrow0$ and then $\eta\rightarrow0$:
            $$
                \liminf_{\delta,\eta\rightarrow0}(\vf^{1})^2(\hat\vartheta_{\eta,\delta})
                    -(\vf^{2})^2(\bar\vartheta)
                    \le0.
            $$
            Because $\vartheta\longmapsto(\vf^{1})^2(\vartheta)$ is lower-semicontinuous, it follows that
            $
                (\vf^{1}+\vf^2)(\vf^{1}-\vf^2)(\bar\vartheta)\le0,
            $
            As $\vf^{1}+\vf^2<0$ because $\vf^1,\vf^2\in\Cc^-_M$, this contradicts
            \reff{eq: u indep pi comparison contrad} and thereby proves the assertion.
        \ep\\

        Now, for all $(\zeta,\vartheta)\in\Df\x\R^d$, define the mappings $\bar\uf^\ast, \bar\uf_\ast, \tilde\uf^\ast, \tilde\uf_\ast:\Df\x\R^d\rightarrow\R$
         as follows:
        \b*
            \displaystyle
                &\bar\uf^\ast(\zeta,\vartheta)=-e^{-\bar u^\ast(\zeta,\vartheta)}\;,
                &\quad\tilde\uf^\ast(\zeta,\vartheta)=-e^{-(\bar u^\ast(\zeta,\theta^0(\zeta))+\xibu_1^\top(\zeta,\vartheta)k_2(\zeta)\xibu_1(\zeta,\vartheta))}\;,
            \\
                &\bar\uf_\ast(\zeta,\vartheta)=-e^{-\bar u_\ast(\zeta,\vartheta)},
                &\quad\tilde\uf_\ast(\zeta,\vartheta)=-e^{-(\bar u_\ast(\zeta,\theta^0(\zeta))+\xibu_1^\top(\zeta,\vartheta)k_2(\zeta)\xibu_1(\zeta,\vartheta))}.
        \e*

       One readily verifies that this change of variable produces bounded solutions to the Eikonal equation from Lemma~\ref{lem: u indep pi comparion for H}, for which a comparison principle holds on the class of bounded functions by Lemma \ref{lem: u indep pi comparion for H}:

        \begin{lemma}\label{lem: u indep pi viscosity uf and vf}
          Suppose Assumptions \ref{prop:dpefric}, \reff{ass: v smooth} and \reff{ass: u locally bounded from above} are satisfied.
          Then, for all $\zeta_o\in\Df$, the mappings
          $\bar\uf^\ast(\zeta_o,\cdot)$,
          $\tilde\uf^\ast(\zeta_o,\cdot)$,
          $\bar\uf_\ast(\zeta_o,\cdot)$, and
          $\tilde\uf_\ast(\zeta_o,\cdot)$ are
          viscosity subsolution, classical solution, viscosity supersolution, and classical solution, respectively, of
          \b*
            &\displaystyle
                H^{\zeta_o}(\cdot,\bar\uf^\ast, D_\vartheta\bar\uf^\ast)\le0,\quad
                H^{\zeta_o}(\cdot,\tilde\uf^\ast, D_\vartheta\tilde\uf^\ast)=0,
                \quad \mbox{on } \Oc^{\zeta_o\ast},
            &\\
            &\displaystyle
                H^{\zeta_o}(\cdot,\bar\uf_\ast, D_\vartheta\bar\uf_\ast)\ge0,
                    \et
                H^{\zeta_o}(\cdot,\tilde\uf_\ast, D_\vartheta\tilde\uf_\ast)=0.
                \quad \mbox{on } \Oc^{\zeta_o}_{\ast}.
            &
          \e*
          Moreover, $\bar\uf^\ast=\tilde\uf^\ast$ on $(\Oc^{\zeta_o\ast})^c$ and $\bar\uf_\ast=\tilde\uf_\ast$ on $(\Oc^{\zeta_o}_{\ast})^c$.
        \end{lemma}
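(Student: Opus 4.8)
The plan is to transport the conclusions of Lemma~\ref{lem: gradient for u just in pi} through the smooth, strictly increasing change of variables $t\mapsto-e^{-t}$, under which the Eikonal equation becomes the operator $H^{\zeta_o}$. The elementary fact driving everything is that if $u$ is finite and $w=-e^{-u}$, then $w\in(-\infty,0)$, $D_\vartheta w=-w\,D_\vartheta u$, and hence
$$H^{\zeta_o}\!\big(\vartheta,w(\vartheta),D_\vartheta w(\vartheta)\big)=w(\vartheta)^2\Big[(D_\vartheta u)^\top E^{-4}(\zeta_o)D_\vartheta u-\nf(\zeta_o,\cdot)\Big](\vartheta),$$
so, as $w^2>0$, the sign of the left-hand side is that of the Eikonal defect of $u$ (we read $\xib_1(\zeta_o,\vartheta)=\vartheta-\theta^0(\zeta_o)$ throughout). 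Since $\bar u^\ast,\bar u_\ast\ge0$ are finite by \eqref{eq:baru} and Lemma~\ref{lem: bar u unif bounded}, and $\xib_1^\top k_2\xib_1\ge0$ by Lemma~\ref{lem: explicit resolution 1st corrector}, all four maps $\bar\uf^\ast,\tilde\uf^\ast,\bar\uf_\ast,\tilde\uf_\ast$ take values in $[-1,0)$; in particular they belong to $\Cc^-_1$.

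Next I would transfer the viscosity inequalities. Let $\psi\in C^\infty$ be such that $\bar\uf^\ast(\zeta_o,\cdot)-\psi$ has a local maximum at $\vartheta_o\in\Oc^{\zeta_o\ast}$. Then $\psi(\vartheta_o)=\bar\uf^\ast(\zeta_o,\vartheta_o)<0$, so $\varphi:=-\ln(-\psi)$ is smooth near $\vartheta_o$, and since $t\mapsto-\ln(-t)$ is increasing on $(-\infty,0)$ the function $\bar u^\ast(\zeta_o,\cdot)-\varphi$ also has a local maximum at $\vartheta_o$; the subsolution part of Lemma~\ref{lem: gradient for u just in pi} gives $(D_\vartheta\varphi)^\top E^{-4}D_\vartheta\varphi\le\nf$ at $\vartheta_o$, which by the displayed identity (applied with $u=\varphi$, $w=\psi$, and $\psi(\vartheta_o)=\bar\uf^\ast(\zeta_o,\vartheta_o)$) is exactly $H^{\zeta_o}(\vartheta_o,\bar\uf^\ast(\zeta_o,\vartheta_o),D_\vartheta\psi(\vartheta_o))\le0$. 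The same argument with maxima replaced by minima and the inequality reversed yields the supersolution property of $\bar\uf_\ast(\zeta_o,\cdot)$ on $\Oc^{\zeta_o}_\ast$.

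For the two classical solutions it suffices to verify that $g(\vartheta):=(\text{const})+\xib_1^\top k_2(\zeta_o)\xib_1$, with the additive constant playing no role, satisfies $(D_\vartheta g)^\top E^{-4}(\zeta_o)D_\vartheta g=\nf(\zeta_o,\cdot)$ on $\R^d$, and then apply the displayed identity to $w=-e^{-g}$. Since $D_\vartheta g=2k_2\,\xib_1$, this reduces to the matrix identity $k_2E^{-4}k_2=-\tfrac12\,\partial_xv^0\,\partial_{xx}v^0\,\sigma_S\sigma_S^\top$: writing $\Lambda=E^4$ and $A:=(E^{-2}\sigma_S\sigma_S^\top E^{-2})^{1/2}$, the explicit $k_2$ of Lemma~\ref{lem: explicit resolution 1st corrector} reads $k_2=\kappa\,E^2AE^2$ with $\kappa=\partial_xv^0/\sqrt{2R}$, so $k_2E^{-4}k_2=\kappa^2E^2A^2E^2=\kappa^2\,\sigma_S\sigma_S^\top$ and $\kappa^2=-\tfrac12\partial_xv^0\partial_{xx}v^0$; hence $(D_\vartheta g)^\top E^{-4}D_\vartheta g=-2\partial_xv^0\partial_{xx}v^0\,|\xib_1^\top\sigma_S|^2=\nf(\zeta_o,\cdot)$ by \eqref{eq:n}. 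This handles $\tilde\uf^\ast$ and $\tilde\uf_\ast$ simultaneously.

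Finally, the asserted equalities on $(\Oc^{\zeta_o\ast})^c$ and $(\Oc^{\zeta_o}_\ast)^c$ are, via the bijection $t\mapsto-e^{-t}$, equivalent to $\bar u^\ast(\zeta_o,\cdot)=\bar u^\ast(\zeta_o,\theta^0(\zeta_o))+\xib_1^\top k_2\xib_1$ and its $\bar u_\ast$-analogue on the respective exceptional sets. For $\zeta_o\in\Dfi$, the first part of Lemma~\ref{lem: gradient for u just in pi} shows the exceptional set is the single point $\theta^0(\zeta_o)$, where $\xib_1=0$ and the identity is trivial. For $\zeta_o\in\Dfb$, one has $\bar u^\veps\equiv\varpi\circ\xib_1=\xib_1^\top k_2\xib_1$ on $\Dfb\x\R^d$ (since $\partial_xv^0=U'$ there, so $U'\Pf=\xib_1^\top k_2\xib_1$), whence $\bar u_\ast(\zeta_o,\cdot)\le\xib_1^\top k_2\xib_1\le\bar u^\ast(\zeta_o,\cdot)$; combining this with \eqref{eq: compar on Oc ast compl} and the terminal values $\bar u^\ast(\zeta_o,\theta^0(\zeta_o))=\bar u_\ast(\zeta_o,\theta^0(\zeta_o))=0$ of Proposition~\ref{prop: terminal condition for u} gives the equality. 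I expect the only genuine work to be bookkeeping: tracking the direction of the viscosity inequalities through this non-affine substitution (and noting that $-\ln(-\psi)$ is an admissible test function precisely because the relevant semilimit is strictly negative and finite), together with the collapse of fractional powers in the identity for $k_2E^{-4}k_2$; no new idea is required.
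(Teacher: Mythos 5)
Your proposal is correct and follows exactly the route the paper intends: the paper states only that ``one readily verifies'' that the exponential change of variables $u\mapsto -e^{-u}$ turns the Eikonal (in)equalities of Lemma~\ref{lem: gradient for u just in pi} into the corresponding statements for $H^{\zeta_o}$, and your argument supplies precisely that verification (the sign-preserving identity $H^{\zeta_o}(\cdot,w,D_\vartheta w)=w^2[(D_\vartheta u)^\top E^{-4}D_\vartheta u-\nf]$, the transfer of test functions through the increasing map $t\mapsto-\ln(-t)$, the computation $k_2E^{-4}k_2=-\tfrac12\partial_xv^0\partial_{xx}v^0\,\sigma_S\sigma_S^\top$, and the identification on the exceptional sets via \eqref{eq: compar on Oc ast compl} and Proposition~\ref{prop: terminal condition for u}). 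All steps check out, including the membership of the four transformed functions in $\Cc^-_1$.
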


Putting together all the previous results, we can now prove Proposition \ref{prop: u indep pi comparison}:\\

        \proofof{Proposition \ref{prop: u indep pi comparison}}
            First observe from \reff{eq:baru}, \reff{eq: def bar u ast}, and the definition of $\bar\uf^\ast$ and $\bar\uf_\ast$ that
            $-1\le\bar\uf_\ast\le\bar\uf^\ast<0$ so that $\bar\uf^\ast,\bar\uf_\ast\in\Cc^-_1$.
            Lemmata \ref{lem: u indep pi comparion for H}
            and \ref{lem: u indep pi viscosity uf and vf} in turn yield that, for any $(\zeta,\vartheta)\in\Df\x\R^d$:
            $$
                \tilde\uf_\ast(\zeta,\vartheta)\le\bar\uf_\ast(\zeta,\vartheta)
                \et
                \bar\uf^\ast(\zeta,\vartheta)\le\tilde\uf^\ast(\zeta,\vartheta).
            $$
            As $\bar\uf_\ast\le\bar\uf^\ast  $ by definition, this yields
            $$
                \bar u_\ast(\zeta,\theta^0(\zeta))+\xib_1^\top(\zeta,\vartheta)k_2(\zeta)\xib_1(\zeta,\vartheta)
                \le \bar u_\ast(\zeta,\vartheta)
                \le \bar u^\ast(\zeta,\vartheta)
                \le \bar u^\ast(\zeta,\theta^0(\zeta))+\xib_1^\top(\zeta,\vartheta)k_2(\zeta)\xib_1(\zeta,\vartheta).
            $$
            Proposition \ref{prop: u indep pi comparison} now follows from the definition of $u_\ast$ and $u^\ast$ in \reff{eq: def u_eps}.
        \ep

\section{Sufficient Conditions for Assumption {\rm A}}\label{sec: condition for Assumption B}

In this section, we provide a set of sufficient conditions for the abstract Assumption {\rm A} under which our Main Theorem \ref{theo: main result} holds. These sufficient conditions are typical for verification theorems (compare, e.g., \cite{touzi13}), and can be readily verified in concrete models, see Section \ref{sec:final}. Moreover, under these conditions, the policy from Theorem \ref{theo: main result 2} is indeed optimal at the leading order for small price impact costs.

Throughout, we assume that the frictionless value function $v^0$ and the corresponding optimal policy $\theta^0$ are given.
The function $v^0$ satisfies $\partial_x v^0 \vee (-\partial_{xx} v^0)>0$ and is a classical $C^{1,2}$-solution of the frictionless DPE \eqref{eq: PDE friction less simplified with pf}. The policy $\theta^0$ is characterized by the First-Order Condition \eqref{eq: optimum control frictionless} and belongs to $C^{1,2}$. In particular, Assumption (A1) is satisfied.\footnote{These assumptions are satisfied if a \emph{classical} frictionless verification theorem applies, cf., e.g., \cite{touzi13} and the references therein. In particular, they typically hold in the concrete models that can be solved explicitly.}

  For any positive function $f:\Df\rightarrow\R$, we denote by $\Cc^f$ the class of functions $g$ dominated by $f$ in the following sense (here, $\partial\Df$ denotes the spatial boundary of $\Df$):
            \beq\label{eq: growth for comparison}
                \limsup_{\zeta\rightarrow\partial\Df}\frac{\abs g(\zeta)}{1+\abs f(\zeta)}=0.
            \eeq
            With this notation, the sufficient conditions for the validity of Assumption {\rm A} read as follows:

            \begin{assumptionB}
              \benumlab{B}
                \item\label{ass: sufficient for comparison chi} There is a nonnegative function $\chi \in C^{1,2}$ satisfying $-\Lc^{\theta^0}\chi>0$ on $\Dfi$;
                \item\label{ass: sufficient hat u sol} There exists a classical $C^{1,2}$-solution $\hat u$ of the Second Corrector Equation \eqref{eq: 2nd corrector equation}, where the pair $(a,\varpi)$ is the solution of the First Corrector Equation \eqref{eq: first corrector equation} from Lemma \ref{lem: explicit resolution 1st corrector};
                \item\label{ass: sufficient functions in class Cc} $\hat{u}$ and the function $u$ defined though the Probabilistic Representation \reff{eq: definition of u} belong to $\Cc^\chi$;
                \item\label{ass: sufficient strong sol}
                    The feedback policy
                    \begin{equation*}
                    \dot{\theta}^\veps(\zeta,\vartheta) :=-\frac{[E^{-4}D_\xi\varpi]\circ\xib_\veps(\zeta,\vartheta)}{2\veps \partial_x v^0(\zeta)}= \frac{E^{-2}(E^{-2} \sigma_S \sigma_S^\top E^{-2})^{1/2} E^2}{\veps^2 (-2\partial_x  v^0/\partial_{xx}  v^0)^{1/2}}(\zeta)\times(\theta^0(\zeta)-\vartheta),
                    \end{equation*}
                    from Theorem \ref{theo: main result 2} is an admissible control.
                                   \item\label{ass: sufficient growth condition} Set
                $\hat v^\veps:=v^0-\veps^2\hat u-\veps^4\varpi\circ\xib_\veps$.
                 For every $\veps>0$, there is a function $\gamma^\veps$ such that
                    $|\hat v^\veps| \le \gamma^\veps$ on $\Df\x\R$ and, for all $(\zeta,\vartheta,\veps)\in \Df\x\R\x(0,\infty)$:
                    $$
                        \sup_{t\le r\le T}\gamma^\veps\left( r,S^\zeta_r,Y^\zeta_r, X^{\zeta,\vartheta,\veps}_r,\theta^{t,\vartheta,\veps}_r\right)
                        \in L^1.
                    $$
                \item\label{ass: sufficient control of the remainder} The remainder $\Rc^\veps_\Lc$ of Lemma \ref{lem: remainder estimate}, computed for
                    $\psi^\veps=\hat v^\veps$, satisfies:
                    $$
                        \Esp{\int_t^T
                            \abs{\Rc_\Lc^\veps+\tilde \Rc}
                            \left(r,S^\zeta_r,Y^\zeta_r, X^{\zeta,\vartheta,\veps}_r,\theta^{t,\vartheta,\veps}_r\right)dr}
                            \le \veps\beta(\zeta,\vartheta),
                    $$
                    for some continuous function $\beta:\Df\x\R^d\rightarrow\R$,
                    where, for all $(\zeta,\vartheta)\in \Df\x\R^d$:
                    $$
                        \tilde \Rc(\zeta,\vartheta)
                        :=\frac{[(D_\xi\varpi)^\top E^{-4}D_\xi\varpi]\circ\xib_1}{4(\partial_x v^0)^2}
                            \left(
                                \partial_x\hat u-\partial_x\theta^0 D_\xi\varpi\circ\xib_1+\partial_x\varpi\circ\xib_1
                            \right)(\zeta,\vartheta).
                    $$
              \enumlab
            \end{assumptionB}
            
                \begin{remark}
                  Assumption \reff{ass: sufficient growth condition} requires extra integrability of the candidate strategy
                  from Assumption \reff{ass: sufficient strong sol}.
                  This enables us to apply dominated convergence along a sequence of localizing stopping times in the verification argument
                  in the proof of Proposition \ref{prop: sufficient assumption} below.
                  
                  Under Assumption \reff{ass: sufficient control of the remainder}, the remainder of the asymptotic expansion can be controlled
                  along the candidate almost optimal strategy.
                  Indeed, this remainder is then of order $\eps^3$, allowing us not only to recover Assumption~(\ref{ass: u locally bounded from above}) but also to prove that the proposed strategy is optimal at the leading order $O(\eps^2)$.
                  In concrete settings, these two assumptions can be verified using estimates on the diffusions driving the control $\dot\theta^\eps$ of
                  Assumption \reff{ass: sufficient strong sol}, compare Section~\ref{sec:final}.
                \end{remark}
            
            \begin{proposition}\label{prop: sufficient assumption}
             Assumption {\rm B} implies Assumption \reff{ass: u locally bounded from above}, Assumption \reff{ass: comparison for u}, with $\mathcal{C}=\mathcal{C}^\chi$, and $u_\ast=u^\ast=u=\hat u$.
            \end{proposition}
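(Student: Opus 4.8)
\medskip
\noindent\textbf{Proof plan.}\quad
The plan is to reduce the statement to two ingredients: a comparison principle for the \emph{linear} Second Corrector Equation \eqref{eq: 2nd corrector equation} on the class $\Cc^\chi$ built from the function $\chi$ of Assumption~\reff{ass: sufficient for comparison chi}, and a verification estimate along the candidate feedback control $\dot\theta^\veps$ of Assumption~\reff{ass: sufficient strong sol}. Throughout I use $\lambda=\veps^4$, $\Lambda=E^4$, and the fact that $\varpi(\zeta,\xi)=\xi^\top k_2(\zeta)\xi$ is quadratic in $\xi$, so that $\veps^4\,\varpi\circ\xib_\veps=\veps^2\,\varpi\circ\xib_1$ and hence $v^0-\hat v^\veps=\veps^2(\hat u+\varpi\circ\xib_1)$ for $\hat v^\veps$ as in Assumption~\reff{ass: sufficient growth condition}; note also that on $\Dfb$ one has $\hat u=0$, $\veps^2\varpi\circ\xib_1=U'\veps^2\Pf$, and therefore $\hat v^\veps=U-U'\veps^2\Pf$, which matches the terminal condition in Assumption~\ref{prop:dpefric}.

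\medskip
\noindent\emph{Comparison in $\Cc^\chi$.}\quad
First I would show: if $u_1\in\Cc^\chi$ is a lower-semicontinuous supersolution and $u_2\in\Cc^\chi$ an upper-semicontinuous subsolution of \eqref{eq: 2nd corrector equation}, then $u_1\ge u_2$. For $\delta>0$ set $w_\delta:=u_2-u_1-\delta\chi$. Since $u_1,u_2\in\Cc^\chi$ and $\chi\ge0$, one has $\limsup_{\zeta\to\partial\Df}w_\delta\le0$, while the terminal conditions give $w_\delta\le0$ on $\Dfb$; hence a positive supremum of $w_\delta$ would be attained at some interior $\bar\zeta\in\Dfi$. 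As the operator $\Lc^{\theta^0}$ is linear with smooth coefficients and $\chi\in C^{1,2}$, the standard combination of the viscosity inequalities for $u_2$ and $u_1$ at $\bar\zeta$ (cf.\ \cite{CrIsLi92}) forces $-\delta\,\Lc^{\theta^0}\chi(\bar\zeta)\le0$, contradicting Assumption~\reff{ass: sufficient for comparison chi}. Thus $w_\delta\le0$, and letting $\delta\downarrow0$ gives the claim. In particular, since $\hat u\in C^{1,2}$ solves \eqref{eq: 2nd corrector equation} and lies in $\Cc^\chi$ by Assumption~\reff{ass: sufficient functions in class Cc}, it is the unique solution in $\Cc^\chi$; the probabilistic representation \eqref{eq: definition of u}, also in $\Cc^\chi$ by Assumption~\reff{ass: sufficient functions in class Cc}, therefore equals it, so $u=\hat u$.

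\medskip
\noindent\emph{Verification along $\dot\theta^\veps$ and Assumption~\reff{ass: u locally bounded from above}.}\quad
Fix $(\zeta,\vartheta)$ and apply It\^o's formula to $\hat v^\veps$ along the state process driven by the admissible control $\dot\theta^\veps$. Invoking Lemma~\ref{lem: remainder estimate} with $\psi^\veps=\hat v^\veps$, $\phi=\hat u$ (so that $D_\vartheta\hat u=0$ and $\hat\Lc^\veps\hat u=0$) and $w=\varpi$, and then substituting the First Corrector Equation \eqref{eq: first corrector equation} and the Second Corrector Equation \eqref{eq: 2nd corrector equation} to annihilate all $O(\veps^2)$ contributions, a direct computation identifies the drift of $\hat v^\veps$ along $\dot\theta^\veps$ as exactly $\veps^2(\Rc^\veps_\Lc+\tilde\Rc)$, with $\tilde\Rc$ as in Assumption~\reff{ass: sufficient control of the remainder}. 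Localising with stopping times $\tau_n\uparrow T$, taking expectations, and passing to the limit using the domination of Assumption~\reff{ass: sufficient growth condition} on the left and the bound of Assumption~\reff{ass: sufficient control of the remainder} on the right yields $\E[\hat v^\veps_T]\ge\hat v^\veps(\zeta,\vartheta)-\veps^3\beta(\zeta,\vartheta)$. Since $\dot\theta^\veps$ is admissible, $v^\veps(\zeta,\vartheta)\ge\E[\hat v^\veps_T]$, so that
$$0\le\bar u^\veps(\zeta,\vartheta)\le\hat u(\zeta)+\varpi\circ\xib_1(\zeta,\vartheta)+\veps\,\beta(\zeta,\vartheta),\qquad\veps>0,$$
which, by continuity of $\hat u,\varpi,\beta$, is precisely the local uniform bound of Assumption~\reff{ass: u locally bounded from above}.

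\medskip
\noindent\emph{Assumption~\reff{ass: comparison for u} and identification; the main obstacle.}\quad
Taking relaxed semilimits in the last display along $(\zeta',\vartheta')\to(\zeta,\theta^0(\zeta))$, $\veps\to0$, and using $\varpi\circ\xib_1(\zeta,\theta^0(\zeta))=0$, gives $0\le\bar u_\ast(\zeta,\theta^0(\zeta))\le\bar u^\ast(\zeta,\theta^0(\zeta))\le\hat u(\zeta)$; since $\hat u\ge0$, $\hat u\in\Cc^\chi$ and $\chi\ge0$, a squeeze puts both $\bar u_\ast(\cdot,\theta^0(\cdot))$ and $\bar u^\ast(\cdot,\theta^0(\cdot))$ in $\Cc^\chi$. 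Together with $\hat u\in\Cc^\chi$, the comparison step, and Assumption~\reff{ass: sufficient hat u sol}, this verifies Assumption~\reff{ass: comparison for u} with $\Cc=\Cc^\chi$. Now Assumptions~\reff{ass: v smooth} (the standing hypotheses of this section) and \reff{ass: u locally bounded from above} hold, so Lemma~\ref{lem: explicit resolution 1st corrector} makes $\bar u^\ast(\cdot,\theta^0(\cdot))$ and $\bar u_\ast(\cdot,\theta^0(\cdot))$ viscosity sub- and supersolutions of \eqref{eq: 2nd corrector equation}; comparison squeezes both against $\hat u$, whence $\bar u^\ast(\cdot,\theta^0(\cdot))=\bar u_\ast(\cdot,\theta^0(\cdot))=\hat u$, and since $u^\veps=\bar u^\veps-\varpi\circ\xib_1$ also $u^\ast(\cdot,\theta^0(\cdot))=u_\ast(\cdot,\theta^0(\cdot))=\hat u$. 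Proposition~\ref{prop: u indep pi comparison} then propagates this to all $\vartheta$, giving $u_\ast=u^\ast=\hat u=u$. The main obstacle is the drift computation in the verification step: one must expand $\Lc^{\dot\theta^\veps}\hat v^\veps+\{\dot\theta^\veps\cdot D_\vartheta\hat v^\veps-\veps^4(\dot\theta^\veps)^\top\Lambda\dot\theta^\veps\,\partial_x\hat v^\veps\}$ via Lemma~\ref{lem: remainder estimate}, use \emph{both} corrector equations to cancel the $O(\veps^2)$ terms, and match the surviving $O(\veps^3)$ remainder precisely with $\veps^2(\Rc^\veps_\Lc+\tilde\Rc)$ so that Assumption~\reff{ass: sufficient control of the remainder} applies; the subsequent localisation relies on the integrability packaged into Assumptions~\reff{ass: sufficient strong sol} and \reff{ass: sufficient growth condition}.
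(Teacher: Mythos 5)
Your overall architecture coincides with the paper's: first a verification argument along the candidate feedback control (It\^o's formula applied to $\hat v^\veps$, Lemma~\ref{lem: remainder estimate} with $\phi=\hat u$ and $w=\varpi$, cancellation of the $O(\veps^2)$ terms via the two corrector equations, localization by stopping times, dominated convergence under \reff{ass: sufficient growth condition}, and the remainder bound \reff{ass: sufficient control of the remainder}) to obtain $\bar u^\veps\le \hat u+\varpi\circ\xib_1+\veps\beta$ and hence \reff{ass: u locally bounded from above}; then a comparison argument on $\Cc^\chi$ using $\chi$ as a strict supersolution perturbation, followed by the squeeze through Propositions~\ref{prop: sub sol for u}, \ref{prop: super sol for u} and the propagation via Proposition~\ref{prop: u indep pi comparison}. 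The identification of the drift with $-\veps^2(\Rc^\veps_\Lc+\tilde\Rc)$ and the passage to $v^\veps\ge \E[\hat v^\veps_T]$ via the terminal identity $\hat v^\veps=U-U'\veps^2\Pf$ on $\Dfb$ are exactly as in the paper.

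The one place where your argument overreaches is the comparison step. You set $w_\delta=u_2-u_1-\delta\chi$ for an upper-semicontinuous subsolution $u_2$ and a lower-semicontinuous supersolution $u_1$, locate an interior maximum $\bar\zeta$, and then invoke ``the standard combination of the viscosity inequalities at $\bar\zeta$''. For the second-order operator $\Lc^{\theta^0}$ this is not a pointwise operation: neither $u_1$ nor $u_2$ is smooth, so neither can serve as a test function for the other, and combining their viscosity inequalities requires the Crandall--Ishii maximum principle for semicontinuous functions (doubling of variables plus Ishii's lemma), together with a structure condition on the coefficients of $\Lc^{\theta^0}$ on the unbounded domain $\Df$ that is nowhere assumed. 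The paper sidesteps this entirely: by \reff{ass: sufficient hat u sol} the solution $\hat u$ is \emph{classical}, so one only ever compares a single semicontinuous semisolution against the smooth function $\hat u\mp\kappa\chi$, which is a legitimate test function; this weaker comparison (semisolution versus classical solution) is all that is needed, since $u$, $\hat u$ and the squeeze of the semilimits are all anchored at the classical solution. Your proof is repaired by restricting the comparison claim accordingly; as stated, the general two-semisolution comparison is not justified by the argument you give.
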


            \proof                \emph{Step 1: prove Assumption \reff{ass: u locally bounded from above}.}
                Fix $(\zeta,\vartheta,\veps)\in \Dfi\x\R^d\x(0,\infty)$, set $(X,\theta):=(X^{\zeta,\vartheta,\veps},\theta^{t,\vartheta,\veps})$ and
                $\Upsilon:=(S^\zeta,Y^\zeta,X^{\zeta,\vartheta,\veps},\theta^{t,\vartheta,\veps})$ to ease notation, and
                define the stopping times
                $$
                    \tau^\veps_n:=T\wedge\inf
                        \{
                            u\ge t: \Upsilon_u\notin B_n(\zeta,\vartheta)
                        \}
                        , \quad n \geq 1.
                $$
                By smoothness of $v^0, \theta^0$, and Assumption \reff{ass: sufficient hat u sol}, we have
                $\hat v^\veps\in C^{1,2}(\Df\x\R^d)$. It\^o's formula in turn yields
                $$\bal
                    \hat v^\veps(\zeta,\vartheta)
                    =\; \Esp{
                        \hat v^\veps
                            \left(
                                \tau_n^\veps,\Upsilon_{\tau_n^\veps}
                            \right)
                        -\int_t^{\tau^\veps_n}
                            \left(
                                \Lc^{\theta^\veps}v^\veps
                                +\veps^2\frac{[(D_\xi\varpi)^\top E^{-4}D_\xi\varpi]\circ\xib_\veps}{4\partial_x v^0}
                                +\veps^2\tilde \Rc
                            \right)\left(u,\Upsilon_u\right)du}.
                \eal$$
                In view of Lemma \ref{lem: remainder estimate},
                $$
                    \Lc^\vartheta\hat v^\veps(\zeta,\vartheta)=
                    \left\{
                    \Lc^{\theta^0}v^0
                        +\veps^2
                        \left(
                            \frac12\abs{\xib_\veps^\top \sigma_S}^2\partial_{xx}v^0
                            -\Lc^{\theta^0}\hat u
                            -\frac12\Tr{c_{\theta^0} D^2_{\xi\xi}\varpi\circ\xib_\veps}
                            +\hat \Rc^\veps_\Lc
                        \right)
                    \right\}(\zeta,\vartheta).
                $$
                Now, use the frictionless DPE \eqref{eq: friction less HJB simplified} for $v^0$, the Second Corrector Equation \eqref{eq: 2nd corrector equation} for $\hat u$ (which holds by Assumption \reff{ass: sufficient hat u sol}), and the definition of $\varpi$ (cf.\ Lemma \ref{lem: explicit resolution 1st corrector}), obtaining
                $$\bal
                    \hat v^\veps(\zeta,\vartheta)
                    =\;& \Esp{
                        \hat v^\veps
                            \left(
                                \tau_n^\veps,\Upsilon_{\tau_n^\veps}
                            \right)
                        -\veps^2\int_t^{\tau^\veps_n}
                            \left(
                                \hat \Rc^\veps_\Lc
                                +\tilde \Rc^\veps
                            \right)\left(u,\Upsilon_u\right)du}\\
                        \le& \Esp{
                        \hat v^\veps
                            \left(
                                \tau_n^\veps,\Upsilon_{\tau_n^\veps}
                            \right)
                        }+\veps^3\beta(\zeta,\vartheta),
                \eal$$
                where the inequality follows from \reff{ass: sufficient control of the remainder}.
                In view of \reff{ass: sufficient growth condition} and the terminal condition $\hat{u}(T,\cdot)=0$, dominated convergence in turn yields
                \beq\label{eq: ineq for opt strat}
                    \hat v^\veps(\zeta,\vartheta)\le \Esp{U \left(X^{\zeta,\vartheta,\veps}_T\right)-U'(X^{\zeta,\vartheta,\veps}_T)\Pf(T,\Upsilon_T)}+\veps^3\beta(\zeta,\vartheta)
                        \le v^\veps(\zeta,\vartheta)+\veps^3\beta(\zeta,\vartheta),
                \eeq
                as $n \to \infty$. Here, the last inequality follows from admissibility of the wealth process $X^{\zeta,\vartheta,\veps}$ (cf.~Assumption \reff{ass: sufficient strong sol}) and the definition of the frictional value function \eqref{eq: definition friction value function}.
                By definition of $\bar u^\veps$ in \reff{eq:baru},
                \eqref{eq: ineq for opt strat} gives
                \beq\label{eq: u eps bounded bay hat u}
                    \bar u^\veps(\zeta,\vartheta)\le (\hat u+\veps \beta + \varpi\circ\xib_1)(\zeta,\vartheta).
                \eeq
                Assumption \reff{ass: u locally bounded from above} in turn follows from the continuity of $\hat u$, $\beta$, and $\varpi$.

                \emph{Step 2: show that Assumption \reff{ass: comparison for u} holds, and $u_\ast=u^\ast=u=\hat u$.}

                    Let $\tilde u\in C^{1,2}(\Df)\cap\Cc^\chi$ be a classical solution of \reff{eq: 2nd corrector equation}, and let
                    $u_1\in\Cc^\chi$ (resp. $u_2\in\Cc^\chi$) be a lower-(resp. upper-) semicontinuous viscosity supersolution (resp. subsolution)
                    of \reff{eq: 2nd corrector equation} such that $u_1\ge \tilde{u} \ge u_2$ on $\Dfb$.
                    We prove that $u_1\ge\tilde u$ on $\Df$; the inequality $\tilde u\le u_2$ is obtained similarly.

                    Assume to the contrary that there is $\hat\zeta\in \Dfi$ such that $(u_1-\tilde u)(\hat\zeta)<0$.
                    For $\kappa>0$ small enough, we then have $(u_1-\tilde u+\kappa\chi)(\hat\zeta)<0$.
                    As, moreover, the definition of $\Cc^\chi$ in \reff{eq: growth for comparison} implies $(u_1-\tilde u+\kappa\chi)>0$ near the spatial boundary of $\Df$, it follows that there is $\zeta_\kappa\in \Df$ such that
                    $$
                        \min_\Df(u_1-\tilde u+\kappa\chi)=(u_1-\tilde u+\kappa\chi)(\zeta_\kappa)\le (u_1-\tilde u+\kappa\chi)(\hat\zeta)<0.
                    $$
                    As $u_1\ge\tilde u$ on $\Dfb$, $\zeta_\kappa\in\Dfb$ would imply $\chi(\zeta_\kappa)<0$, which contradicts $\chi \geq 0$ in \eqref{ass: sufficient for comparison chi}. Therefore, $\zeta_k$ is an interior minimum of $u_1-(\tilde{u}-\kappa \chi)$, and the viscosity supersolution property of $u_1$ gives
                                      $
                        -\Lc^{\theta^0}(\tilde u-\kappa\chi)(\zeta_\kappa)\ge a.
                    $
                    Because $\tilde u$ is a classical solution of $-\Lc^{\theta^0}\tilde u=a$, it follows that $\Lc^{\theta^0}\chi\ge0$,
                    which contradicts \reff{ass: sufficient for comparison chi}.
                    Thus, $u_1\ge \tilde u$ on $\Df$ as claimed.

                    Applying \reff{eq: u eps bounded bay hat u} to any subsequence $(\zeta_\veps,\vartheta_\veps)$
                    and using $\hat{u} \in \Cc^\chi$ (cf.\ \reff{ass: sufficient functions in class Cc}) yields
                    $u_\ast, u^\ast\in\Cc^\chi$.
                    As the classical solution $\hat{u}$ is also a viscosity solution of \eqref{eq: 2nd corrector equation}, Propositions \ref{prop: sub sol for u}, \ref{prop: super sol for u}, and the comparison result established above show that $u_\ast\ge\hat u\ge u^\ast$.  As $u^\ast\ge u_\ast$ by definition, this shows $\hat u=u_\ast=u^\ast$.

                    The function $u$ defined in \reff{eq: definition of u} is locally bounded because $u \in \Cc^\chi$ and $\chi \in C^{1,2}$. Hence, $u$ is a viscosity solution of \eqref{eq: 2nd corrector equation}, and it follows as above that $u=\hat{u}=u^\ast=u_\ast$.
           \ep\\

           As a corollary, we obtain our second main result, Theorem \ref{theo: main result 2}:

        \begin{corollary}\label{cor: optimal strat}
          Under Assumptions \ref{prop:dpefric} and {\rm B}, the investment strategy $\dot{\theta}^\veps$ defined in \reff{ass: sufficient strong sol}
          is optimal at the leading order $O(\varepsilon^2)$. That is, for each compact subset $B$ of $\Df\x\R^d$ and $\veps>0$,
          there is a constant $K_B^\veps>0$ such that $K_B^\veps\rightarrow0$ as $\veps\rightarrow0$ and
          $$
            v^\veps(\zeta,\vartheta)-\veps^2 K^\veps_B\le \Esp{U\left(X^{\zeta,\vartheta,\veps}_T \right)-U'(X^{\zeta,\vartheta,\veps}_T)\Pf(T,S_T^\zeta,Y_T^\zeta,X_T^{\zeta,\vartheta,\veps})},
            \quad\pourtout(\zeta,\vartheta)\in B\mbox{ and }\veps>0,
          $$
          where $(X^{\zeta,\vartheta,\veps},\theta^{t,\vartheta,\veps})$ is defined as in \reff{ass: sufficient strong sol}.
        \end{corollary}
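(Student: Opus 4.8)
The plan is to read off Corollary~\ref{cor: optimal strat} from the verification inequality already established \emph{inside} the proof of Proposition~\ref{prop: sufficient assumption}, combined with the expansion of Theorem~\ref{theo: main result}. Under Assumptions~\ref{prop:dpefric} and~{\rm B}, Proposition~\ref{prop: sufficient assumption} guarantees that Assumption~{\rm A} holds with $u=\hat u$, so Theorem~\ref{theo: main result} applies and gives, \emph{locally uniformly} on $\Df\times\R^d$,
$$v^\veps(\zeta,\vartheta)=v^0(\zeta)-\veps^2\bigl(u(\zeta)+\varpi(\zeta,\vartheta-\theta^0(\zeta))\bigr)+o(\veps^2).$$
The first step is to observe that, since $\varpi(\zeta,\xi)=\xi^\top k_2(\zeta)\xi$ is quadratic in $\xi$ by Lemma~\ref{lem: explicit resolution 1st corrector}, one has $\veps^4\,\varpi\circ\xib_\veps(\zeta,\vartheta)=\veps^2\varpi(\zeta,\vartheta-\theta^0(\zeta))$; hence the approximate value function $\hat v^\veps:=v^0-\veps^2\hat u-\veps^4\varpi\circ\xib_\veps$ from Assumption~\reff{ass: sufficient growth condition} equals $v^0-\veps^2\bigl(u+\varpi(\cdot,\cdot-\theta^0)\bigr)$. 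Consequently $\veps^{-2}(v^\veps-\hat v^\veps)=-\bigl(\bar u^\veps-(u+\varpi(\cdot,\cdot-\theta^0))\bigr)\to0$ locally uniformly, so that for a fixed compact $B\subset\Df\times\R^d$, setting $\kappa_B^\veps:=\sup_B\bigl|\bar u^\veps-(u+\varpi(\cdot,\cdot-\theta^0))\bigr|$, one gets $\hat v^\veps\ge v^\veps-\veps^2\kappa_B^\veps$ on $B$ with $\kappa_B^\veps\to0$ as $\veps\to0$.

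The second step is to invoke inequality~\eqref{eq: ineq for opt strat} from Step~1 of the proof of Proposition~\ref{prop: sufficient assumption}, which was derived precisely along the wealth--position pair $(X^{\zeta,\vartheta,\veps},\theta^{t,\vartheta,\veps})$ generated by the candidate feedback control $\dot\theta^\veps$ of Assumption~\reff{ass: sufficient strong sol}, namely
$$\hat v^\veps(\zeta,\vartheta)\le\Esp{U\bigl(X^{\zeta,\vartheta,\veps}_T\bigr)-U'\bigl(X^{\zeta,\vartheta,\veps}_T\bigr)\Pf\bigl(T,S_T^\zeta,Y_T^\zeta,X_T^{\zeta,\vartheta,\veps}\bigr)}+\veps^3\beta(\zeta,\vartheta).$$
Combining this with the previous step, and using $\beta(\zeta,\vartheta)\le\beta_B:=\sup_B|\beta|<\infty$ (finite since $\beta$ is continuous and $B$ compact), yields, for all $(\zeta,\vartheta)\in B$,
$$\Esp{U\bigl(X^{\zeta,\vartheta,\veps}_T\bigr)-U'\bigl(X^{\zeta,\vartheta,\veps}_T\bigr)\Pf\bigl(T,S_T^\zeta,Y_T^\zeta,X_T^{\zeta,\vartheta,\veps}\bigr)}\ge v^\veps(\zeta,\vartheta)-\veps^2\bigl(\kappa_B^\veps+\veps\beta_B\bigr),$$
so the claim follows with $K_B^\veps:=\kappa_B^\veps+\veps\beta_B\to0$.

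There is no genuine obstacle: the analytic content (the It\^o expansion of $\hat v^\veps$ along $\dot\theta^\veps$, the dominated-convergence argument behind \eqref{eq: ineq for opt strat}, and the control of the remainders via Assumptions~\reff{ass: sufficient growth condition}--\reff{ass: sufficient control of the remainder}) has already been carried out in the proof of Proposition~\ref{prop: sufficient assumption}, and the present argument only recombines it with Theorem~\ref{theo: main result}. The sole point deserving care is that the error term $o(\veps^2)$ in Theorem~\ref{theo: main result} be \emph{locally uniform} rather than merely pointwise --- this is exactly the locally uniform convergence $\bar u^\veps\to u+\varpi(\cdot,\cdot-\theta^0)$ asserted there, and it is what allows $\kappa_B^\veps$ to be chosen uniformly over the compact set $B$.
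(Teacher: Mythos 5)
Your proposal is correct and follows essentially the same route as the paper: the paper's proof of Corollary~\ref{cor: optimal strat} likewise consists of quoting inequality~\eqref{eq: ineq for opt strat} from Step~1 of the proof of Proposition~\ref{prop: sufficient assumption} and then invoking the locally uniform convergence of $\bar u^\lambda$ from Theorem~\ref{theo: main result} to pass from $\hat v^\veps=v^0-\veps^2 u-\veps^2\varpi\circ\xib_1$ to $v^\veps$ uniformly on the compact set $B$. Your write-up merely makes explicit the bookkeeping (the identity $\veps^4\varpi\circ\xib_\veps=\veps^2\varpi\circ\xib_1$ and the choice $K_B^\veps=\kappa_B^\veps+\veps\sup_B|\beta|$) that the paper leaves implicit.
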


        \proof

            In the proof of Proposition \ref{prop: sufficient assumption}, we have shown \reff{eq: ineq for opt strat}:
            $$
                    v^0(\zeta)-\veps^2u(\zeta)-\veps^2\varpi\circ\xib_1(\zeta,\vartheta)-\veps^3 \beta(\zeta,\vartheta)
                        \le \Esp{U \left(X^{\zeta,\vartheta,\veps}_T \right)-U'(X^{\zeta,\vartheta,\veps}_T)\Pf(T,S_T^\zeta,Y_T^\zeta,X_T^{\zeta,\vartheta,\veps})}.
            $$
            This corollary thus follows from
            the local uniform convergence of $\bar u^\lambda$ shown in
            Theorem \ref{theo: main result}.        \ep

\section{Examples}\label{sec:final}

In this section we show how all of our technical assumptions can be verified in concrete settings. For the sake of clarity, we do not strive for minimal assumptions. Throughout, we consider an investor with an exponential utility function $-e^{-\eta x}$ with constant absolute risk aversion $\eta>0$. 

\subsection{Portfolio Choice}\label{ex:pc}

First we focus on a portfolio choice problem. There is a single risky asset with dynamics\footnote{This specification allows for predictable returns as in \cite{bouchaud.al.12,martin.12,garleanu.pedersen.13a,garleanu.pedersen.13b,dufresne.al.12}. To ensure enough integrability for a rigorous verification theorem, we truncate large values of the state variable by assuming boundedness of all coefficients. Nonlinear dynamics and stochastic volatility can be handled without difficulties.}
$$dS_t=\mu_S(Y_t)dt+\sigma_S(Y_t)dW^1_t,$$
driven by a one-dimensional autonomous diffusion:
$$dY_t=\mu_Y(Y_t)dt+\sigma_Y(Y_t)d\left(\rho W^1_t+\sqrt{1-\rho^2}W^2_t\right).$$
Here, $W=(W^1,W^1)$ is a two-dimensional standard Brownian motion, $\rho \in [-1,1]$, and the mappings $\mu_S$, $\mu_Y$, $\sigma_S$, $\sigma_Y: \mathbb{R} \longmapsto \mathbb{R}$ all are bounded and smooth, with bounded derivatives of all orders, and the volatilities $\sigma_S, \sigma_Y$ are bounded away from zero. Then, $Y$ and in turn $S$ are well defined and it follows similarly as in \cite{zariphopoulou.01} that the frictionless value function $v^0$ is a classical solution of the frictionless DPE, which can be transformed into a linear, uniformly parabolic equation in this case. The value function $v^0$ can be written as
\begin{equation}\label{eq:v0}
v^0(t,y,x)=e^{-\eta x}w^0(t,y),
\end{equation}
and the corresponding optimal policy is given by
$$\theta^0_t=\theta^0(t,Y_t)=\frac{\mu_S(Y_t)}{\eta\sigma^2_S(Y_t)} +\frac{\rho\sigma_Y(Y_t)}{\eta\sigma_S(Y_t)}\frac{\partial_y w^0(t,Y_t)}{w^0(t,Y_t)}.$$
Similarly as in \cite[Theorem 3.1]{zariphopoulou.01}, one verifies that $w^0, \theta^0$ are also bounded and smooth, with bounded derivatives of all orders.\footnote{For $w^0$, this follows from the corresponding Feynman-Kac representation. As all coefficients are smooth, one can then differentiate the PDE for $w^0$ and argue analogously for all of its derivatives.} In particular, all regularity assumptions imposed on the frictionless problem in Section \ref{sec: condition for Assumption B} are satisfied. Moreover, it follows from Novikov's condition and Girsanov's theorem that $\partial_x v^0(t,Y_t,X^{\theta^0}_t)/\partial_x v^0(0,y,x)$ is the density process of an equivalent martingale measure $\mathbb{Q}$, the dual minimizer for the optimization problem at hand.

Now, consider constant linear price impact, $\Lambda_t=\lambda=\varepsilon^4>0$. Then, all of our technical assumptions hold and we have the following result:

\begin{theorem}\label{thm:pc}
In the setting of Section \ref{sec:final}, Assumptions \ref{prop:dpefric} and {\rm B} are satisfied, so that Theorems~\ref{theo: main result} and \ref{cor: optimal strat} are applicable. As a consequence, a leading-order optimal policy with small constant price impact $\Lambda_t=\lambda=\varepsilon^4$ is given in feedback form as
\begin{equation}\label{eq:rateBachelier}
\dot{\theta}^\varepsilon_t=\sqrt{\frac{\eta\sigma_S^2(Y_t)}{2\varepsilon^4}}(\theta^0_t-\theta^\varepsilon_t).
\end{equation}
The corresponding first-order correction of the value function reads as:
\begin{align*}
v^\varepsilon(t,y,x,\vartheta) = v^0\Big(t,y,x-\mathrm{CE}(t,y,\vartheta)\Big) +o(\varepsilon^2),
\end{align*}
where
$$\mathrm{CE}(t,y,\vartheta)=\frac{\varepsilon^2}{\sqrt{2\eta}}\left(\mathbb{E}_\mathbb{Q}\left[\int_t^T \Big(\partial_y \theta^0(Y^{t,y}_r)^2 \sigma_Y(Y^{t,y}_r)^2 \sigma_S(Y^{t,y}_r)\Big)dr\right]+\sigma_S(y)(\theta^0(0,y)-\vartheta)^2\right).$$
\end{theorem}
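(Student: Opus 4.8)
The plan is to verify the hypotheses of Theorem~\ref{theo: main result} and Corollary~\ref{cor: optimal strat} by checking Assumptions \ref{prop:dpefric} and {\rm B} one by one, exploiting the exponential structure \eqref{eq:v0} and the boundedness of all coefficients. First I would record the reductions already noted in the text: the frictionless value function factors as $v^0(t,y,x)=e^{-\eta x}w^0(t,y)$ with $w^0,\theta^0 \in C^{1,2}$, bounded together with all their derivatives, and $w^0$ bounded away from $0$; hence $\partial_x v^0 = -\eta e^{-\eta x} w^0 <0$ — wait, one must be careful with signs: for $U(x)=-e^{-\eta x}$ one has $U'>0$, so in fact $v^0<0$, $\partial_x v^0 = -\eta v^0 >0$ and $\partial_{xx}v^0 = \eta^2 v^0 <0$, giving $R = -\partial_x v^0/\partial_{xx}v^0 = -1/\eta$; I should double check the sign convention in the paper so that $R=1/\eta>0$ as it must be, which forces $R(\zeta) = -\partial_x v^0/\partial_{xx} v^0 = 1/\eta$. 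With $\partial_x v^0 \wedge (-\partial_{xx}v^0)>0$, Assumption (A1) holds, and the corrector coefficient becomes $k_2 = \partial_x v^0 \sqrt{\eta/2}\,\sigma_S = \sqrt{\eta/2}\,\sigma_S\,\partial_x v^0$ (scalar, $d=1$), so $\varpi(\zeta,\xi)=k_2(\zeta)\xi^2$ and $a(\zeta)=c_{\theta^0}(\zeta)k_2(\zeta)$ with $c_{\theta^0}=(\partial_y\theta^0)^2\sigma_Y^2$ here (since $\theta^0$ depends only on $(t,y)$). This immediately yields the stated feedback rate \eqref{eq:rateBachelier} and, via the Feynman--Kac representation \eqref{eq: definition of u} together with the fact that $\partial_x v^0(\cdot)/\partial_x v^0(\zeta)$ is the density of $\mathbb{Q}$, the stated $\mathrm{CE}$ expression after a first-order Taylor expansion as in Remark~\ref{rem:repQ}.

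Next I would address Assumption \ref{prop:dpefric} (the frictional DPE). For exponential utility on $\mathbb{R}$ with quadratic costs, the weak dynamic programming principle of Bouchard--Touzi applies; alternatively, and more in the spirit of a concrete verification, one notes that the boundedness of all coefficients plus the exponential structure give enough integrability for $v^\lambda$ to be a (discontinuous) viscosity solution of \eqref{eq: viscosity solution for friction case}. I would cite \cite{bouchard.touzi.11} and the discussion in the text and keep this step short. Then, for Assumption {\rm B}: for \reff{ass: sufficient for comparison chi} I would exhibit an explicit supersolution $\chi$; because everything is bounded in $y$ and the exponential utility localizes $x$, a natural choice is $\chi(t,y,x) = C(T-t)e^{-\eta x}w^0(t,y) + 1$ or, more simply, $\chi = e^{-\eta x}w^0(t,y)\cdot(1+C(T-t))$, for which $-\Lc^{\theta^0}\chi = C e^{-\eta x}w^0 >0$ on $\Dfi$ using $\Lc^{\theta^0}v^0=0$; one checks $\chi\ge0$ and that $\Cc^\chi$ contains the relevant functions. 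For \reff{ass: sufficient hat u sol}, the linear equation $-\Lc^{\theta^0}u=a$, $u(T,\cdot)=0$ with $a$ bounded and smooth has a classical bounded solution by standard parabolic theory (\cite{friedman.64}), given by the Feynman--Kac formula \eqref{eq: definition of u}; since $a$ depends on $(t,y)$ only, $\hat u=\hat u(t,y)$ is bounded with bounded derivatives. Then \reff{ass: sufficient functions in class Cc} follows because $\hat u$ is bounded and $\chi\to\infty$ at the spatial boundary (as $e^{-\eta x}\to\infty$), so the growth condition \eqref{eq: growth for comparison} is trivially satisfied.

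The last three conditions \reff{ass: sufficient strong sol}--\reff{ass: sufficient control of the remainder} are where the real work lies, and I expect \reff{ass: sufficient control of the remainder} to be the main obstacle. For \reff{ass: sufficient strong sol}: under the candidate rate $\dot\theta^\veps = \sqrt{\eta\sigma_S^2(Y_t)/(2\veps^4)}(\theta^0_t-\theta_t)$, the deviation $\Delta_t=\theta_t-\theta^0(t,Y_t)$ solves a linear SDE $d\Delta_t = -\sqrt{\eta\sigma_S^2/(2\veps^4)}\,\Delta_t\,dt - d\theta^0(t,Y_t)$, which has a unique strong solution since $\sigma_S$ is bounded above and below and $\theta^0$ is a smooth bounded function of a nice diffusion; the resulting wealth process is a stochastic integral against $dS$ minus an integrable cost term, and for exponential utility one gets admissibility (integrability of $U(X_T)$ and approximability by simple strategies à la Biagini--Čern\'y) from boundedness of $\mu_S,\sigma_S$ and Gaussian-type tail bounds on $\Delta$ — here one uses that $\Delta_t$ has sub-Gaussian tails uniformly in $\veps$ because the mean-reversion speed is positive and the driving term $d\theta^0$ has bounded coefficients. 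For \reff{ass: sufficient growth condition}, $|\hat v^\veps| = |v^0 - \veps^2\hat u - \veps^4\varpi\circ\xib_\veps| \le |v^0|(1+C\veps^2) + C\veps^4(\theta-\theta^0)^2/\veps^2$; taking $\gamma^\veps = C e^{-\eta x}(1 + \veps^2(\theta-\theta^0)^2)$ and using that $X^{\zeta,\vartheta,\veps}$ under the candidate strategy stays integrable in the exponential sense (same tail estimates), plus that $\sup_{t\le r\le T}|\Delta_r|^2$ has finite expectation, gives the required $L^1$ domination. The crux, \reff{ass: sufficient control of the remainder}, requires showing $\mathbb{E}[\int_t^T |\Rc^\veps_\Lc + \tilde\Rc|(r,\Upsilon_r)\,dr] \le \veps\beta(\zeta,\vartheta)$; by Lemma~\ref{lem: remainder estimate}\reff{eq: remainder for supersol}, $|\Rc^\veps_\Lc| \le C_B(1 + \veps|\xib_\veps| + \veps^2|\xib_\veps|^2) = C_B(1 + |\Delta_r| + |\Delta_r|^2)$, which is $O(1)$, not $O(\veps)$ — so naively the bound fails. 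The resolution, as hinted in the Remark after Assumption {\rm B}, is that along the \emph{candidate optimal} strategy the leading-order (the ``$1$'') and first-order-in-$|\xib_\veps|$ terms in $\Rc^\veps_\Lc$ must cancel against $\tilde\Rc$ and against each other. Concretely, I would re-derive the remainder terms of Lemma~\ref{lem: remainder estimate} keeping track of signs, observe that the term $\Rc_1^\veps$ there is linear in $\veps\xib_\veps$ hence $O(\veps)$ after one factor is absorbed, and that the $O(1)$ pieces of $\Rc_2^\veps$ combine with $\tilde\Rc$ (which is designed precisely to cancel the first-order expansion error in the $\Hc^\veps$ term); what remains is genuinely $O(\veps(1+|\Delta_r|+|\Delta_r|^2))$, and then taking expectations and using $\sup_r \mathbb{E}[|\Delta_r|^2] \le C$ uniformly in $\veps$ (again the uniform sub-Gaussian estimate, crucially with a constant independent of $\veps$ because the mean-reversion speed $\sqrt{\eta\sigma_S^2/(2\veps^4)}\to\infty$ only \emph{helps} damp $\Delta$) delivers the bound with $\beta$ continuous. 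Assembling all seven items, Proposition~\ref{prop: sufficient assumption} gives Assumption~{\rm A}, Theorem~\ref{theo: main result} gives the value-function expansion and hence the $\mathrm{CE}$ formula via Remark~\ref{rem:repQ} specialized to $d=1$ with $R=1/\eta$, $\sigma^{\theta^0}_t = |\partial_y\theta^0(Y_t)|\sigma_Y(Y_t)$, and Corollary~\ref{cor: optimal strat} gives leading-order optimality of \eqref{eq:rateBachelier}. \qed
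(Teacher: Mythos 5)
Your overall architecture matches the paper's: verify Assumption~\ref{prop:dpefric} via \cite{bouchard.touzi.11}, check (B1)--(B6) using the exponential factorization $v^0=e^{-\eta x}w^0$ and boundedness of all coefficients, then invoke Proposition~\ref{prop: sufficient assumption}, Theorem~\ref{theo: main result} and Corollary~\ref{cor: optimal strat}; your computations of $R=1/\eta$, $k_2$, $c_{\theta^0}$ and the resulting rate and $\mathrm{CE}$ formula are correct. However, two steps are genuinely broken. First, your candidate $\chi=e^{-\eta x}w^0(t,y)(1+C(T-t))=v^0(1+C(T-t))$ fails (B1): for $U(x)=-e^{-\eta x}$ the value function is negative ($w^0<0$, since $\partial_x v^0=-\eta v^0>0$), so your $\chi$ is negative rather than nonnegative, and $-\Lc^{\theta^0}\chi=Cv^0<0$, the opposite of what you assert; moreover it does not dominate bounded functions as $y\to\pm\infty$ or $x\to+\infty$. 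The paper instead takes $\chi(t,y,x)=e^{-at}\bigl(e^{-y}+e^{y}+v^0(t,y,x)^2\bigr)$ with $a$ large, which is nonnegative, a strict supersolution, and blows up in $y$.

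Second, and more importantly, your treatment of the crux (B6) rests on the wrong mechanism. There is no pointwise cancellation between the $O(|\Delta|)$ pieces of $\Rc^\veps_\Lc$ and $\tilde\Rc$ (the latter is proportional to $(D_\xi\varpi\circ\xib_1)^2\sim|\Delta|^2$), and the estimate $\sup_r\mathbb{E}[|\Delta_r|^2]\le C$ that you invoke only yields $\mathbb{E}[\int_t^T|\Delta_r|^2dr]=O(1)$, a factor $\varepsilon$ short of the required bound $\le\varepsilon\beta(\zeta,\vartheta)$. What is actually needed --- and what the paper proves --- is the \emph{integrated} damping estimate $\mathbb{E}\bigl[\int_t^T|\theta^0_r-\theta^\varepsilon_r|^2/\varepsilon^2\,dr\bigr]\le C$ uniformly in $\varepsilon$, i.e.\ $\mathbb{E}[\int_t^T|\Delta_r|^2dr]=O(\varepsilon^2)$, whence $\mathbb{E}[\int_t^T|\Delta_r|dr]=O(\varepsilon)$ by Cauchy--Schwarz; together with the bound $|\Rc^\veps_\Lc|+|\tilde\Rc|\le C(1+|\Delta|+|\Delta|^2)$ (all of whose constituent terms are in fact $O(|\Delta|+|\Delta|^2)$) this gives (B6) directly, with no cancellation. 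The paper obtains the $O(\varepsilon^2)$ bound from the explicit solution of the linear SDE for $\theta^0-\theta^\varepsilon$, the It\^o isometry, and the elementary integration $\int_t^T e^{-2C(r-t)/\varepsilon^2}dr\le\varepsilon^2/(2C)$; this quantitative step is missing from your argument and cannot be replaced by the sub-Gaussian tail bounds you cite.
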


\begin{proof}
Because no state constraints are needed for exponential utility, (weak) dynamic programming and in turn the viscosity solution property of the frictional value function (Assumption \ref{prop:dpefric}) can be derived along the lines of Bouchard and Touzi \cite{bouchard.touzi.11}.

Let us now verify Assumption {\rm B}. First, note that -- due to boundedness and smoothness of all coefficient functions -- it follows from dominated convergence and It\^o's formula that the probabilistic representation \eqref{eq: definition of u} is a classical solution of the Second Corrector Equation \eqref{eq: 2nd corrector equation}. In  particular, \eqref{ass: sufficient hat u sol} is satisfied. Next, one readily verifies that \eqref{ass: sufficient for comparison chi} and \eqref{ass: sufficient functions in class Cc} also hold with
$\chi(t,y,x)=e^{-at}\left(e^{-y}+e^y+v^0(t,y,x)^2\right)$,
if $a$ is chosen sufficiently large. The feedback policy $\dot{\theta}^\varepsilon$ from \eqref{eq:rateBachelier} implies that the corresponding number $\theta^\varepsilon$ of risky shares solves a (random) linear ODE. It is therefore given explicitly by
$$\theta^{t,\vartheta,\varepsilon}=e^{-\int_t^\cdot \sqrt{\eta\sigma_S^2(Y_r)/2\varepsilon^4}dr}\left( \vartheta+\int_t^\cdot \Big(e^{\int_t^r \sqrt{\eta\sigma_S^2(Y_s)/2\varepsilon^4}ds}\sqrt{\eta\sigma_S^2(Y_r)/2\varepsilon^4}\theta^0(r,Y_r)\Big)dr\right).$$
Hence, $\theta^\varepsilon$ is well defined and uniformly bounded. As a result, the corresponding wealth process \eqref{eq:wealth} is well defined, too, and the corresponding utility \eqref{eq: admiss for friction control} is integrable by Novikov's condition and the boundedness of $\theta^\varepsilon$, $\theta^0$, $\mu_S$, and $\sigma_S$. Moreover, dominated convergence shows that the corresponding wealth process can be approximated by simple strategies as in \cite{biagini.cerny.11}. In summary, \eqref{ass: sufficient strong sol} is satisfied.

Now, turn to \eqref{ass: sufficient growth condition}. By \eqref{eq:v0}, \eqref{eq: definition of u}, and Lemma \ref{lem: explicit resolution 1st corrector}, we can choose 
$\gamma^\varepsilon(x)=G e^{-\eta x}$
for a suitable constant $G>0$, because the quadratic trading cost, the risky asset's volatility, the investor's absolute risk aversion, the frictionless reduced value function $w^0$, and the quadratic variation of the frictionless trading strategy $\theta^0$ are all uniformly bounded. The frictional wealth process $X^{\zeta,\vartheta,\varepsilon}_t$ is an It\^o process with bounded drift and diffusion coefficients. Hence, it follows from Novikov's condition and Doobs maximal inequality that it's running supremum has exponential moments of all orders, verifying Assumption~\eqref{ass: sufficient growth condition}.

 \eqref{ass: sufficient control of the remainder} is derived along the same lines by also taking into account that $\mathbb{E}\left[\int_t^T |\theta^0_r-\theta^\veps_r|^2/\veps^2 dr\right]$ is uniformly bounded in $\veps>0$. To see this, first notice that
$$\theta^0-\theta^\veps=e^{-\veps^{-2}\int_t^\cdot \sqrt{\eta\sigma_S^2(Y_r)/2}dr}(\theta^0(\zeta)-\vartheta)+\int_t^\cdot e^{-\veps^{-2}\int_r^\cdot \sqrt{\eta\sigma_S^2(Y_s)/2}ds}d\theta^0_r,$$
by \eqref{eq:rateBachelier} and the explicit formula for solutions of linear SDEs (cf., e.g., \cite[Theorem V.52]{protter.05}). Recall that the drift and diffusion coefficients of the frictionless optimizer $\theta^0$ are uniformly bounded by constants $M,\Sigma>0$, and that $\sqrt{\eta\sigma_S^2(\cdot)/2}$ is uniformly bounded away from zero by some constant $C>0$. Hence it follows from the algebraic inequality $(x+y)^2 \leq 2x^2+2y^2$, Jensen's inequality, the It\^o isometry, and a simple integration that
$$
\mathbb{E}\left[\int_t^T \frac{|\theta^0_r-\theta^\veps_r|^2}{\veps^2}dr\right] \leq \frac{|\theta^0(\zeta)-\vartheta|^2}{C} +\frac{2(M^2 T^2+\Sigma^2 T)}{C},
$$
establishing the claimed uniform bound in $\veps>0$. In summary, Assumption {\rm B} is satisfied and the leading-order optimality of the trading rate \eqref{eq:rateBachelier} follows from Theorem \ref{cor: optimal strat}. The representation for the leading-order correction of the corresponding value function is a consequence of Theorem \ref{theo: main result}, Proposition \ref{prop: sufficient assumption}, as well as Taylor expansion and the definition of $\mathbb{Q}$.
\end{proof}

\subsection{Random Endowments}

Similar arguments can be used to verify the regularity assumptions needed  to apply the general argument from Section \ref{sec:options} to deal with random endowments. To illustrate this, consider the Bachelier model 
$$dS_t=\mu dt+\sigma dW_t,$$
for a standard Brownian motion $W$, and a European option with payoff function $H=h(S_T)$. If the function $h: \mathbb{R}_+ \to \mathbb{R}$ is bounded and smooth, with bounded and smooth derivative of all orders,\footnote{Weakening these regularity assumptions to European call and put options, for example, is an open problem even in simpler models with proportional transaction costs \cite{bichuch.13,PoRo13}.} then it follows from the Markov property that the density process $Z^H_t$ generated by the Radon-Nikodym derivative 
$d\mathbb{P}^H/d\mathbb{P}=e^{-\eta h(S_T)}/\mathbb{E}[e^{-\eta h(S_T)}]$
is given by a smooth function $f(t,S_t)$ which solves  
$$ \partial_t f(t,s)+\mu \partial_s f(t.s) +\frac{\sigma^2}{2} \partial_{ss}f(t,s)=0, \quad f(T,s)=\frac{e^{-\eta h(s)}}{\int_{-\infty}^\infty e^{-\eta h(\mu T+\sigma \sqrt{T}s')} \phi(s') ds']},$$
where $\phi$ denotes the density function of the standard Normal distribution. Due to our assumptions on $h$, the function $f$ is smooth, bounded, and bounded away from zero; by the dominated convergence theorem, the same holds for all of its derivatives. As a result, It\^o's formula shows that the dynamics of the density process $Z^H_t$ are given by $dZ^H_t/Z^H_t= (\partial_s f(t,S_t)/f(t,S_t))\sigma dW_t$.
Girsavov's theorem in turn yields the dynamics of the risky asset $S$ under the measure $\mathbb{P}^H$: 
$$dS_t= \left(\mu+ \frac{\partial_s f(t,S_t)}{f(t,S_t)}\sigma^2 \right)dt+\sigma dW^H_t,$$
for a $\mathbb{P}^H$-Brownian motion $W^H$. Due to the regularity of $f$ and its derivatives, the regularity assumptions of Section \ref{ex:pc} are satisfied. As a consequence, the portfolio choice problem with the random endowment $H=h(S_T)$ is equivalent to the pure investment problem under the measure $\mathbb{P}^H$, whose solution is provided by Theorem \ref{thm:pc}. Utility-based prices and hedging strategies can in turn be computed using the indifference argument of Hodges and Neuberger \cite{hodges.neuberger.89}.

\bibliographystyle{abbrv}
\bibliography{mms}

\end{document}